\documentclass[onecolumn,amsmath,amssymb,nofootinbib,superscriptaddress]{revtex4-2}

\usepackage[utf8]{inputenc}
\usepackage[margin=1in]{geometry}
\usepackage{amsmath, amsthm, amssymb,amscd, mathrsfs, amsfonts, mathtools,tikz-cd,pgfplots}
\usepackage{appendix}
\usepackage{graphicx}
\usepackage{relsize}
\usepackage{mathtools}
\usepackage{dsfont}
\usepackage{float}
\usepackage{soul}
\usepackage{quantikz}
\usepackage{afterpage}
\usetikzlibrary{shapes, backgrounds}

\tikzset{
  qaoaBlock/.style={
    draw,
    rounded corners=2pt,
    minimum width=1.4cm,
    minimum height=0.8cm,
    align=center,
    font=\small,
  },
  problem/.style={qaoaBlock, fill=green!30},
  mixer/.style={qaoaBlock, fill=blue!30},
}
\usetikzlibrary{decorations.pathmorphing}
\usetikzlibrary{arrows.meta}
\usetikzlibrary{patterns}
\usepackage{chngcntr}
\usepackage[bookmarks=true,
bookmarksnumbered=true,
breaklinks=true,
pdfstartview=FitH,
hyperfigures=false,
plainpages=false,
naturalnames=true,
colorlinks=true,
linkcolor=blue,       
citecolor=blue,       
urlcolor=blue,        
pagebackref=true,
pdfpagelabels]{hyperref}
\usepackage{ORCIDinREVTeX}
\theoremstyle{definition}
\newtheorem{thm}{Theorem}[section]
\newtheorem{prop}[thm]{Proposition}
\newtheorem{lem}[thm]{Lemma}

\newtheorem{obs}{Observation}
\newtheorem{defn}[thm]{Definition}
\newtheorem{cor}[thm]{Corollary}
\newtheorem{rmk}[thm]{Remark}
\newtheorem{conj}[thm]{Conjecture}
\newtheorem{ex}[thm]{Example}

\begin{document}
\title{Reductions of QAOA Induced by Classical Symmetries: Theoretical Insights and Practical Implications }
\author{Boris Tsvelikhovskiy} 
\orcid{0000-0003-0798-7218}
\email{borist@ucr.edu}
\affiliation{Department of Mathematics, University of California, Riverside, CA 92521, USA} 

\author{Bao Bach} 
\affiliation{Department of Computer and Information Sciences, University of Delaware, Newark, DE 19716, USA} 

\author{Jose Falla}
\orcid{0000-0001-9918-2198}
\affiliation{Department of Physics and Astronomy, University of Delaware, Newark, DE 19716, USA} 

\author{Ilya Safro} 
\orcid{0000-0001-6284-7408}
\email{isafro@udel.edu}
\affiliation{Department of Computer and Information Sciences, University of Delaware, Newark, DE 19716, USA} 
\affiliation{Department of Physics and Astronomy, University of Delaware, Newark, DE 19716, USA}


\begin{abstract}


The performance of the Quantum Approximate Optimization Algorithm (QAOA) is closely tied to the structure of the dynamical Lie algebra (DLA) generated by its Hamiltonians, which determines both its expressivity and trainability. In this work, we show that classical symmetries can be systematically exploited as a design principle for QAOA. Focusing on the MaxCut problem with global bit-flip symmetry, we analyze reduced QAOA instances obtained by fixing a single variable and study how this choice affects the associated DLAs. We show that the structure of the DLAs
can change dramatically depending on which variable is held fixed.
In particular, we construct explicit examples where the dimension
collapses from exponential to quadratic, uncovering phenomena
that do not appear in the original formulation.
 
Numerical experiments on asymmetric graphs indicate that such reductions often produce DLAs of much smaller dimension, suggesting improved trainability. We also prove that any graph can be embedded into a slightly larger one (requiring only quadratic overhead) 
such that the standard reduced DLA coincides with the free reduced DLA, in most cases implying exponential dimension and irreducibility on the Hilbert space for reduced QAOA instances. These results establish symmetry-aware reduction as a principled tool for designing expressive and potentially trainable QAOA circuits.
\end{abstract}

\maketitle

\section{Introduction}

Variational quantum algorithms (VQAs) \cite{CABB}, which belong to the class of hybrid quantum--classical methods, are widely viewed as promising candidates for demonstrating quantum advantage in areas such as combinatorial optimization and machine learning across a broad range of applications~\cite{herman2023quantum,outeiral2021prospects,joseph2023quantum}. VQAs employ parameterized quantum circuits whose variational parameters are iteratively updated using classical optimization routines. A prominent example, which we focus on in this paper, is the \emph{Quantum Approximate Optimization Algorithm} (QAOA)~\cite{QAOA}, a variational method specifically designed to solve combinatorial optimization problems. In the context of  optimization problems on graphs, QAOA has been applied to many NP-hard problems, including MaxCut~\cite{QAOA}, community detection~\cite{shaydulin2019network}, and graph partitioning~\cite{ushijima2021multilevel}. These problems are typically formulated by mapping them onto classical spin-glass Hamiltonians (e.g., Ising models) and seeking to minimize the corresponding energy, which is itself an NP-hard optimization task. The potential realization of practical quantum advantage using VQAs ultimately depends on multiple factors, including qubit quality, quantum gate and circuit fidelity, achievable circuit depth, hardware connectivity, and the effectiveness of the classical optimization procedures used to identify high-quality variational parameters, to mention just a few.


Specifically, the key idea underlying QAOA is to encode a classical objective function into a Hermitian operator, the \emph{problem Hamiltonian} $H_P$, such that its ground state corresponds to an optimal solution of the original problem. The algorithm alternates between the evolution generated by $H_P$ and that of a second operator, the \emph{mixer Hamiltonian} $H_M$, which induces transitions between computational basis states. These transitions are essential for exploring the solution space and avoiding localization in suboptimal configurations. 
Despite its conceptual simplicity, 
QAOA faces significant practical challenges that hinder its scalability. Similar to many other variational algorithms, a major bottleneck is the classical optimization of the variational parameters, which becomes increasingly difficult as the circuit depth $p$ grows. The optimization landscape can be highly nonconvex, leading to substantial classical overhead that may diminish the potential quantum advantage~\cite{ZWCHL} if not completely cancel it (as even theoretically it is not expected to be exponential). An especially problematic phenomenon is the appearance of \emph{barren plateaus}, where the variance of the cost-function gradients decreases exponentially with system size, effectively stalling parameter training~\cite{mcclean2018barren,LTWS}.

A number of approaches have been proposed to mitigate this phenomenon and thereby accelerate training. One direction focuses on improved parameter initialization strategies. For example, the Beinit framework~\cite{beta} addresses barren plateaus by initializing variational parameters using a data-dependent beta distribution, whose hyperparameters are inferred from the input data, and further introduces controlled perturbations during gradient descent to prevent convergence to flat regions of the loss landscape. Complementary to initialization-based methods, parameter transferability has been shown to significantly reduce optimization complexity in QAOA. In particular, several approaches~\cite{galda2023similarity,brandao2018fixed,falla2024graph} demonstrate that optimal QAOA parameters cluster around specific values determined by local graph properties, enabling near-optimal parameters learned on smaller or structurally similar graphs to be reused for larger instances, thereby bypassing regions associated with barren plateaus. Additional strategies include using reinforcement learning based initializations for deep circuits \cite{peng2025breaking}, introducing properly engineered Markovian dissipation after each variational circuit layer \cite{sannia2024engineered} in order to mitigate barren plateaus and restore efficient trainability and many others. 
Collectively, these methods indicate that careful control of initialization, circuit structure, and optimization can somewhat help to  alleviate barren plateaus in practice.

A recently developed and powerful approach to analyzing variational quantum algorithms, including QAOA, is through the study of their associated \emph{dynamical Lie algebras} (DLAs). Given a set of Hermitian generators (Hamiltonians), the DLA is defined as the Lie algebra generated under commutation. A central insight of this framework is that the dimension and structure of the DLA are closely tied to fundamental properties of the optimization landscape, including expressivity and the possible onset of barren plateaus~\cite{FHCKYHSP,LJGCC,LCSMCC,RBSKMLC}. 
Beyond QAOA, DLAs play a crucial role in quantum machine learning (QML). In supervised QML, one seeks a parametrized quantum circuit, generated by a finite set of Hamiltonians, that approximates a target function while generalizing to unseen data. In this setting, DLAs provide rigorous criteria for trainability and characterize regimes in which barren plateaus are unavoidable~\cite{GLCCS,LTWS,WHSU}. Moreover, the DLA framework offers valuable tools for assessing classical simulability and identifying efficiently simulable subclasses of variational circuits~\cite{CLG,GLCCS}.

While substantial progress has been made in understanding QAOA from algorithmic and performance perspectives, the detailed structure of the DLAs associated with QAOA remains comparatively underexplored. Two recent works~\cite{ASYZ1,KLFCCZ} analyzed the DLAs arising from QAOA with the standard Pauli-$X$ mixer, but restricted attention to specific families of graphs, such as paths, cycles, and complete graphs, primarily in the context of the MaxCut problem. A notable exception is the recent work~\cite{MYAZ}, which considers Erdős--Rényi random graphs. There it is shown that for $G(n,p)$ with edge probability $p=0.5$, the QAOA-MaxCut DLA is, with probability at least $1 - e^{-\Omega(n)}$, the direct sum of either one or two simple Lie algebras, each of dimension $\Theta(4^n)$ (see Theorem~3 therein).

Different QAOA mixers are also considered in the literature. In particular, a complete characterization of the of the DLAs associated with Grover-mixer QAOA (GM-QAOA) was proposed recently in \cite{TNB}. In~\cite{kordonowy2025lie}, the authors describe the DLAs characterization of the  constrained XY-mixer~\cite{HWORVB}, such as the Hamming-distance mixer. The analysis is still restricted to a specific class of mixer connectivity (all-to-all or ring). Nevertheless, the standard $X$-mixer version of QAOA remains the most widely studied and implemented variant, making a systematic investigation of its associated DLAs both theoretically and practically important.

The primary objective of the present work is to advance this understanding by studying the DLAs associated with standard-mixer QAOA through the lens of symmetry reduction. We focus on optimization problems defined over binary strings of length $n$, a broad class that includes many problems of practical interest. Most such objective functions are invariant under a global bit-flip symmetry, corresponding classically to the simultaneous inversion of all bits. From a classical standpoint, this symmetry allows one to fix the value of a single bit without loss of information, as solutions of the reduced problem can be immediately lifted to solutions of the original one. While exploiting symmetries reduction in QAOA is not a new methodology \cite{tsvelikhovskiy2023symmetries,shaydulin2021classical,bravyi2020obstacles,tsvelikhovskiy2026equivariant}, using them in combination with DLA analysis is a novel and promising direction. 

At first glance, this symmetry reduction appears to offer little advantage beyond a trivial reduction in problem size. One of the main contributions of this paper is to show that, at the quantum level, the situation is considerably richer. We demonstrate that symmetry reduction can lead to nontrivial changes in the structure and dimension of the associated dynamical Lie algebras, thereby affecting the effective Hilbert space explored by QAOA and the resulting algorithmic dynamics. This reveals new structural features of QAOA that are invisible from a purely classical perspective and highlights the importance of symmetry considerations in the analysis of variational quantum algorithms.

\subsection{Main Results}

We consider optimization problems defined on binary strings of length $n$ whose objective functions are invariant under the global bit-flip symmetry. For such problems, we associate $n$ \emph{reduced optimization problems} obtained by fixing the value of a single bit and restricting the objective function to the subset of strings satisfying this constraint. In this way, we obtain $n$ reduced instances, each defined on $n-1$ bits.

The primary object of our study is the family of dynamical Lie algebras
generated by the Hamiltonians defining different variants of the
QAOA associated with these reduced instances of the MaxCut problem.  The \emph{standard dynamical Lie algebra} is generated by the Pauli-$X$ mixer acting on all $n$ qubits together with the problem Hamiltonian corresponding to the original objective function. For each reduced problem, corresponding to fixing one of the bit values, we define the associated \emph{standard reduced dynamical Lie algebra} as the Lie algebra generated by the Pauli-$X$ mixer acting on the remaining $n-1$ qubits and the problem Hamiltonian corresponding to the restricted objective function.

\begin{enumerate}
    \item We establish deterministic sufficient conditions on a graph $\Gamma$ and a chosen vertex $v\in V(\Gamma)$ under which $\mathfrak{g}^v_{\Gamma,\mathrm{std}}$, the standard reduced dynamical Lie algebra associated with $(\Gamma,v)$, coincides with the free reduced dynamical Lie algebra. The key structural criterion is formulated in Theorem~\ref{thm:ParitySeparationImpliesLocalX}, which shows that a suitable separation of vertices by parity profiles along
distance-increasing paths from $v$ implies the containment
\begin{equation}
\label{eq:ERContainment}
\mathfrak{g}^v_{\Gamma,\mathrm{std}}
\;\supseteq\;
\mathfrak{g}_{\Gamma_v,\mathrm{free}}.
\end{equation}
This containment, in turn, yields equality of the standard and free reduced dynamical Lie algebras:
\begin{equation}
\label{eq:EREquality}
\mathfrak{g}^v_{\Gamma,\mathrm{std}}
=
\mathfrak{g}^v_{\Gamma,\mathrm{free}},
\end{equation}

\item For any connected graph $\Gamma$ on $n$ vertices and any choice of vertex $v \in V(\Gamma)$, we present an explicit construction of an extended graph  $\widehat{\Gamma}_v$ with $O(n^2)$ vertices and $O(n^2)$ edges such that the standard and free reduced dynamical Lie algebras associated with the pair $(\widehat{\Gamma}_v,v)$ coincide. 

Crucially, this extension preserves the structure of the underlying optimization problem: optimal MaxCut solutions on the original graph $\Gamma$ are obtained from optimal solutions on $\widehat{\Gamma}_v$ by restriction to the original vertex set (see Theorem~\ref{thm:GraphExtensionEmbedding}). 

\item If the reduced graph $\Gamma_v$ is connected and neither bipartite nor a cycle, we show that the free reduced dynamical Lie algebra 
$\mathfrak{g}^v_{\Gamma,\mathrm{free}}$ is isomorphic to the simple Lie algebra  $\mathfrak{su}(W_v)$, that is, to the full algebra of traceless skew-Hermitian operators acting on the $2^{n-1}$-dimensional reduced Hilbert space $W_v$ (see Corollary~\ref{thm:ReducedDLAIsFullUnitary}).

\item We identify a family of graphs $\mathcal{O}_{m_1,\dots,m_k}$ (see Figure~\ref{fig:OctopusGraph}) for which the standard DLA of the full graph grows exponentially in the number of vertices, while there exists a distinguished vertex $v$  for which the standard reduced DLA has only quadratic growth. This establishes a dramatic separation between the algebraic complexity of the full and reduced systems, illustrating how symmetry reduction at a carefully chosen vertex can substantially alter the DLA structure.

\item We show that the Hilbert space associated with any reduced QAOA instance forms an irreducible representation of the free reduced dynamical Lie algebra $\mathfrak{g}^v_{\Gamma,\mathrm{free}}$.
Consequently, whenever the equality of the standard and free reduced DLAs holds (see \eqref{eq:ReducedDLAEquality}), the same Hilbert space is also an irreducible representation of the standard reduced DLA
$\mathfrak{g}^v_{\Gamma,\mathrm{std}}$.

    \item As another application of our results, we establish a sufficient condition for the DLA containment in \eqref{eq:ERContainment} to hold for connected acyclic graphs
(Theorem~\ref{thm:ReducedDLAsForTrees}).  This condition depends only on the parity-degree profiles of leaf vertices along their unique paths to a distinguished vertex $v$ and can be verified in time linear in the number of vertices (see Remark \ref{rmk:LinearTimeCheckTrees}). In addition, we present an explicit infinite family of acyclic graphs whose standard reduced dynamical Lie algebras have dimensions that grow exponentially in the number of vertices (see Example~\ref{ex:ExpFamilyTrees}).

    
    \item We perform a numerical study of the dimensions of the standard and reduced dynamical Lie algebras for asymmetric graphs on six and seven vertices (see Fig.~\ref{AsymGraphsSixNodes} and Table~\ref{tab:SevenNodesDimensions}). 
Across all instances examined, we observe that symmetry reduction can lead to a substantial decrease in the dimension of the associated dynamical Lie algebra.  These findings motivate the conjecture that for any asymmetric graph, there exists a choice of reduction vertex for which the dimension of the corresponding standard reduced dynamical Lie algebra is strictly smaller than that of the original DLA (Conjecture~\ref{mainConj}).

Determining the exact dimensions of standard and reduced dynamical Lie algebras for QAOA applied to MaxCut on large asymmetric graphs quickly becomes computationally intractable. 
To probe larger systems, we therefore exploit the established relationship between the dimension of the dynamical Lie algebra and the variance of the QAOA loss function. 
As shown in~\cite{RBSKMLC}, the variance is inversely related to the DLA dimension; in particular, an increased variance is indicative of a smaller dynamical Lie algebra.

Using this proxy, we compute the loss-function variance for both standard and reduced QAOA circuits on a collection of connected asymmetric graphs with $11$--$15$ vertices. 
In all cases studied, the reduced QAOA exhibits a consistently higher variance than the corresponding unreduced instance, providing further numerical support for Conjecture~\ref{mainConj}.

Finally, we observe that artificially attaching a leaf to an asymmetric graph that initially has no leaves often leads to a further reduction in the effective DLA dimension. 
This effect manifests as an increase in the variance of the gradient, despite the concomitant increase in the dimension of the underlying Hilbert space, suggesting that local structural modifications can have a pronounced impact on the expressivity of the associated QAOA dynamics.
    
    \item We show that the behavior of standard and reduced dynamical Lie algebras for Grover-mixer QAOA is markedly different and fully predictable. In this case, both the original and all reduced dynamical Lie algebras are isomorphic to
    \(
        \mathfrak{su}(r) \oplus \mathfrak{u}(1) \oplus \mathfrak{u}(1),
    \)
    where $r$ denotes the number of distinct values attained by the objective function.
    \item Finally, we study the dimensions of standard and reduced dynamical Lie algebras for several specific families of graphs. For star graphs $K_{1,n}$, we show that the standard reduced dynamical Lie algebra corresponding to the central vertex is isomorphic to the $3$-dimensional Lie algebra $\mathfrak{su}(2)$ for all $n$. In contrast, numerical computations for $n \leq 11$  indicate that the dimension of the corresponding standard (unreduced) dynamical Lie algebra grows with $n$. 

    For cycle graphs $C_n$, as well as for the boundary vertices of path graphs $P_n$ with $n \geq 5$, we find that the dimensions of the corresponding standard reduced dynamical Lie algebras exceed those of the standard dynamical Lie algebras.
\end{enumerate}

To summarize, we show that, under explicit and efficiently verifiable graph-theoretic conditions, the standard reduced dynamical Lie algebra matches the free reduced DLA. These conditions depend on structural properties of parity profiles along distance-increasing paths from the chosen reduction vertex. Whenever they are satisfied, the reduced QAOA dynamics generated by the standard ansatz are already as expressive as those generated by an unconstrained (multi-angle) ansatz on the reduced Hilbert space. 

We further show that this phenomenon is not exceptional: for any connected graph and any choice of reduction vertex, one can explicitly construct an extended graph with only quadratic overhead in vertices for which the standard and free reduced dynamical Lie algebras coincide. This extension preserves optimal MaxCut solutions under restriction, demonstrating that maximal dynamical expressivity can always be enforced without altering the underlying optimization landscape. In parallel, our numerical experiments on asymmetric graphs reveal a complementary effect: symmetry reduction can substantially decrease DLA dimension, suggesting the possibility of a refined control over the expressive capacity of QAOA via the choice of appropriate reduction. Furthermore, we exhibit a class of graphs where the standard DLA dimension grows exponentially with the number of vertices, while a single-vertex reduction produces a standard reduced DLA of merely quadratic dimension, demonstrating the significant potential for dynamical simplification. This dual behavior highlights symmetry reduction as a flexible design tool whose dynamical consequences can be predicted from graph structure.

From a broader perspective, our results identify dynamical Lie algebras as a precise structural lens for understanding symmetry reduction in variational quantum algorithms. This viewpoint yields direct algorithmic consequences. Since DLA dimension is tightly linked to the variance of the QAOA loss landscape, our analysis predicts when symmetry reduction should mitigate barren plateaus and improve trainability. Extensive numerical experiments on asymmetric graphs confirm this prediction: vertex reductions that lower the DLA dimension consistently lead to  higher gradient variance at depth, without degrading solution quality. This enables a practical workflow in which variance-based diagnostics identify advantageous reductions even when exact DLA computation is intractable, providing a principled, symmetry-aware preprocessing strategy for QAOA variations.  

\subsection{Structure of the Paper}

The paper is organized as follows.

In Section~\ref{sec:PreprocessingReducedQAOA}, we begin by setting up classical preprocessing techniques for general combinatorial optimization problems and recalling the standard formulation of the Quantum Approximate Optimization Algorithm. We then introduce reduced formulations of QAOA in the specific context of the MaxCut problem, which serve as the foundation for the subsequent analysis.

Section~\ref{sec:StandardReducedDLA} is devoted to the study of dynamical Lie algebras associated with both standard and multi-angle QAOA. We introduce the corresponding reduced DLAs and discuss their basic structural properties.

In Section~\ref{sec:DistinguishedElts}, we construct a collection of distinguished elements in the reduced DLAs, consisting of uniform sums of Pauli-$X$ operators supported on certain explicit subsets of vertices (see Theorems~\ref{thm:DistDLAElements}, and \ref{thm:CoolDLAElements} for precise statements). 
We show that these elements belong to the standard reduced DLA and establish sufficient conditions under which the standard reduced DLA contains the free DLA associated with the reduced graph. 
In addition, for any connected graph and any choice of reduction vertex, we construct an explicit extension with $O(n^2)$ vertices and $O(n^2)$ edges such that, for the resulting reduced problem, the standard and free reduced dynamical Lie algebras coincide.  The main results of this section are Theorems~\ref{thm:GraphExtensionEmbedding} and \ref{thm:ParitySeparationImpliesLocalX}.


Section~\ref{sec:DLASonSomeGraphs} contains a collection of explicit results and comparisons for standard and reduced DLAs associated with specific graph families, including acyclic graphs, paths, cycles, and star graphs. We also present numerical computations of DLA dimensions for asymmetric graphs on $6$ and $7$ vertices, as well as numerical estimates for selected asymmetric graphs on $11$--$15$ vertices.

Sections~\ref{sec:ReducedGMQAOA} and~\ref{sec:HilbSpaces} are devoted to a complete description of reduced DLAs for Grover-mixer QAOA and to the action of reduced DLAs on reduced Hilbert spaces, respectively. A rigorous treatment of the Hilbert space decomposition is deferred to Appendix~\ref{sec:ProofsHilbSpaceDecomp}.

Finally, in Section~\ref{sec:ConnectionLiterature}, we place our results in the context of existing literature and outline several directions for future research.

Detailed proofs and technical arguments that would otherwise interrupt the main exposition are collected in Appendix~\ref{sec:Technicalities}.

\subsection{Notation}

The following notation is used throughout the paper:

\begin{itemize}
\item \( \mathcal{N}_{v,j} \): the set of vertices at distance $j$ from \( v \);
\item \(
\mathcal{C}^v_{j,\mathbf{a}} \subseteq \mathcal{N}_{v,j}
\):
the subset of vertices $w \in \mathcal{N}_{v,j}$ for which there exists a path
\[
w_0 = v - w_1 - \cdots - w_j = w
\]
such that $\deg(w_s)\equiv a_s \pmod{2}$ for each $1\le s\le j$ for  a given binary sequence $\mathbf{a}:=(a_1,a_2,\dots,a_j)\in\mathbb{Z}_{2}^j$;

    \item $\Gamma_v$ is the graph obtained from graph $\Gamma$ by removing the vertex $v$. It is the graph with vertex and edge sets given by
\[
V(\Gamma_v) = V(\Gamma) \setminus \{v\}, 
\qquad 
E(\Gamma_v) = E(\Gamma) \setminus \{ (v,w) \mid w \in \mathcal{N}_v \},
\]

    
    \item \( \mathfrak{g}_{\Gamma, \text{free}} \): the free dynamical Lie algebra, generated by the elements 
    \[
    \left\{ iX_w \mid w \in V(\Gamma) \right\} \quad \text{and} \quad \left\{ iZ_w Z_{w'} \mid (ww') \in E(\Gamma) \right\};
    \]
    
    \item \( \mathfrak{g}_{\Gamma, \text{std}} \): the standard dynamical Lie algebra, generated by the elements 
    \[
     X:=i\sum\limits_{w \in V} X_w \quad \text{and} \quad Z:=i\sum\limits_{(ww') \in E} Z_w Z_w';
    \]

     \item \( \mathfrak{g}^v_{\Gamma, \text{free}} \): the free reduced Lie algebra, generated by the elements 
    \[
     \{ i X_w \mid w \in V(\Gamma_v) \}, \quad
  \{ i Z_w Z_w' \mid (w w') \in E,\; w,w' \neq v \}, \quad
  \{ i Z_w \mid (v w) \in E \};
    \]
    
    \item \( \mathfrak{g}^v_{\Gamma, \text{std}} \): the standard reduced dynamical Lie algebra, generated by the elements 
    \[
    \mathcal{X}_{\widehat{v}} := i\sum\limits_{w \in V(\Gamma_v)} X_i \quad \text{and} \quad  \mathcal{Z}_{\widehat{v}} :=  i\sum\limits_{(w,w') \in E(\Gamma_v) } Z_w Z_{w'} 
    + i\sum\limits_{(v,w) \in E(V(\Gamma))} Z_w.
    \]

\end{itemize}

\section{Classical Preprocessing and Reduced QAOA}
\label{sec:PreprocessingReducedQAOA}

Let $\mathbb{B}^n := \{0,1\}^n$ denote the set of all binary strings of length $n$, and let $\mathcal{S}_{2^n}$ denote the symmetric group acting on the $2^n$ elements of $\mathbb{B}^n$.
  A significant class of optimization problems entails finding  elements in $\mathbb{B}^n$ on which a given function 
\begin{equation}\label{eq:optfunc}
F: \mathbb{B}^n \rightarrow \mathbb{R}
\end{equation}
(i.e., the optimization objective with possible constraints) is either minimized or maximized.

If the function $F(x)$ is invariant with respect to a permutation $g \in \mathcal{S}_{2^n}$, i.e., $F(g^{-1}(x)) = F(x)$ for any $x \in \mathbb{B}^n$, then $g$ is called a \textit{symmetry} of $F$. Let   $G \subseteq \mathcal{S}_{2^n}$ be a subgroup of symmetries of $F$. Then the set $\mathbb{B}^n$ can be expressed as a disjoint union of $G$-orbits: 
$$\mathbb{B}^n = \bigsqcup_{j=1}^{m} \mathcal{O}_j.$$ 

Rather than working directly with the original combinatorial optimization problem on $\mathbb{B}^n$, we may instead pass to the quotient induced by the group action.
Specifically, consider the quotient set 
\begin{equation}
\label{eq:B_G}
     \mathbb{B}_G := \mathbb{B}^n / G,
\end{equation}
  
where two elements of $\mathbb{B}^n$ are equivalent if and only if they belong to the same $G$-orbit. 
Since the objective function $F$ is constant along $G$-orbits, 
we obtain a well-defined function 
\begin{equation}
    \label{eq: redfunction}
     \widetilde{F} : \mathbb{B}_G \to \mathbb{R}, \qquad 
   \widetilde{F}(\mathcal{O}_j) := F(x), 
\end{equation}
where $x\in \mathcal{O}_j$ is any representative of the equivalence class. 
This yields the reduced combinatorial optimization problem of minimizing $\widetilde{F}$ on $\mathbb{B}_G$. 

For clarity of exposition, we will henceforth assume that the action of the group $G$ on $\mathbb{B}^n$ is \emph{free}, that is, no non-identity element of $G$ fixes any point of $\mathbb{B}^n$.
In this case, all $G$-orbits have cardinality equal to $|G|$, 
and therefore the quotient has size
\[
   |\mathbb{B}_G| = \frac{2^n}{|G|}.
\]

\begin{rmk}
The solutions of the original problem, i.e., the minima (maxima) of $F$ on $\mathbb{B}^n$, 
are in $|G|$-to-$1$ correspondence with the solutions of the reduced problem, 
namely the minima (maxima)  of $\widetilde{F}$ on $\mathbb{B}_G$. 
Equivalently, the solutions of the original problem are precisely the $G$-orbits 
of the solutions of the reduced problem.
\end{rmk}

\begin{ex}
    Let \( \Gamma  \) be an undirected graph with vertex set \( V = \{1, 2, \ldots, n\} \). The \emph{MaxCut} problem asks for a partition of \(V\) into two disjoint
sets such that the number of edges crossing the partition is maximized.

We encode a cut by a binary string
\(
(x_1,\ldots,x_n)\in\mathbb{B}^n,
\)
where \(x_i=0\) or \(1\) indicates the side of the cut to which vertex \(i\) belongs. An edge \((i,j)\in E\) contributes \(1\) to the cut if and only if its endpoints lie on opposite sides with the corresponding objective function is
\begin{equation}
    \label{eq: maxcutobjfunction}
     F(x_1, \ldots, x_n) = \sum\limits_{(i,j)\in E} \left[(1 - x_i)x_j + x_i(1 - x_j)\right],
\end{equation}
which evaluates to \(1\) precisely when \(x_i \neq x_j\), and to \(0\) otherwise. Thus, \(F(x_1,\ldots,x_n)\) counts the total number of edges crossing the cut defined
by \((x_1,\ldots,x_n)\). The goal of the MaxCut problem is therefore to find a binary string
\((x_1,\ldots,x_n)\in\mathbb{B}^n\) that maximizes the objective function. The function \(F\) is invariant under the natural \(\mathbb{Z}_2\)-action that flips all bits simultaneously,
\begin{equation}
(x_1,\ldots,x_n)\longmapsto (1-x_1,\ldots,1-x_n),
    \label{eq:Z_2action}
\end{equation}
To eliminate this global symmetry, we fix one vertex
\(s\in\{1,\ldots,n\}\) and impose the condition \(x_s=0\).
This selects a unique representative from each \(\mathbb{Z}_2\)-orbit.
For any representative
\[
(x_1,\ldots,x_{s-1},0,x_{s+1},\ldots,x_n)\in\mathbb{B}_G,
\]
the reduced objective function becomes
\begin{equation}
    \label{eq:maxcutredobjfunction}
    \widetilde{F}(x_1, \ldots, x_{s-1}, 0, x_{s+1}, \ldots, x_n) 
    = \sum\limits_{\substack{(i,j)\in E \\ i,j \neq s}} \Big[(1 - x_i)x_j + x_i(1 - x_j)\Big] 
    \;+\; \sum\limits_{(s,j)\in E} x_j.
\end{equation}
\end{ex}

\subsection{QAOA for the Original Problem}
Variational Quantum Algorithms (VQAs) form a prominent class of hybrid quantum-classical algorithms designed to address problems in quantum chemistry, machine learning, and combinatorial optimization (see \cite{CABB} for a review). 
In a VQA, a parameterized quantum circuit is optimized by a classical algorithm to minimize (or maximize) a loss function, often chosen as the expectation value of a Hamiltonian operator. 

One of the most influential VQAs is the Quantum Approximate Optimization Algorithm (QAOA), first introduced in \cite{QAOA}. Below we provide an overview in a language tailored to the needs of this paper. For a more comprehensive treatment, the reader is referred to \cite{BBCC, QAOA, HWORVB}.

Let $W$ be a vector space of dimension $2^n$ with computational basis $\{\ket{x}\}_{x\in \mathbb{B}^n}$, indexed by elements of $\mathbb{B}^n$. A Hamiltonian $H_P$ is said to \emph{represent} a function $F:\mathbb{B}^n \to \mathbb{R}$ if
\begin{equation}
    \label{eq:problemHamiltonian}
    H_P \ket{x} = F(x) \ket{x} \quad \text{for all } x \in \mathbb{B}^n.
\end{equation}

 The quantum formulation of the optimization problem associated with \(F\) is obtained via the following classical-to-quantum correspondence:

\begin{itemize}
	\item Set $\mathbb{B}^n\rightsquigarrow$ Hilbert space $W=(\mathbb{C}^2)^{\otimes n}$;
	\item Objective function $F\rightsquigarrow$ Linear operator $H_P$ (problem Hamiltonian) acting on $W$;
	\item Minima of $F$ on $\mathbb{B}^n\rightsquigarrow$ Lowest energy states  of $H_P$ in $W$;
    \item Action of $G$ on $\mathbb{B}^n$ that preserves $F\rightsquigarrow$ Unitary representation of $G$ on $W$
    that commutes with $H_P$.
\end{itemize}

A central component of QAOA is the so-called \emph{mixer Hamiltonian}, denoted $H_M$. The negative of this operator, $-H_M$, is required to have a unique ground state $\ket{\xi} \in W$. 
The fundamental idea behind QAOA is to gradually deform 
$H_M$ into $H_P$ through a series of quantum transformations, in such a way that the ground state at each stage is mapped to a ground state of the next.

The process begins with the preparation of the ground state $\ket{\xi}$ of the Hamiltonian $-H_M$. 
Subsequently, one applies an alternating sequence of unitary operators generated by the problem Hamiltonian $H_P$ and the mixer Hamiltonian $H_M$. 
Each layer consists of the action of $e^{-i \gamma_j H_P}$ followed by $e^{-i \beta_j H_M}$, where the real parameters 
\[
\boldsymbol{\beta}=(\beta_1,\dots,\beta_p), \qquad \boldsymbol{\gamma}=(\gamma_1,\dots,\gamma_p),
\] 
with $\beta_i \in [0,2\pi)$ and $\gamma_j \in [0,\pi)$ are collections of classically optimized parameters that control the evolution times under $H_M$ and $H_P$, respectively. Here, $p\in \mathbb{N}$ denotes the QAOA depth, i.e., the number of alternating layers. 
The overall transformation operator is
\begin{equation}
U(\boldsymbol{\beta},\boldsymbol{\gamma})
  := e^{-i \beta_p H_M} e^{-i \gamma_p H_P} \cdots e^{-i \beta_1 H_M} e^{-i \gamma_1 H_P}.
\label{qaoa-chain}
\end{equation}
At the end of the protocol, the observable $H_P$ is measured
on the state
\[
\ket{\psi(\boldsymbol{\beta},\boldsymbol{\gamma})} 
   := U(\boldsymbol{\beta},\boldsymbol{\gamma}) \ket{\xi}.
\]
The expectation value of $H_P$ with respect to this state serves as the cost function used to guide the classical optimization of the parameters $(\boldsymbol{\beta}, \boldsymbol{\gamma})$. 
A schematic overview of the algorithm is given in Figure~\ref{QAOAcircuit}.

\begin{center}
\begin{figure}[h]
\begin{tikzpicture}
\node[draw, rounded corners, fill=blue!30, minimum width=6.5cm, minimum height=4cm, inner sep=0.5cm, anchor=center, label=below:{Ansatz circuit}] (blueframe) at (0.2,0) {};

\node[draw, rounded corners, fill=red!30, minimum width=1cm, minimum height=4cm, inner sep=0.5cm, anchor=center] (redframe) at (4.2,0) {};

\node (circ) {
\begin{quantikz} 
\lstick{\ket{0}} & \gategroup[4,steps=1,style={rounded corners,fill=black!20,draw}, label style={yshift=-0.75in}]{\scriptsize$U_\xi$} & & \gategroup[4,steps=2, style={rounded corners,fill=green!20,draw}, label style={yshift=-.75in}]{\scriptsize$U_P(\gamma_1)$} & & & \gategroup[4,steps=2,style={rounded corners,fill=yellow!20,draw}, label style={yshift=-0.75in}]{\scriptsize$U_M(\beta_1)$} & & \cdots&  \gategroup[4,steps=2, style={rounded corners,fill=green!20,draw}, label style={yshift=-0.75in} ]{\scriptsize$U_P(\gamma_p)$} & & & \gategroup[4,steps=2,style={rounded corners,fill=yellow!20,draw}, label style={yshift=-0.75in}]{\scriptsize$ U_M(\beta_p)$} &  & &\meter{} \\ 
 \lstick{\ket{0}} & & & & & & & & \cdots & & & & & & &\meter{}\\
 \lstick{\vdots} \\
 \lstick{\ket{0}} & & & & & & & &  \cdots & & & & & & & \meter{} \end{quantikz}
};
\draw[<-]  (-1.9,-2.2) -- (-1.9,-3);
\draw[-]  (-1.9,-3) -- (4.1,-3) node[midway,below,yshift=-0.1cm,black]{Update of parameters $(\boldsymbol{\beta}, \boldsymbol{\gamma})$};
\draw[-]  (4.1,-2.2) -- (4.1,-3);
\end{tikzpicture}
\caption{\textbf{Schematic illustration of the QAOA circuit.} The initial state preparation unitary $U_\xi$ with $U_{\xi}(\ket{0}^{\otimes n}) = \ket{\xi}$) is followed by $p$ alternating applications of unitaries $U_P(\gamma_j):=e^{-i \gamma_j H_P}$ and $U_M(\beta_j):=e^{-i \beta_j H_M}$, generated by
the problem Hamiltonian $H_P$ and the mixer Hamiltonian $H_M$, respectively. At the end of the circuit, the state is measured in the computational basis. 
Each measurement outcome $x \in \mathbb{B}^n$ is assigned the value $F(x)$ of the objective function, 
and the empirical mean of these values provides an estimate of 
$\bra{\psi(\boldsymbol{\beta},\boldsymbol{\gamma})} H_P \ket{\psi(\boldsymbol{\beta},\boldsymbol{\gamma})}$. 
This estimate is then used in a classical optimization loop to update the parameters $(\boldsymbol{\beta}, \boldsymbol{\gamma})$ with the goal of minimizing the empirical mean.}
\label{QAOAcircuit}
\end{figure}
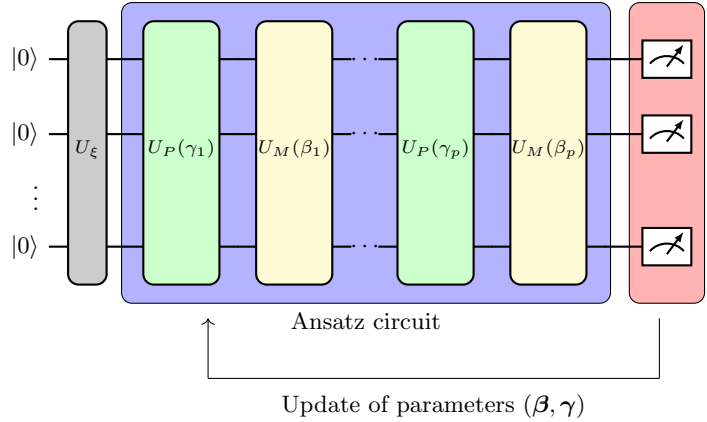
\end{center}

\subsection{Reduced QAOA: Generalities}

The action of $G$ on the set $\mathbb{B}^n$ extends to a linear action on the Hilbert space $W$. Therefore, $W$ is a \textit{representation} of $G$ and can be expressed as a direct sum of isotypic components with respect to this action:
\begin{equation}
\label{generalDecomp}
    W = \bigoplus_i W_i,
\end{equation}
where the index $i$ enumerates nonisomorphic irreducible representations  of $G$ appearing in $W$. In particular, \begin{equation}
    \label{eq:G-invariants}
    W^G:=\{w\in W \mid g\cdot w=w \ \forall g\in G\}
\end{equation}
will denote the trivial subrepresentation (the subspace of $G$-invariants). The actions of $G$ and $H_P$ on $W$ commute: 
\begin{equation}
    \label{eq:problemHamCommutesWithG}
    H_P(g(w)) = g(H_P(w)) \quad \forall w \in W, \quad \forall g \in G.
\end{equation}

\begin{rmk}
  While the dimension of $W^G$ is equal to the number of elements in $\mathbb{B}_G$ (see \eqref{eq:B_G}) there might not be a natural (standard) basis of $W^G$ encoded with the elements for $\mathbb{B}_G$. 

  However, in certain cases (as described in the next section), we can systematically select a representative $x_j \in \mathbb{B}^n$ for each $G$-orbit $\mathcal{O}_j \in \mathbb{B}_G$, allowing us to construct a natural reduced Hilbert space  with the basis $\{\ket{x_j}\}_{\mathcal{O}_j\in \mathbb{B}_G}$.
\end{rmk}

\subsection{Reduced QAOA for Simultaneous Bit Flip Symmetry}
In many optimization problems on the domain of $n$ classical bits, the objective function is invariant under a simultaneous flip of all bits. Exploiting this symmetry, one may consider a \emph{reduced problem} on bit strings of length $n-1$, obtained by fixing the value of one chosen bit to be $0$ or $1$. Each solution to the reduced problem then corresponds to two solutions of the original one (the string itself and its flipped version), so that all solutions to the original problem can be recovered in this way.

A convenient representative of each orbit is chosen by fixing the value of a particular bit $b_j$ with $1 \le j \le n$. 
Recall that the full Hilbert space is
\(
W = (\mathbb{C}^2)^{\otimes n},
\)
with computational basis vectors indexed by bit strings $x \in \{0,1\}^n$.  Fixing the value of the $j^{\text{th}}$ bit to either $0$ or $1$ restricts us to those basis states satisfying $x_j = b_j$.  
Equivalently, the reduced Hilbert space
\[
W_j = (\mathbb{C}^2)^{\otimes (n-1)}
\]
is obtained by contracting the $j^{\text{th}}$ tensor factor of $W$ into the one-dimensional coordinate subspace corresponding to the chosen value, $0$ or $1$. 
Without loss of generality, we assume the fixed bit is $0$.

The reduced problem Hamiltonian $H_{P}^j$ is obtained directly from the original problem Hamiltonian by replacing the Pauli operator $Z_j$ corresponding to the fixed bit with the identity. 
This yields a reduced QAOA defined on $W_j$ with problem Hamiltonian $H_{P}^j$.

\section{From Dynamical to Reduced Dynamical Lie Algebras}
\label{sec:StandardReducedDLA}

Recall that the \emph{standard dynamical Lie algebra} (DLA) associated with a given QAOA instance is defined as the real Lie   algebra generated, under the commutator operation, by the skew-Hermitian operators $iH_M$ and $iH_P$, where $H_M$ and $H_P$ denote the mixer and problem Hamiltonians, respectively. In other words, 
\begin{equation}
    \label{eq: stdDLAdefn}
    \mathfrak{g}_{\mathrm{std}} := \langle\, iH_M,\, iH_P \,\rangle_{\mathrm{Lie}}.
\end{equation}

For comparison, we also consider the corresponding \emph{free dynamical Lie algebra}, defined as the real Lie algebra generated by the individual local terms appearing in the mixer and problem Hamiltonians, $iH_M$ and $iH_P$. Writing
\[
H_M = \sum\limits_{j} M_j,
\qquad 
H_P = \sum\limits_{k} P_k,
\]
as decompositions into local Hermitian operators, the free DLA is given by

\begin{equation}
    \label{eq: freeDLAdefn}
    \mathfrak{g}_{\mathrm{free}} 
:= \big\langle\, \{\, i M_j \,\},\; \{\, i P_k \,\}\big\rangle_{\mathrm{Lie}}.
\end{equation}

We now turn to the central objects of this section: the \emph{reduced dynamical Lie algebras}.  
Given a combinatorial optimization problem (COP) on $n$ binary variables $b_0,\dots,b_{n-1}$, we consider the family of reduced problems obtained by fixing one variable at a time. For each $\ell\in\{0,\dots,n-1\}$, setting $b_\ell=0$ (or equivalently $b_\ell=1$, up to relabeling) yields a reduced COP on $n-1$ variables, which in turn defines a reduced QAOA instance with corresponding mixer and problem Hamiltonians, denoted $H_M^{\ell}$ and $H_P^{\ell}$.

\begin{defn}
The \emph{standard reduced dynamical Lie algebra} associated with the $\ell^{\text{th}}$ reduced problem is defined as
\[
\mathfrak{g}^{\ell}_{\mathrm{std}}
:= \big\langle\, iH_M^{\ell},\, iH_P^{\ell} \big\rangle_{\mathrm{Lie}}.
\]
\end{defn}

Figure~\ref{fig:reduction-schematic} summarizes these relationships schematically. The original COP gives rise to a single QAOA instance with standard DLA $\mathfrak{g}_{\mathrm{std}}$, while each reduced COP produces a corresponding reduced QAOA instance and standard reduced DLA. The arrows indicate the passage from optimization problems to QAOA Hamiltonians and, subsequently, to their associated dynamical Lie algebras.

\begin{center}
    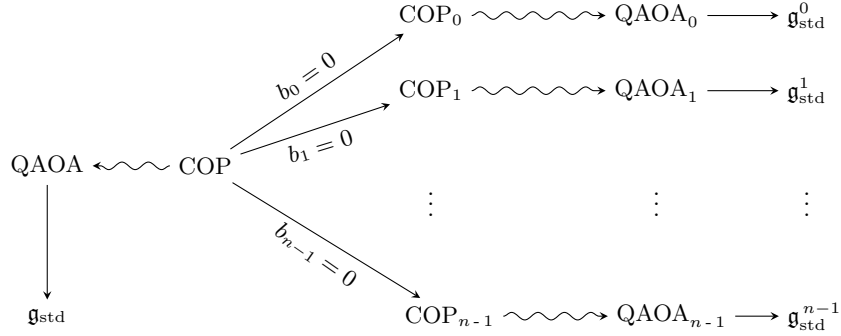
\begin{figure}[h]
        \begin{tikzpicture}[scale=1, >=stealth]

\node (COP0)  at (5,4) {$\mathrm{COP}_0$};
\node (QAOA0) at (8,4) {$\mathrm{QAOA}_0$};
\node (g0)    at (10,4) {$\mathfrak{g}^{0}_{\mathrm{std}}$};

\node (COP1)  at (5,3) {$\mathrm{COP}_1$};
\node (QAOA1) at (8,3) {$\mathrm{QAOA}_1$};
\node (g1)    at (10,3) {$\mathfrak{g}^{1}_{\mathrm{std}}$};

\node (QAOA) at (-0.1,2) {$\mathrm{QAOA}$};
\node (COP)  at (2,2) {$\mathrm{COP}$};
\node        at (10,1.6) {$\vdots$};
\node        at (5,1.6) {$\vdots$};
\node        at (8,1.6) {$\vdots$};

\node (g)        at (-0.1,0) {$\mathfrak{g}_{\mathrm{std}}$};
\node (COPn)     at (5.25,0) {$\mathrm{COP}_{n\operatorname{-}1}$};
\node (QAOAn)    at (8.2,0) {$\mathrm{QAOA}_{n\operatorname{-}1}$};
\node (gn)       at (10.1,0) {$\mathfrak{g}^{\,n-1}_{\mathrm{std}}$};

\draw[->, decorate, decoration={snake, amplitude=0.5mm}]
    (COP0) -- (QAOA0);
\draw[->] (QAOA0) -- (g0);

\draw[->, decorate, decoration={snake, amplitude=0.5mm}]
    (COP1) -- (QAOA1);
\draw[->] (QAOA1) -- (g1);

\draw[->,decorate, decoration={snake, amplitude=0.5mm}]
    (COPn) -- (QAOAn);
\draw[->] (QAOAn) -- (gn);

\draw[->] (QAOA) -- (g);

\draw[->, decorate, decoration={snake, amplitude=0.5mm}]
    (COP) -- (QAOA);

\draw[->] (COP) -- node[rotate=35,above] {$b_0=0$} (COP0);
\draw[->] (COP) -- node[rotate=17,below] {$b_1=0$} (COP1);
\draw[->] (COP) -- node[rotate=-35,below] {$b_{n-1}=0$} (COPn);

\end{tikzpicture}

    \caption{\textbf{Reduction scheme for dynamical Lie algebras.}
An $n$-bit combinatorial optimization problem gives rise to a QAOA instance with standard dynamical Lie algebra $\mathfrak{g}_{\mathrm{std}}$. Fixing a variable $b_\ell$ produces a reduced optimization problem $\mathrm{COP}_\ell$, a corresponding reduced QAOA instance, and an associated standard reduced DLA $\mathfrak{g}^{\ell}_{\mathrm{std}}$.}
\label{fig:reduction-schematic}
    \end{figure}
\end{center}

\begin{ex}
Let $\Gamma = (V,E)$ be a finite graph, and consider the MaxCut problem defined on $\Gamma$. 
The corresponding (standard) mixer and problem Hamiltonians are
\begin{equation}
\label{eq:OriginalHamiltonians}
    H_M = \sum\limits_{w \in V} X_w, 
    \qquad
    H_P = \sum\limits_{(v,w) \in E} Z_v Z_w.
\end{equation}

Then the \emph{free} dynamical Lie algebra is
\begin{equation}\label{eq:FreeDLAgenerators}
\mathfrak{g}_{\Gamma,\mathrm{free}}
= \Big\langle 
    \{ i X_w \mid w \in  V\}, \;
    \{ i Z_w Z_{w'} \mid (w w') \in E \} 
\Big\rangle_{\mathrm{Lie}},
\end{equation}
 while the  \emph{standard} dynamical Lie algebra is
\begin{equation}\label{eq:StandardDLAgenerators}
\mathfrak{g}_{\Gamma,\mathrm{std}}
= \Big\langle  X, Z\Big\rangle_{\mathrm{Lie}},
\end{equation}
with $X:=iH_M$ and $Z:=iH_P$.
If the bit associated with a vertex $v \in V$ is fixed, the reduced mixer and problem Hamiltonians become
\begin{equation}
\label{eq:ReducedHamiltonians}
    H_M^v = \sum\limits_{w \in V \setminus \{v\}} X_w, 
    \qquad
    H_P^v = \sum\limits_{\substack{(w,w') \in E \\ w,w' \neq v}} Z_w Z_{w'} 
    + \sum\limits_{(v,w) \in E} Z_w.
\end{equation}

Consequently, the reduced \emph{free} dynamical Lie algebra is 
\begin{equation}\label{eq:ReducedFreeDLAgenerators}
\mathfrak{g}^{\,v}_{\Gamma,\mathrm{free}}
= \Big\langle 
    \{ i X_w \mid w \in V \setminus \{v\} \}, \;
    \{ i Z_w Z_w' \mid (w w') \in E, \, w,w' \neq v \}, \;
    \{ i Z_w \mid (v w) \in E \} 
\Big\rangle_{\mathrm{Lie}},
\end{equation}
while the reduced \emph{standard} dynamical Lie algebra is the algebra
\begin{equation}\label{eq:ReducedStandardDLAgenerators}
\mathfrak{g}^{\,v}_{\Gamma,\mathrm{std}}
= \Big\langle 
    \mathcal{X}_{\widehat{v}}, \; \mathcal{Z}_{\widehat{v}}
\Big\rangle_{\mathrm{Lie}},
\end{equation}
with generators
\begin{equation}
\mathcal{X}_{\widehat{v}} := iH_M^v 
\text{ and }
\mathcal{Z}_{\widehat{v}} := iH_P^v.    
\end{equation}
\end{ex}

\section{Distinguished Elements in  Reduced DLAs}
\label{sec:DistinguishedElts}

In this section, we establish that certain distinguished elements are contained in  reduced dynamical Lie algebras. These results will be used in subsequent sections to analyze inclusions among DLAs and to derive bounds on their dimensions.


We begin by introducing notation.

\begin{defn}
\label{defn:vertexNSubsets}
Let \( v \) be a vertex in a graph \( \Gamma \), and let \( k \ge 1 \) be an integer. We define

\begin{equation}
\label{eq:DistkEvenOddVertices}
\begin{aligned}
    \mathcal{N}_{v,k}
&:= \{\, w \in V(\Gamma) : \text{the distance (shortest path) between } v \text{ and } w \text{ equals } k \,\},\\
\mathcal{N}_{v,k,\mathrm{even}}
&:= \{ v \in \mathcal{N}_{v,k} \mid \deg(v) \equiv 0 \pmod{2} \},
\\
\mathcal{N}_{v,k,\mathrm{odd}}
&:= \{ v \in \mathcal{N}_{v,k} \mid \deg(v) \equiv 1 \pmod{2} \}.
\end{aligned}
\end{equation}
That is, $\mathcal{N}_{v,k}$ is the set of vertices at graph distance $k$ from $v$, and $\mathcal{N}_{v,k,\mathrm{even}}$ and $\mathcal{N}_{v,k,\mathrm{odd}}$ denote the subsets consisting of vertices of even and odd degree, respectively.
\end{defn}

For each $k \in \mathbb{Z}_{>0}$ and $v \in V(\Gamma)$, we define
\begin{equation}
\label{eq:FixedDistX1element}
\begin{aligned}
\mathcal{X}_{v,k}
&:= i \sum_{w \in \mathcal{N}_{v,k}} X_w, \\
\mathcal{X}_{v,k,\mathrm{even}}
&:= i \sum_{w \in \mathcal{N}_{v,k,\mathrm{even}}} X_w, \\
\mathcal{X}_{v,k,\mathrm{odd}}
&:= i \sum_{w \in \mathcal{N}_{v,k,\mathrm{odd}}} X_w.
\end{aligned}
\end{equation}

These operators represent the sums of Pauli-$X$ operators supported on vertices at graph distance $k$ from $v$, optionally restricted to vertices of even or odd degree.

\begin{thm}
\label{thm:DistDLAElements}
Let $\Gamma = (V,E)$ be a connected graph and let $v \in V(\Gamma)$ be a vertex. 
Assume that for every $j \ge 1$ and every vertex $w \in \mathcal{N}_{v,j}$, the number of neighbors of $w$ in the preceding layer $\mathcal{N}_{v,j-1}$ is odd, i.e.,
\begin{equation}
\label{eq:ParityAssumption}
|\{\, u \in \mathcal{N}_{v,j-1} \mid (u,w) \in E \,\}| \equiv 1 \pmod{2}.
\end{equation}

Then the standard reduced dynamical Lie algebra $\mathfrak{g}^{\,v}_{\Gamma,\mathrm{std}}$ contains the elements $\mathcal{X}_{\widehat{v},k}, \mathcal{X}_{v,k,\mathrm{even}}$ and $\mathcal{X}_{v,k,\mathrm{odd}}$ defined in \eqref{eq:FixedDistX1element} for all $k \in \mathbb{Z}_{>0}$.
\end{thm}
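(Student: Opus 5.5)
The approach is to work with adjoint actions rather than explicit nested commutators. Since $\mathfrak{g}^{\,v}_{\Gamma,\mathrm{std}}$ is a finite-dimensional real Lie algebra of skew-Hermitian matrices, for any $A,B\in\mathfrak{g}^{\,v}_{\Gamma,\mathrm{std}}$ and any $t\in\mathbb{R}$ we have
\[
e^{tA}Be^{-tA}=\sum_{n\ge 0}\frac{t^n}{n!}\,\mathrm{ad}_A^n(B)\in\mathfrak{g}^{\,v}_{\Gamma,\mathrm{std}}.
\]
I will only use $t=\pi/2$ with $A$ a sum of commuting Pauli terms (times $i$). Conjugation then acts on Pauli strings by explicit signs.

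The two basic identities are:
\begin{itemize}
\item[(a)] If $S\subseteq V(\Gamma_v)$ and $A=i\sum_{w\in S}X_w$, then $\mathrm{Ad}_{e^{(\pi/2)A}}$ sends $Z_w\mapsto -Z_w$ for $w\in S$ and fixes all $X$'s.
\item[(b)] If $A=i\sum_{(u,w)\in F}Z_uZ_w+i\sum_{w\in T}Z_w$, then for each $w$, $e^{(\pi/2)A}X_we^{-(\pi/2)A}=X_w\,e^{\pm i\pi h_wZ_w}$. Here $h_w=\sum_{u:(u,w)\in F}Z_u+[w\in T]$ has integer eigenvalues congruent mod $2$ to $\#\{u:(u,w)\in F\}+[w\in T]$. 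Hence $X_w\mapsto(-1)^{\#\{u:(u,w)\in F\}+[w\in T]}X_w$.
\end{itemize}
Throughout, an edge $(v,w)$ is treated as the field term $Z_w$, so "neighbors" of $w$ include $v$.

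\emph{Base step.} Applying (a) with $S=V(\Gamma_v)$ to $\mathcal{Z}_{\widehat v}$ fixes the $ZZ$ terms and negates the field terms. Subtracting the result from $\mathcal{Z}_{\widehat v}$ and halving shows that both $Z_{\mathrm{field}}:=i\sum_{w\in\mathcal{N}_{v,1}}Z_w$ and $Z_{\mathrm{int}}:=i\sum_{E(\Gamma_v)}Z_wZ_{w'}$ lie in $\mathfrak{g}^{\,v}_{\Gamma,\mathrm{std}}$. Applying (b) with $A=Z_{\mathrm{field}}$ to $\mathcal{X}_{\widehat v}$ negates exactly $X_w$ for $w\in\mathcal{N}_{v,1}$. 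Therefore $\mathcal{X}_{v,1}=\tfrac12(\mathcal{X}_{\widehat v}-\mathrm{Ad}\,\mathcal{X}_{\widehat v})\in\mathfrak{g}^{\,v}_{\Gamma,\mathrm{std}}$.

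\emph{Inductive step.} Suppose $\mathcal{X}_{v,j}\in\mathfrak{g}^{\,v}_{\Gamma,\mathrm{std}}$ for all $j\le m$.
\begin{itemize}
\item Apply (a) with $S=\mathcal{N}_{v,1}\cup\dots\cup\mathcal{N}_{v,m}$, i.e.\ with $A=\sum_{j\le m}\mathcal{X}_{v,j}$, to $Z_{\mathrm{int}}$. Because edges of $\Gamma$ only join equal or consecutive BFS layers, the only $ZZ$ edges with exactly one endpoint in $S$ are those between $\mathcal{N}_{v,m}$ and $\mathcal{N}_{v,m+1}$. Hence
\[
E_{m,m+1}:=i\!\!\sum_{\substack{(u,w)\in E\\ u\in\mathcal{N}_{v,m},\,w\in\mathcal{N}_{v,m+1}}}\!\!Z_uZ_w=\tfrac12\big(Z_{\mathrm{int}}-\mathrm{Ad}\,Z_{\mathrm{int}}\big)\in\mathfrak{g}^{\,v}_{\Gamma,\mathrm{std}}.
\]
\item Let $\mathcal{X}_{\ge m+1}:=\mathcal{X}_{\widehat v}-\sum_{j\le m}\mathcal{X}_{v,j}$, which lies in the algebra. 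Apply (b) with $A=E_{m,m+1}$ to it. A vertex in $\mathcal{N}_{v,m+1}$ has an odd number of neighbours in $\mathcal{N}_{v,m}$ by hypothesis~\eqref{eq:ParityAssumption}, so its $X$ is negated. Vertices at distance $\ge m+2$ have no such neighbours, so their $X$ is fixed.
\item Hence $\mathcal{X}_{v,m+1}=\tfrac12(\mathcal{X}_{\ge m+1}-\mathrm{Ad}\,\mathcal{X}_{\ge m+1})\in\mathfrak{g}^{\,v}_{\Gamma,\mathrm{std}}$.
\end{itemize}

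\emph{Parity refinement.} Finally, apply (b) with $A=\mathcal{Z}_{\widehat v}$ itself (all edges, with $T=\mathcal{N}_{v,1}$) to $\mathcal{X}_{v,k}$. Each $X_w$ acquires the sign $(-1)^{\deg_\Gamma(w)}$, because the field term accounts for the edge to $v$. This gives $\mathcal{X}_{v,k,\mathrm{even}}-\mathcal{X}_{v,k,\mathrm{odd}}\in\mathfrak{g}^{\,v}_{\Gamma,\mathrm{std}}$. Together with $\mathcal{X}_{v,k}=\mathcal{X}_{v,k,\mathrm{even}}+\mathcal{X}_{v,k,\mathrm{odd}}$, both parity pieces lie in the algebra.

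The main obstacle is making identity (b) rigorous and getting the sign bookkeeping right. One must check that $Z_w h_w$ commutes with the rest of $A$, so that the conjugation factorizes qubit by qubit. One must also check that $e^{i\pi h_wZ_w}$ really is the scalar $(-1)^{\text{parity}}$, including the constant contribution from the fixed vertex $v$. The secondary point to verify carefully is that BFS layers admit no edges skipping a layer, which is what makes the cut-edge extraction of $E_{m,m+1}$ clean.
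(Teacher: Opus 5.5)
Your proof is correct. It follows the same inductive skeleton as the paper's proof: obtain $\mathcal{X}_{v,1}$ from the field term $\mathcal{Z}_{\widehat v,1}$; build an auxiliary $ZZ$ operator on the cut between consecutive layers and use \eqref{eq:ParityAssumption} to isolate the next layer; finally split by degree parity. The mechanism is different, however.

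\textbf{Parity selection.} The paper uses the interpolation operators $\mathcal{O}_m(\operatorname{ad}^2_{(\cdot)})$ from the proof of Theorem~\ref{thm:CoolEltsInStdDLA}. You use the finite conjugations $e^{(\pi/2)\operatorname{ad}_A}$ by commuting Pauli sums. When $A$ is a commuting $Z$-type sum, $\tfrac12(B-\mathrm{Ad}\,B)$ is exactly what $\mathcal{O}_m(\operatorname{ad}^2_A)(B)$ is designed to produce. Your version, however, needs no Lagrange interpolation and no degree bound, and it reads the parity off directly from a sign. The points you flag as obstacles are routine: $e^{(\pi/2)A}$ is simply the product of the commuting unitaries $iZ_uZ_w$, $iZ_w$ (resp.\ $iX_w$), and each of these negates the anticommuting single-site Pauli. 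Using $\mathcal{Z}_{\widehat v}$ rather than $\mathcal{Z}_{\widehat v,2}$ in the last step also absorbs the degree shift on $\mathcal{N}_{v,1}$ uniformly.

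\textbf{Auxiliary operators.} This is the more substantive gain. Conjugating $Z_{\mathrm{int}}$ by the cumulative sum $\sum_{j\le m}\mathcal{X}_{v,j}$ fixes every edge inside $\mathcal{N}_{v,1}\cup\dots\cup\mathcal{N}_{v,m}$. This yields the commuting one-sided cut operator $E_{m,m+1}$, which you apply to $\mathcal{X}_{\ge m+1}$, from which all lower layers have been removed.

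The paper instead uses $\mathcal{Z}_n=-\tfrac14\operatorname{ad}^2_{\mathcal{X}_{\widehat v,n}}(\mathcal{Z}_{\widehat v,2})$. This operator contains both the $(n-1,n)$ and $(n,n+1)$ cuts, together with non-commuting terms $Y_wY_{w'}$ for edges inside $\mathcal{N}_{v,n}$, and it is applied to $\mathcal{X}_{\widehat v}-\mathcal{X}_{\widehat v,n}$. Read literally, \eqref{eq:XLeveln} therefore also retains vertices of $\mathcal{N}_{v,n-1}$ that have an odd number of neighbours in $\mathcal{N}_{v,n}$. For example, on the path $v-a-b-c$ with $n=2$ it produces $iX_a+iX_c$ rather than $iX_c$. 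In addition, the $Y_wY_{w'}$ terms break the closed form for $\operatorname{ad}^{2j}_{\mathcal{Z}_n}$ on which $\mathcal{O}_m$ relies.

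Your choices of $S$ and of $\mathcal{X}_{\ge m+1}$ avoid both problems, so your argument is the cleaner and more reliable proof of the statement.
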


We now introduce a graph-theoretic reduction that plays a central role in our analysis of dynamical Lie algebras.

\begin{defn}
\label{def:ReducedGraph}
Let $\Gamma = (V, E)$ be a graph and let $v \in V(\Gamma)$ be a vertex.  
The \emph{reduced graph at $v$}, denoted by $\Gamma_v$, is the graph obtained from $\Gamma$ by removing the vertex $v$ together with all edges incident to $v$. Equivalently, $\Gamma_v$ is the graph with vertex and edge sets
\begin{equation}
\label{eq:redgraph}
V(\Gamma_v) = V(\Gamma) \setminus \{v\}, 
\qquad 
E(\Gamma_v) = E(\Gamma) \setminus \{ (v,j) \mid j \in \mathcal{N}_{v,1} \}.
\end{equation}
\end{defn}

The following result provides a powerful structural insight. 
It shows that, by extending a given graph in a controlled manner, 
the free dynamical Lie algebra associated with the reduction of the extended graph admits a natural embedding into the standard reduced dynamical Lie algebra of a suitably enlarged graph.  This result is essentially a consequence of Theorem~\ref{thm:DistDLAElements}; a detailed construction and verification are deferred to Appendix~\ref{sec:Technicalities}.

\begin{thm}
\label{thm:GraphExtensionEmbedding}
Let $\Gamma$ be a connected graph on $n$ vertices and let $v \in V(\Gamma)$ be any vertex. 
Denote by
\begin{equation}
    \label{eq:MaxDistToV}
    j(v) := \max_{w \in V(\Gamma)} \operatorname{dist}_{\Gamma}(v,w)
\end{equation}
the eccentricity of $v$ in $\Gamma$.  
Then there exists a graph $\widehat{\Gamma}_v$ satisfying the following properties: 
\begin{itemize}
    \item \emph{Containment.} The graph $\Gamma$ is a subgraph of $\widehat{\Gamma}_v$.
\item \emph{Vertex bound.} The number of vertices satisfies

\begin{equation}
    \label{eq:NumOfVerticesInExtGraph}
    |V(\widehat{\Gamma}_v)|< 1+3(j(v)-1)(n-1) + \frac{(3n-2)(3n-3)}{2}.
\end{equation}

\item \emph{Edge bound.} The number of edges satisfies

\begin{equation}
    \label{eq:NumOfEdgesInExtGraph}
|E(\widehat{\Gamma}_v)|< |E(\Gamma)|+4(n-1)+3(j(v)-1)(n-1) + \frac{(3n-2)(3n-3)}{2}.
\end{equation}

\item \emph{DLA embedding.} The standard reduced dynamical Lie algebra of the extension contains the free dynamical Lie algebra of the reduced extended graph, namely
\[
\mathfrak{g}^{\,v}_{\widehat{\Gamma}_v,\mathrm{std}}
\;\supseteq\;
\mathfrak{g}_{\widehat{\Gamma}_v,\mathrm{free}},
\qquad \text{cf.\ \eqref{eq:ERContainment}.}
\]

\item \emph{Optimization equivalence.} Optimal solutions of the MaxCut problem on $\Gamma$ are obtained from optimal solutions on $\widehat{\Gamma}_v$ by discarding the bits corresponding to the auxiliary vertices $V(\widehat{\Gamma}_v)\setminus V(\Gamma)$.
\end{itemize}
\end{thm}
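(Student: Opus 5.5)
We construct $\widehat{\Gamma}_v$ by attaching gadgets to $\Gamma$ so that the parity hypothesis \eqref{eq:ParityAssumption} of Theorem~\ref{thm:DistDLAElements} holds. We also arrange that every vertex of the reduced extended graph is \emph{isolated} by some combination of the layer/parity elements $\mathcal{X}_{v,k}$, $\mathcal{X}_{v,k,\mathrm{even}}$, $\mathcal{X}_{v,k,\mathrm{odd}}$. The first step is a layered normalization. For every $w\in\mathcal{N}_{v,j}$ with an even number of neighbours in $\mathcal{N}_{v,j-1}$, we make that number odd by adding one auxiliary vertex in layer $j-1$. This vertex hangs off a path from $v$ of length $j-1$ whose internal vertices are fresh and dedicated to it. That costs at most $j(v)-1$ new vertices per original vertex, which accounts for the $3(j(v)-1)(n-1)$ term with room for the later steps. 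We must check that the fresh vertices themselves satisfy the hypothesis (each has exactly one predecessor) and that distances of original vertices are unchanged. The latter holds because a new path reaches layer $j-1$ using exactly $j-1$ steps.

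The second step is separation. Theorem~\ref{thm:DistDLAElements} only gives elements uniform over layers, so we attach distinct pendant structures to make each original vertex $w\neq v$ distinguishable. For instance, we attach to the $k$-th vertex (in some ordering) a pendant path, or a bundle of $k$ leaves, tuned so that its degree parity and the parities of its descendants' degrees give a unique profile. The quadratic term $\frac{(3n-2)(3n-3)}{2}$ suggests pendant chains of lengths $1,2,\dots$ of order $3n$. We then show $iX_w\in\mathfrak{g}^v_{\widehat{\Gamma}_v,\mathrm{std}}$ for each $w$. Starting from the layer elements, we commute with $\mathcal{Z}_{\widehat{v}}$ and with one another. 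This produces $i\sum_{w\in S} Y_wZ_{\cdot}$-type terms whose further commutators with the layer elements separate subsets. By induction on layer depth from the leaves upward, we peel off single vertices. This is essentially the mechanism of Theorem~\ref{thm:ParitySeparationImpliesLocalX}, which we would invoke if available; otherwise we redo it directly.

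The third step recovers the interaction terms. Once all $iX_w$ are present, we commute $\mathcal{Z}_{\widehat{v}}$ repeatedly with individual $iX_w$. Since $[iX_w,[iX_w,\mathcal{Z}_{\widehat v}]]$ is proportional to the part of $\mathcal{Z}_{\widehat v}$ anticommuting with $X_w$, these double commutators isolate $i\big(\sum_{u\sim w}Z_uZ_w\big)$ (with $Z_w$ if $w\sim v$). Applying the same projector at a neighbour $u$ yields the single term $iZ_uZ_w$, and similarly $iZ_w$ for $w\sim v$. This gives all generators of $\mathfrak{g}^v_{\mathrm{free}}$, hence the containment; here $\mathfrak{g}_{\widehat\Gamma_v,\mathrm{free}}\subseteq\mathfrak{g}^v_{\mathrm{free}}$ is immediate. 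Counting the added vertices and edges (the $4(n-1)$ edge term comes from the attachment edges) gives the stated bounds.

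The last step is optimization equivalence. All gadgets are trees glued to $\Gamma$ at a single vertex, or paths from $v$ that attach to $\Gamma$ at exactly one further vertex, so they are trees with at most two attachment points. We must make sure two-attachment paths can always be cut fully regardless of the endpoint labels, using odd/even length choices or a dangling leaf to absorb parity. Then any optimal cut of $\widehat{\Gamma}_v$ restricts to an optimal cut of $\Gamma$, because the auxiliary contribution is maximized independently of the $\Gamma$-labels. The main obstacle is step two: designing the pendant gadgets so that the parity profiles genuinely separate all vertices and keep \eqref{eq:ParityAssumption}, while adding vertices only inside the quadratic budget. Conflicts between steps one and two (new leaves altering parities used in step one) will need careful ordering.
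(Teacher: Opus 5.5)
Your parity-correcting gadget breaks the optimization-equivalence property, and the repairs you sketch cannot fix it.

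\textbf{The gadget biases the cut.} The fresh path $v-a_1-\cdots-a_{j-1}-w$ is a tree attached to $\Gamma$ at the two vertices $v$ and $w$. Its length is forced: for $a_{j-1}$ to lie in $\mathcal{N}_{v,j-1}$ the path must have exactly $j$ edges. All $j$ edges are cut if and only if $x_v\oplus x_w\equiv j \pmod 2$, so the gadget rewards one relative labeling of $v$ and $w$. Pendant leaves always contribute one cut edge whatever the labels, so they cannot absorb this bias. More generally, no tree with two attachment points $p\neq q$ has a label-independent maximum cut, because the unique $p$--$q$ path in it loses an edge exactly when $x_p\oplus x_q$ has the wrong parity.

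Concretely, take the $4$-cycle $v\,a\,w\,b$ together with three paths $v\,p_i\,q_i\,w$. Here $w\in\mathcal{N}_{v,2}$ has two predecessors, and every maximum cut (value $11$) has $x_v\neq x_w$. After you add the path $v\,c\,w$, a cut of value $10$ with $x_v=x_w$ ties for optimality on the extension. So restrictions of optima need not be optimal.

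\textbf{The missing idea is a gadget containing an odd cycle.} The paper picks $w_{j-2}\in\mathcal{N}_{v,j-2}$ at distance $2$ from $w$. It adds the triangle $w_{j-2}\,w_{j-1}\,w'_{j-1}$ together with the edge $w_{j-1}w$ (Figure~\ref{fig:ExtGamma}). The vertex $w_{j-1}$ lands in layer $j-1$ and fixes the predecessor count of $w$, and both new vertices have a single predecessor. The gadget's maximum cut is $3$ whether or not $x_{w_{j-2}}=x_w$. It costs only $2$ vertices and $4$ edges per violating vertex, which is where the $3n-2$ and $4(n-1)$ in the bounds come from.

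\textbf{The separation step is left open, and your candidates fail.} Hanging $k\ge 2$ leaves on one vertex is fatal. Transposing two such leaves $\ell,\ell'$ is an automorphism of $\widehat{\Gamma}_v$ fixing $v$, and the corresponding qubit permutation commutes with $\mathcal{X}_{\widehat{v}}$ and $\mathcal{Z}_{\widehat{v}}$. Hence it commutes with all of $\mathfrak{g}^{\,v}_{\widehat{\Gamma}_v,\mathrm{std}}$. Since conjugation sends $iX_\ell$ to $iX_{\ell'}\neq iX_\ell$, the algebra cannot contain $iX_\ell$. Pendant structure at $w$ also changes only the last entry of the ancestor profiles $\mathcal{C}^v_{j,\mathbf{a}}$ used in Theorem~\ref{thm:ParitySeparationImpliesLocalX}. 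So parity profiles cannot separate many vertices of one layer.

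\textbf{The paper separates by depth instead.}
\begin{itemize}
\item Step~1 gives every vertex $w\neq v$ a private representative in layer $j(v)$, via a pendant path of length $j(v)-\operatorname{dist}(v,w)$.
\item Step~2 attaches pendant paths of pairwise distinct lengths $0,1,2,\dots$ to these representatives.
\item Then every nonempty layer beyond $j(v)$ contains exactly one odd-degree vertex, a path endpoint. So $\mathcal{X}_{v,k,\mathrm{odd}}$ from Theorem~\ref{thm:DistDLAElements} is already a single-site $iX$.
\item One then walks inward along the paths (Remark~\ref{rmk:ExtensionOfResults}).
\end{itemize}
Your Step~3 extraction of $iZ_uZ_w$ and $iZ_w$ is correct. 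Together with the dual move $\operatorname{ad}^2_{iZ_uZ_w}(\mathcal{X}_{\widehat{v}})\propto i(X_u+X_w)$, it is what makes that walk go through. Your chains of lengths $1,2,\dots$ point in this direction, but without the equalization step their endpoint depths $\operatorname{dist}(v,w)+k$ can collide.

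Finally, the ordering conflict you anticipate does not arise. Pendant attachments never change any vertex's number of predecessors, so \eqref{eq:ParityAssumption} survives them.
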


\begin{rmk}
Since $j(v) \le n-1$ for any connected graph $\Gamma$ on $n$ vertices,
Theorem~\ref{thm:GraphExtensionEmbedding} implies the  bounds
\[
|V(\widehat{\Gamma}_v)| = O(n^2),
\qquad
|E(\widehat{\Gamma}_v)| = O(n^2).
\]

In particular, the number of edges in the extended graph $\widehat{\Gamma}_v$ is at most quadratic in the number of vertices of the original graph $\Gamma$.
Thus, the graph extension required to realize the embedding
\[
\mathfrak{g}_{\widehat{\Gamma}_v,\mathrm{free}}
\;\subseteq\;
\mathfrak{g}^{\,v}_{\widehat{\Gamma}_v,\mathrm{std}}
\]
incurs at most a quadratic overhead in system size.
\end{rmk}




\subsection{Construction of the Extended Graph}

We now describe the construction of the extended graph $\widehat{\Gamma}_v$, which proceeds in three steps (labeled $0–2$).

Step~0 is a preprocessing stage ensuring that all vertices satisfy the parity condition in~\eqref{eq:ParityAssumption}. Steps~1 and~2 then implement the extension procedure, which guarantees that the reduced standard DLA coincides with the free one.

Although the verification of these properties is deferred to Appendix~\ref{sec:Technicalities}, we present the construction here due to its practical relevance.

 \medskip
\noindent\textbf{Step 0. Correcting parity.} For each $j \ge 2$ and every vertex $w \in \mathcal{N}_{v,j}$ that violates the parity assumption \eqref{eq:ParityAssumption}, we select a vertex $w_{j-2} \in \mathcal{N}_{v,j-2}$ such that $\operatorname{dist}(w_{j-2}, w) = 2$. We then extend the graph by attaching the triangular segment shown in Figure~\ref{fig:ExtGamma}.
\begin{figure}[ht]
\centering
\begin{tikzpicture}[scale=0.8]

\node[circle, draw, fill=black, inner sep=2pt] (r8c10) at (8.0,2.4) {}; \node at (8.0,2.8) {$w_{j-2}$};
\node[circle, draw, fill=blue, inner sep=2pt] (r8c12) at (9.6,2.4) {};
\node at (9.6,2.8) {\color{blue}{$w_{j-1}$}};
\node[circle, draw, fill=black, inner sep=2pt] (r8c14) at (11.2,2.4) {};
\node at (11.2,2.8) {$w$};
\node[circle, draw, fill=blue, inner sep=2pt] (r10c12) at (9.6,0.8) {};
\node at (9.6,0.4) {\color{blue}{$w'_{j-1}$}};

\draw[blue] (r8c10) -- (r8c12);
\draw[blue] (r8c10) -- (r10c12);
\draw[blue] (r8c12) -- (r10c12);
\draw[blue] (r8c12) -- (r8c14);

\end{tikzpicture}
\caption{Local modification for parity correction. Vertices and edges belonging to the original graph $\Gamma$ are shown in black, while the auxiliary elements introduced during Step $0$ are highlighted in blue.}
\label{fig:ExtGamma}
\end{figure}

\medskip
\noindent\textbf{Step 1. Equalizing distances.}
For each vertex $w \in V(\Gamma_v)$, attach a path of length 
$j(v)-\operatorname{dist}_\Gamma(v,w)$ to $w$. 
Vertices already at maximal distance $j(v)$ from $v$ remain unchanged. 
Denote the resulting graph by $\Gamma'$.

By construction, every original vertex $w\neq v$ now determines a unique vertex in $\Gamma'$ at distance $j(v)$ from $v$, obtained by following the newly attached path from $w$ to its endpoint. 
Hence there is a natural injective map
\[
\psi : V(\Gamma_v)
\hookrightarrow
V(\Gamma') \cap \mathcal{N}'_{v,j(v)},
\]
where $\mathcal{N}'_{v,j(v)}$ denotes the set of vertices in $\Gamma'$ at distance $j(v)$ from $v$.

\medskip

\noindent\textbf{Step 2. Layered attachment.}
Fix an indexing of the vertices in $\mathcal{N}'_{v,j(v)}$ by 
\[
\{0,1,\dots,|\mathcal{N}'_{v,j(v)}|-1\}.
\]
For each vertex with index $j$, attach an additional path of length $j$. 
Denote the resulting graph by $\widehat{\Gamma}_v$.

The graph $\widehat{\Gamma}_v$ satisfies all properties stated in Theorem~\ref{thm:GraphExtensionEmbedding}; see Appendix~\ref{sec:Technicalities} for verification.

An illustrative example of this procedure is shown in Figures~\ref{fig:graphGamma}, \ref{fig:GammaPrime}, and \ref{fig:GammaHat}, 
which display the original graph, the intermediate extension after Step~1, and the final graph $\widehat{\Gamma}_v$, respectively.

\begin{figure}[ht]
\centering
\begin{tikzpicture}[scale=0.8]

\node[circle, draw, fill=black, inner sep=2pt] (r4c6) at (4.8,5.6) {};
\node[circle, draw, fill=black, inner sep=2pt] (r4c14) at (11.2,5.6) {};

\node[circle, draw, fill=black, inner sep=2pt] (r6c8) at (6.4,4.0) {};
\node[circle, draw, fill=black, inner sep=2pt] (r6c12) at (9.6,4.0) {};

\node[circle, draw, fill=black, inner sep=2pt] (r8c8) at (6.4,2.4) {};
\node[circle, draw, fill=red, inner sep=2pt] (r8c10) at (8.0,2.4) {}; \node at (8.0,2) {$v$};
\node[circle, draw, fill=black, inner sep=2pt] (r8c12) at (9.6,2.4) {};

\node[circle, draw, fill=black, inner sep=2pt] (r10c8) at (6.4,0.8) {};
\node[circle, draw, fill=black, inner sep=2pt] (r10c12) at (9.6,0.8) {};

\draw (r6c8) -- (r4c6);
\draw (r6c12) -- (r4c14);
\draw (r8c8) -- (r10c8);
\draw (r8c10) -- (r6c8);
\draw (r8c10) -- (r6c12);
\draw (r8c10) -- (r8c8);
\draw (r8c10) -- (r8c12);
\draw (r8c10) -- (r10c8);
\draw (r8c10) -- (r10c12);
\draw (r8c12) -- (r10c12);
\draw (r10c8) -- (r10c12);

\end{tikzpicture}
\caption{A sample graph $\Gamma$ with a distinguished vertex $v$.}
\label{fig:graphGamma}
\end{figure}

\begin{figure}[ht]
\centering
\begin{tikzpicture}[scale=0.8]

\node[circle, draw, fill=black, inner sep=2pt] (r4c6) at (4.8,5.6) {};
\node[circle, draw, fill=black, inner sep=2pt] (r4c14) at (11.2,5.6) {};

\node[circle, draw, fill=black, inner sep=2pt] (r6c8) at (6.4,4.0) {};
\node[circle, draw, fill=black, inner sep=2pt] (r6c12) at (9.6,4.0) {};

\node[circle, draw, fill=blue, inner sep=2pt] (r8c6) at (4.8,2.4) {};
\node[circle, draw, fill=black, inner sep=2pt] (r8c8) at (6.4,2.4) {};
\node[circle, draw, fill=red, inner sep=2pt] (r8c10) at (8.0,2.4) {}; \node at (8.0,2) {$v$};
\node[circle, draw, fill=black, inner sep=2pt] (r8c12) at (9.6,2.4) {};
\node[circle, draw, fill=blue, inner sep=2pt] (r8c14) at (11.2,2.4) {};

\node[circle, draw, fill=blue, inner sep=2pt] (r10c6) at (4.8,0.8) {};
\node[circle, draw, fill=black, inner sep=2pt] (r10c8) at (6.4,0.8) {};
\node[circle, draw, fill=black, inner sep=2pt] (r10c12) at (9.6,0.8) {};
\node[circle, draw, fill=blue, inner sep=2pt] (r10c14) at (11.2,0.8) {};

\node at (11.2,0.4) {$0$};
\node at (11.2,2) {$1$};
\node at (11.2,5.2) {$2$};

\node at (4.8,0.4) {$5$};
\node at (4.8,2) {$4$};
\node at (4.8,5.2) {$3$};

\draw (r6c8) -- (r4c6);
\draw (r6c12) -- (r4c14);
\draw (r8c8) -- (r8c6);
\draw (r8c8) -- (r10c8);
\draw (r8c10) -- (r6c8);
\draw (r8c10) -- (r6c12);
\draw (r8c10) -- (r8c8);
\draw (r8c10) -- (r8c12);
\draw (r8c10) -- (r10c8);
\draw (r8c10) -- (r10c12);
\draw (r8c12) -- (r8c14);
\draw (r8c12) -- (r10c12);
\draw (r10c8) -- (r10c6);
\draw (r10c8) -- (r10c12);
\draw (r10c12) -- (r10c14);

\end{tikzpicture}
\caption{The extended graph $\Gamma'$ obtained from $\Gamma$ after Step~1.  Vertices added in this step are highlighted in blue. 
All vertices at maximal graph distance $j(v)=2$ from the distinguished vertex $v$ are labeled, preparing the indexing used in Step~2.}
\label{fig:GammaPrime}
\end{figure}

\begin{figure}[ht]
\centering
\begin{tikzpicture}[scale=0.8]

\node[circle, draw, fill=black, inner sep=2pt] (r1c3) at (2.4,8.0) {};

\node[circle, draw, fill=black, inner sep=2pt] (r2c4) at (3.2,7.2) {};
\node[circle, draw, fill=black, inner sep=2pt] (r2c16) at (12.8,7.2) {};

\node[circle, draw, fill=black, inner sep=2pt] (r3c5) at (4.0,6.4) {};
\node[circle, draw, fill=black, inner sep=2pt] (r3c15) at (12.0,6.4) {};

\node[circle, draw, fill=black, inner sep=2pt] (r4c2) at (1.6,5.6) {};
\node[circle, draw, fill=black, inner sep=2pt] (r4c6) at (4.8,5.6) {};
\node[circle, draw, fill=black, inner sep=2pt] (r4c14) at (11.2,5.6) {};

\node[circle, draw, fill=black, inner sep=2pt] (r5c1) at (0.8,4.8) {};
\node[circle, draw, fill=black, inner sep=2pt] (r5c3) at (2.4,4.8) {};

\node[circle, draw, fill=black, inner sep=2pt] (r6c2) at (1.6,4.0) {};
\node[circle, draw, fill=black, inner sep=2pt] (r6c4) at (3.2,4.0) {};
\node[circle, draw, fill=black, inner sep=2pt] (r6c8) at (6.4,4.0) {};
\node[circle, draw, fill=black, inner sep=2pt] (r6c12) at (9.6,4.0) {};

\node[circle, draw, fill=black, inner sep=2pt] (r7c3) at (2.4,3.2) {};
\node[circle, draw, fill=black, inner sep=2pt] (r7c5) at (4.0,3.2) {};
\node[circle, draw, fill=black, inner sep=2pt] (r7c15) at (12.0,3.2) {};

\node[circle, draw, fill=black, inner sep=2pt] (r8c4) at (3.2,2.4) {};
\node[circle, draw, fill=blue, inner sep=2pt] (r8c6) at (4.8,2.4) {};
\node[circle, draw, fill=black, inner sep=2pt] (r8c8) at (6.4,2.4) {};
\node[circle, draw, fill=red, inner sep=2pt] (r8c10) at (8.0,2.4) {};
\node at (8.0,2) {$v$};
\node[circle, draw, fill=black, inner sep=2pt] (r8c12) at (9.6,2.4) {};
\node[circle, draw, fill=blue, inner sep=2pt] (r8c14) at (11.2,2.4) {};

\node[circle, draw, fill=black, inner sep=2pt] (r9c5) at (4.0,1.6) {};

\node[circle, draw, fill=blue, inner sep=2pt] (r10c6) at (4.8,0.8) {};
\node[circle, draw, fill=black, inner sep=2pt] (r10c8) at (6.4,0.8) {};
\node[circle, draw, fill=black, inner sep=2pt] (r10c12) at (9.6,0.8) {};
\node[circle, draw, fill=blue, inner sep=2pt] (r10c14) at (11.2,0.8) {};

\node at (11.2,0.4) {$0$};
\node at (11.2,2) {$1$};
\node at (11.2,5.2) {$2$};

\node at (4.8,0.4) {$5$};
\node at (4.8,2) {$4$};
\node at (4.8,5.2) {$3$};

\draw (r2c4) -- (r1c3);
\draw (r3c5) -- (r2c4);
\draw (r3c15) -- (r2c16);
\draw (r4c6) -- (r3c5);
\draw (r4c14) -- (r3c15);
\draw (r5c3) -- (r4c2);
\draw (r6c2) -- (r5c1);
\draw (r6c4) -- (r5c3);
\draw (r6c8) -- (r4c6);
\draw (r6c12) -- (r4c14);
\draw (r7c3) -- (r6c2);
\draw (r7c5) -- (r6c4);
\draw (r8c4) -- (r7c3);
\draw (r8c6) -- (r7c5);
\draw (r8c8) -- (r8c6);
\draw (r8c8) -- (r10c8);
\draw (r8c10) -- (r6c8);
\draw (r8c10) -- (r6c12);
\draw (r8c10) -- (r8c8);
\draw (r8c10) -- (r8c12);
\draw (r8c10) -- (r10c8);
\draw (r8c10) -- (r10c12);
\draw (r8c12) -- (r8c14);
\draw (r8c12) -- (r10c12);
\draw (r8c14) -- (r7c15);
\draw (r9c5) -- (r8c4);
\draw (r10c6) -- (r9c5);
\draw (r10c8) -- (r10c6);
\draw (r10c8) -- (r10c12);
\draw (r10c12) -- (r10c14);

\end{tikzpicture}
\caption{The completed graph $\widehat{\Gamma}_v$ obtained after Step~2. 
To each vertex at maximal distance $j(v)=2$ from the distinguished vertex $v$,  an additional path is attached whose length equals its assigned label.}
\label{fig:GammaHat}
\end{figure}

\begin{rmk}
MaxCut solutions on $\Gamma$ are recovered from solution strings on $\widehat{\Gamma}_v$ by deleting the bits corresponding to the auxiliary vertices introduced during the construction.
\end{rmk}

\subsection{Parity-Degree Separation Sequences}

We introduce a refinement of distance layers based on parity patterns of vertex degrees along shortest paths to the fixed node. These subsets will allow us to identify structured elements in the reduced dynamical Lie algebras.

\begin{defn}
\label{defn:vertexSubsets}
Let \( v \) be a vertex in a graph \( \Gamma \) and let $\mathbf{a}:=(a_1,a_2,\dots,a_j)\in\mathbb{Z}_{2}^j$ be a binary sequence. We denote by
\(
\mathcal{C}^v_{j,\mathbf{a}} \subseteq \mathcal{N}_{v,j}
\)
the subset of vertices such that for each vertex $w \in \mathcal{C}^v_{j,\mathbf{a}}$, there exists a path 
\[
w_0 = v - w_1 - \cdots - w_j = w
\]
such that $\deg(w_s)\equiv a_s \pmod{2}$ for each $1\le s\le j$. 
\end{defn}
\begin{thm}
\label{thm:CoolDLAElements}
Let $\Gamma$ be a connected graph and let $v \in V(\Gamma)$ be a vertex. Assume the following:
\begin{enumerate}
\item[(i)] For every $j \ge 1$ and every $w \in \mathcal{N}_{v,j}$,
\[
|\{\, u \in \mathcal{N}_{v,j-1} : (u,w) \in E \,\}| \equiv 1 \pmod{2}.
\]

\item[(ii)] For every $j \ge 1$, every $\mathbf{a} \in \mathbb{Z}_2^j$, and every 
$w \in \mathcal{C}^v_{j,\mathbf{a}}$,
\[
|\{\, u \in \mathcal{C}^v_{j-1,\mathbf{a}'} : (u,w) \in E \,\}| \equiv 1 \pmod{2},
\]
where $\mathbf{a}' = (a_1,\dots,a_{j-1})$.
\end{enumerate} 

Then the standard reduced dynamical Lie algebra $\mathfrak{g}^{\,v}_{\Gamma,\mathrm{std}}$ contains the elements
\begin{equation}
\label{eq:ReducedDLAElts}
X_{\mathcal{C}^v_{j,\mathbf{a}}} := i \sum\limits_{w \in \mathcal{C}^v_{j,\mathbf{a}}} X_w
\end{equation}
for every subset $\mathcal{C}^v_{j,\mathbf{a}}$.
\end{thm}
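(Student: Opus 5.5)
We plan to prove the statement by induction on the distance $j$, refining the layer-by-layer argument behind Theorem~\ref{thm:DistDLAElements}. The basic tool is a pair of commutator identities in the reduced setting. For the linear term $iZ_w$ (coming from an edge $(v,w)$) we have $[iX_w, iZ_w] = 2iY_w$, up to sign conventions. For a quadratic term we have $[iX_w, iZ_wZ_u] = 2iY_wZ_u$.

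The base case uses $\mathcal{N}_{v,1}$. Commuting $\mathcal{X}_{\widehat v}$ with $\mathcal{Z}_{\widehat v}$ and iterating, e.g. forming $[\mathcal{Z}_{\widehat v},[\mathcal{Z}_{\widehat v},\mathcal{X}_{\widehat v}]]$, produces terms $X_w$ weighted by the number of $Z$-factors acting on $w$, i.e.\ by $\deg_\Gamma(w)$ (the edge to $v$ contributes the linear term $Z_w$). Reducing these weights modulo the symmetric structure separates vertices by degree parity. For $j=1$ every $w\in\mathcal{N}_{v,1}$ is adjacent to $v$, so $\mathcal{C}^v_{1,(a_1)}=\mathcal{N}_{v,1,\mathrm{even/odd}}$. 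These elements are already provided by Theorem~\ref{thm:DistDLAElements}, since hypothesis (i) is exactly its assumption.

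For the inductive step, assume $X_{\mathcal{C}^v_{j-1,\mathbf{a}'}}\in\mathfrak{g}^v_{\Gamma,\mathrm{std}}$ for all $\mathbf{a}'$, and also the elements $\mathcal{X}_{v,k,\mathrm{even/odd}}$ from Theorem~\ref{thm:DistDLAElements}. Form the nested commutator
\[
[\mathcal{Z}_{\widehat v},[\mathcal{Z}_{\widehat v},X_{\mathcal{C}^v_{j-1,\mathbf{a}'}}]].
\]
We will strip off the part supported on $\mathcal{C}^v_{j-1,\mathbf{a}'}$ itself using $X_{\mathcal{C}^v_{j-1,\mathbf{a}'}}$ and its commutators with $\mathcal{Z}_{\widehat v}$. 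Then we will isolate the "propagated" terms, which are two-body operators of the form $Y_uY_w$ or $Z_uZ_w$ between $u\in\mathcal{C}^v_{j-1,\mathbf{a}'}$ and neighbours $w$.

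A further commutator with $\mathcal{X}_{\widehat v}$ (or with $\mathcal{X}_{v,j}$) converts these into single-site $X_w$ terms on $\mathcal{N}_{v,j}$. Each such $w$ carries a coefficient equal, modulo the $\mathbb{Z}_2$ bookkeeping of Pauli products, to the number of neighbours of $w$ in $\mathcal{C}^v_{j-1,\mathbf{a}'}$. Hypothesis (ii) makes this count odd for all $w\in\bigcup_{a_j}\mathcal{C}^v_{j,(\mathbf{a}',a_j)}$, and the vertices with even count drop out. This yields
\[
X_{\mathcal{C}^v_{j,(\mathbf{a}',0)}}+X_{\mathcal{C}^v_{j,(\mathbf{a}',1)}}\in\mathfrak{g}^v_{\Gamma,\mathrm{std}}.
\]
Next we split by degree parity. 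We take the commutator of this element with $\mathcal{X}_{v,j,\mathrm{even}}$-type projections, more precisely repeating the degree-weighting trick from the base case: double commutators with $\mathcal{Z}_{\widehat v}$ weight $X_w$ by $\deg(w)$. Combined with the previous sum, this separates the $a_j=0$ and $a_j=1$ pieces. Alternatively we can adapt whatever parity-splitting device is used in the proof of Theorem~\ref{thm:DistDLAElements}, now applied to the restricted sum.

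The main obstacle is the bookkeeping in the propagation step. We must show that the nested commutators produce single-site $X$ sums with coefficients depending only on the parity of the neighbour count in the previous class. Contributions from neighbours in layer $j$ itself, or from other classes $\mathcal{C}^v_{j-1,\mathbf{b}}$, must cancel or be removable using elements already known to lie in the algebra. We would first work this out concretely at the level of Pauli strings, as in the proof of Theorem~\ref{thm:DistDLAElements}, and use (i) to control the cross-layer terms.

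We must also handle overlaps: a vertex may lie in several $\mathcal{C}^v_{j,\mathbf{a}}$. We address this by noting that (ii) is imposed for every $\mathbf{a}$, so the odd-count argument applies to each class independently.
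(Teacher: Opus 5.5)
Your inductive skeleton matches the paper's: induction on $j$, hypothesis (ii) forcing odd neighbour counts, and a split by degree parity. The central propagation step, however, has a genuine gap, and the mechanism you describe does not work.

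Because all $Z$'s commute, $\operatorname{ad}^2_{\mathcal{Z}_{\widehat v}}(X_{\mathcal{C}^v_{j-1,\mathbf{a}'}})=-4i\sum_{u}X_u h_u^2$, where $h_u=\sum_{w\sim u}Z_w$ and $Z_v:=I$. Every term keeps the factor $X_u$ on the old class. The ``propagated'' pieces are therefore $X_uZ_wZ_{w'}$, not two-body $Y_uY_w$ or $Z_uZ_w$. Since $X_u$ commutes with every $X$-type sum, a further commutator with $\mathcal{X}_{\widehat v}$ or $\mathcal{X}_{v,j}$ never removes it, so you never reach single-site $X_w$ with $w\in\mathcal{N}_{v,j}$.

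Even a coupling that does act on layer $j$ gives operator-valued, not scalar, weights. Let $N(w)$ denote the neighbourhood of $w$, and let $\mathcal{Z}'$ be the sum of $iZ_uZ_w$ over edges from the class into $\mathcal{N}_{v,j}$. Then $\operatorname{ad}^2_{\mathcal{Z}'}(iX_w)=-4iX_wS_w^2$ with $S_w=\sum_{u\in N(w)\cap\mathcal{C}^v_{j-1,\mathbf{a}'}}Z_u$. A vertex with two such neighbours $u,u'$ contributes $-8iX_w(I+Z_uZ_{u'})\neq 0$. So even counts do not ``drop out''; nothing reduces coefficients mod $2$. The same misreading appears in your base case, where the weight is $h_w^2$ rather than $\deg w$. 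It is harmless there only because you defer to Theorem~\ref{thm:DistDLAElements}.

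Two ingredients are missing, and they are what the paper's proof uses.

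\emph{First, reverse the roles:} let the $X$-sums filter the two-body part $\mathcal{Z}_{\widehat v,2}$ from Lemma~\ref{ZsplittingLemma}. For an $X$-sum $X_S$, $\operatorname{ad}^2_{X_S}$ multiplies a term $iZ_uZ_w$ with exactly one endpoint in $S$ by $-4$ and annihilates terms disjoint from $S$. Take $X_{\ge j}:=\mathcal{X}_{\widehat v}-\sum_{k<j}\mathcal{X}_{\widehat v,k}$, which is available by Theorem~\ref{thm:DistDLAElements}. Then
\[
\mathcal{Z}_{\mathbf{a}'}:=\tfrac{1}{16}\operatorname{ad}^2_{X_{\ge j}}\operatorname{ad}^2_{X_{\mathcal{C}^v_{j-1,\mathbf{a}'}}}(\mathcal{Z}_{\widehat v,2})
= i\sum_{\substack{(uw)\in E\\ u\in\mathcal{C}^v_{j-1,\mathbf{a}'},\ w\in\mathcal{N}_{v,j}}} Z_uZ_w .
\]
The second filter is what removes the $Y_uY_{u'}-Z_uZ_{u'}$ terms from edges inside the class. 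Those terms do not commute with $S_w$ and would spoil the next step.

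\emph{Second, apply the interpolation filter} $\mathcal{O}_m(\operatorname{ad}^2_{\mathcal{Z}_{\mathbf{a}'}})$ of Theorem~\ref{thm:CoolEltsInStdDLA}. The eigenvalues of $S_w^2$ are squared balances, whose parity equals that of $|N(w)\cap\mathcal{C}^v_{j-1,\mathbf{a}'}|$. Applied to $\mathcal{X}_{v,j,\mathrm{odd}}$ (resp.\ $\mathcal{X}_{v,j,\mathrm{even}}$), the filter returns $iX_w$ summed over the odd-count vertices of odd (resp.\ even) degree in $\mathcal{N}_{v,j}$. Any vertex $w\in\mathcal{N}_{v,j}$ adjacent to $\mathcal{C}^v_{j-1,\mathbf{a}'}$ lies in $\mathcal{C}^v_{j,(\mathbf{a}',\deg w \bmod 2)}$. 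So by (ii) these sets are exactly $\mathcal{C}^v_{j,(\mathbf{a}',1)}$ and $\mathcal{C}^v_{j,(\mathbf{a}',0)}$, and no separate splitting step is needed.

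The paper's last display writes plain $\operatorname{ad}^2_{\mathcal{Z}_{\mathbf{a}'}}$. By the count-two example above, the filter $\mathcal{O}_m$ is what is actually required there.
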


\begin{ex}
\label{ex:ExpFamilyTrees}
Let $m\ge 3$ be an integer. Define the acyclic graph $\mathcal{T}_m$ as follows.
Start with the path graph $P_m$ on vertices $\{1,2,\dots,m\}$. Attach a copy of the path graph $P_2$ to each vertex $j=1,\dots,m-3$, and attach a single additional edge to the vertex $m-2$.  

Next, let $v-\mathcal{T}_m$ denote the graph obtained from $\mathcal{T}_m$ by adjoining an auxiliary vertex $v$ and connecting it to vertex $1$ of $\mathcal{T}_m$. We view $v-\mathcal{T}_m$ as a rooted tree with root $v$. By construction, \( v \) has degree one, and removing the root \( v \) (together with its incident edge) recovers the original tree \( \mathcal{T}_m \).

The neighborhood layers $\mathcal{N}_{v,j}$ in $\Gamma$ have the following degree parities:
\begin{itemize}
    \item $\mathcal{N}_{v,1}$ consists of a single vertex of degree $1$;
    \item $\mathcal{N}_{v,2}$ consists of two vertices of degrees $1$ and $0$;
    \item for $3\le j\le m-2$, the set $\mathcal{N}_{v,j}$ consists of three vertices of degrees $1$, $0$, and $1$;
    \item $\mathcal{N}_{v,m-1}$ consists of three vertices of degrees $0$, $1$, and $1$;
    \item $\mathcal{N}_{v,m}$ consists of a single vertex of degree $1$.
\end{itemize}

See Figure~\ref{fig:Gamma} for the case $\mathcal{T}_5$, where all vertex degrees are indicated.

It follows that no two vertices lie in the same equivalence class, and hence the induced partition of the vertex set $V(\Gamma)$ consists entirely of singleton sets. By Theorem~\ref{thm:ParitySeparationImpliesLocalX}, the standard reduced dynamical Lie algebra
\(
\mathfrak{g}^v_{v-\mathcal{T}_m,\mathrm{std}}
\)
contains the Pauli operator $iX_w$ for every vertex $w$ of the reduced graph $\mathcal{T}_m$. Consequently, we obtain the inclusion
\begin{equation}
    \label{eq:TreeDLAinclusion}
    \mathfrak{g}_{\mathcal{T}_m,\mathrm{free}}
\;\subseteq\;
\mathfrak{g}^v_{v-\mathcal{T}_m,\mathrm{std}}.
\end{equation}

It is straightforward to verify that when $m=2k$ is even, the tree $\mathcal{T}_m$ is an even-odd bipartite graph. Hence, by Theorem~1 of~\cite{KLFCCZ}, the free DLA associated with $\mathcal{T}_m$ for even $m$ has exponential dimension in $n$, the total number of vertices of $\mathcal{T}_m$. A direct count gives
\begin{equation} 
\label{eq:TmVertexCount}
n = 3(m-3)+2+1+1 = 3m-5. 
\end{equation}

Therefore, the family of graphs $\{v-\mathcal{T}_m\}_{m>3,\; m\ \mathrm{even}}$ provides an explicit class of acyclic graphs for which the dimension of the standard reduced DLA grows exponentially in the number of vertices.

\end{ex}

\begin{figure}[h!]
    \centering
    \begin{tikzpicture}[scale=0.9]
        \node[circle, draw, fill=red, inner sep=2pt] (M) at (2,4) {};
         \node at (2.3,4.1) {$v$};
        \node[circle, draw, fill=blue, inner sep=2pt] (N) at (2,3) {};
        \node at (2.3,3.1) {$1$};
        \node[circle, draw, fill=black, inner sep=2pt] (O) at (3,2) {};
        \node at (3.3,2.1) {$0$};
        \node[circle, draw, fill=black, inner sep=2pt] (P) at (4,1) {};
       \node at (4.3,1.1) {$1$};

        \node[circle, draw, fill=blue, inner sep=2pt] (C) at (1,2) {};
        \node at (1.4,2.1) {$1$};
        \node[circle, draw, fill=blue, inner sep=2pt] (E) at (0,1) {};
        \node at (0.4,1.1) {$1$};
        \node[circle, draw, fill=black, inner sep=2pt] (F) at (2,1) {};
        \node at (2.3,1.1) {$0$};
        \node[circle, draw, fill=blue, inner sep=2pt] (G) at (-1,0) {};
        \node at (-0.6,0.1) {$0$};
        \node[circle, draw, fill=black, inner sep=2pt] (H) at (1,0) {};
        \node at (1.3,0.1) {$1$};
        \node[circle, draw, fill=black, inner sep=2pt] (I) at (3,0) {};
        \node at (3.3,0.1) {$1$};
        \node[circle, draw, fill=blue, inner sep=2pt] (J) at (-2,-1) {};
        \node at (-1.6,-0.9) {$1$};

        \node at (1.7,3.1) {\color{blue}{$1$}};
        \node at (0.7,2.1) {\color{blue}{$2$}};
        \node at (-0.3,1.1) {\color{blue}{$3$}};
        \node at (-1.3,0.1) {\color{blue}{$4$}};
        \node at (-2.3,-0.9) {\color{blue}{$5$}};
        \draw (M) -- (N);
        \draw[blue] (N) -- (C);
        \draw (N) -- (O);
        \draw (O) -- (P);
        \draw[blue] (C) -- (E);
        \draw (C) -- (F);
        \draw (C) -- (I);
        \draw[blue] (E) -- (G);
        \draw (E) -- (H);
        \draw[blue] (G) -- (J);
    \end{tikzpicture}\hspace{0.4in}
      \begin{tikzpicture}[scale=0.9]
          \node[circle, draw, fill=black, inner sep=2pt] (N) at (2,3) {};
        \node[circle, draw, fill=black, inner sep=2pt] (O) at (3,2) {};
         \node[circle, draw, fill=black, inner sep=2pt] (P) at (4,1) {};   
        
        \node[circle, draw, fill=black, inner sep=2pt] (C) at (1,2) {};
        \node[circle, draw, fill=black, inner sep=2pt] (E) at (0,1) {};
        \node[circle, draw, fill=black, inner sep=2pt] (F) at (2,1) {};
        \node[circle, draw, fill=black, inner sep=2pt] (G) at (-1,0) {};
        \node[circle, draw, fill=black, inner sep=2pt] (H) at (1,0) {};
        \node[circle, draw, fill=black, inner sep=2pt] (I) at (3,0) {};
        \node[circle, draw, fill=black, inner sep=2pt] (J) at (-2,-1) {};

          \draw (N) -- (C);
        \draw (N) -- (O);
        \draw (O) -- (P);
        \draw (C) -- (E);
        \draw (C) -- (F);
        \draw (C) -- (I);
        \draw (E) -- (G);
        \draw (E) -- (H);
        \draw (G) -- (J);
    \end{tikzpicture}
  \caption{The graph $v-\mathcal{T}_5$, constructed from the path graph $P_5$ (shown in blue). Each vertex is labeled by its index (left) and the parity of its degree  (right). Also shown is the reduced graph $\mathcal{T}_5$, obtained by removing the vertex $v$ together with its incident edge.}
 
    \label{fig:Gamma}
\end{figure}
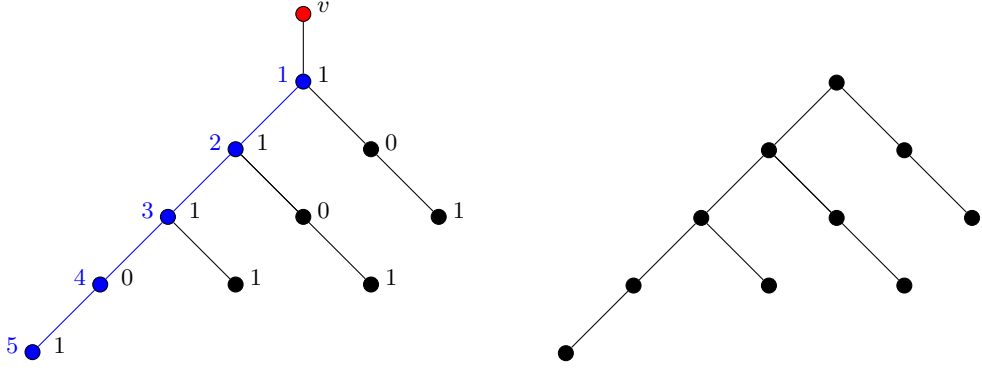

\begin{defn}
\label{defn:vertexSubsetsExtended}
Let \( \Gamma \) be a connected graph and \(v\in V(\Gamma) \) be a vertex.
For integers \( j \ge 0 \) and \( \ell \ge 1 \), and for a binary sequence
\( \mathbf{a}=(a_1,\dots,a_\ell)\in\mathbb{Z}_2^\ell \),
define \( \mathcal{C}^{v}_{j,\ell,\mathbf{a}} \subseteq \mathcal{N}_{v,j} \) as follows: \\
A vertex \( w \in \mathcal{N}_{v,j} \) belongs to \( \mathcal{C}^{v}_{j,\ell,\mathbf{a}} \) if and only if
there exists a vertex \( w' \in \mathcal{N}_{v,j+\ell} \) and a path 
\[
w = w_0 - w_1 - \cdots - w_\ell = w'
\]
such that
\[
\deg(w_s) \equiv a_s \pmod{2}
\qquad \text{for all } s=1,\dots,\ell .
\]
\end{defn}

\begin{thm}
\label{thm:ParitySeparationImpliesLocalX}
Let $\Gamma$ be a connected graph and let $v\in V(\Gamma)$ be a vertex.  
Assume the following:
\begin{enumerate}
\item[(i)] For every $j \ge 1$ and every $w \in \mathcal{N}_{v,j}$,
\[
|\{\, u \in \mathcal{N}_{v,j-1} : (u,w) \in E \,\}| \equiv 1 \pmod{2}.
\]

\item[(ii)] For every $j \ge 1$, every $\mathbf{a} \in \mathbb{Z}_2^j$, and every 
$w \in \mathcal{C}^v_{j,\mathbf{a}}$,
\[
|\{\, u \in \mathcal{C}^v_{j-1,\mathbf{a}'} : (u,w) \in E \,\}| \equiv 1 \pmod{2},
\]
where $\mathbf{a}' = (a_1,\dots,a_{j-1})$.
\item[(iii)] For every $j \ge 1$ and for every pair of distinct vertices 
$w\neq w'\in \mathcal{N}_{v,j}$ 

there exists a binary sequence $\mathbf{a}\in\mathbb{Z}_2^j$ such that
\[
w\in \mathcal{C}^v_{j,\mathbf{a}}
\quad\text{and}\quad
w'\notin \mathcal{C}^v_{j,\mathbf{a}}.
\]

Equivalently,  no two vertices in $\mathcal{N}_{v,j}$ share identical parity profiles along all shortest paths from $v$.
\end{enumerate} 

Then for every vertex $w\in \mathcal{N}_{v,j}$, the corresponding single-site Pauli generator lies in the standard reduced dynamical Lie algebra:
\begin{equation}
\label{eq:LocalXInReducedDLA}
iX_w \;\in\; \mathfrak{g}^v_{\Gamma,\mathrm{std}},
\qquad \forall\, w\in \mathcal{N}_{v,j}.
\end{equation}

If the above condition holds for all $j\ge 1$, then the standard
and free reduced dynamical Lie algebras coincide, namely
\begin{equation}
\label{eq:ReducedDLAEquality}
\mathfrak{g}^v_{\Gamma,\mathrm{std}}
=\mathfrak{g}^v_{\Gamma,\mathrm{free}}.
\end{equation}
\end{thm}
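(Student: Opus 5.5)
The plan is to take Theorem~\ref{thm:CoolDLAElements} as the starting point. Its hypotheses (i) and (ii) coincide with hypotheses (i) and (ii) here. So we may assume that for every $j\ge 1$ and every $\mathbf{a}\in\mathbb{Z}_2^j$ the uniform sum $X_{\mathcal{C}^v_{j,\mathbf{a}}}$ lies in $\mathfrak{g}^v_{\Gamma,\mathrm{std}}$. The remaining task is to isolate a single $iX_w$ from these sums using hypothesis (iii). This splits naturally into two steps.

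\textbf{Step 1: the set algebra.} Fix $j$ and work inside the real span of the commuting operators $\{iX_u : u\in\mathcal{N}_{v,j}\}$. Since these operators are linearly independent, $X_{\mathcal{C}}$ can be identified with the indicator vector $\mathbf{1}_{\mathcal{C}}\in\mathbb{R}^{\mathcal{N}_{v,j}}$. Intersections of the sets $\mathcal{C}^v_{j,\mathbf{a}}$ are not directly available as linear combinations, so we need a product-type operation. I would generate one from the Lie bracket with $\mathcal{Z}_{\widehat v}$, or better, with the single-site terms $iZ_u$ and $iZ_uZ_{u'}$ produced along the way. The key identities are
\[
[iX_u,iZ_u]=2iY_u\cdot(\pm1), \qquad \operatorname{ad}_{iX_u}^2(iZ_u)=-4\,iZ_u .
\]
From these, the double commutator $[X_{\mathcal{C}},[X_{\mathcal{C}'},\,\cdot\,]]$, applied to a suitable diagonal operator $D=i\sum_u c_u Z_u(\cdots)$, picks out only the sites $u\in\mathcal{C}\cap\mathcal{C}'$. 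Bracketing the result back with $X$-type elements should return something proportional to $X_{\mathcal{C}\cap\mathcal{C}'}$.

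A more robust route is to avoid building $D$ by hand. Instead, note that the associative algebra generated by the commuting semisimple operators $\operatorname{ad}_{X_{\mathcal{C}}}$ contains the spectral projectors. Concretely, $\operatorname{ad}_{X_{\mathcal C}}$ acts on a Pauli string with eigenvalue $\pm 2i\cdot|\{u\in\mathcal{C}: \text{string anticommutes with } X_u\}|$-type combinations. Polynomials in several such operators therefore act as multiplication by functions of the overlap pattern. Applying these projectors to $\mathcal{Z}_{\widehat v}$ (which contains each $iZ_w$, $w\in\mathcal{N}_{v,1}$, and each edge term) isolates the terms supported at prescribed intersections of the sets $\mathcal{C}$.

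\textbf{Step 2: separation.} By (iii), for distinct $w,w'\in\mathcal{N}_{v,j}$ there is a set $\mathcal{C}^v_{j,\mathbf a}$ containing $w$ but not $w'$. Intersecting these sets over all $w'\ne w$, together with $\mathcal{N}_{v,j}=\bigcup_{\mathbf a}\mathcal{C}^v_{j,\mathbf a}$ itself, gives exactly $\{w\}$. Step~1 then yields $iX_w\in\mathfrak{g}^v_{\Gamma,\mathrm{std}}$. Carrying this out layer by layer, inductively in $j$, lets us use the already isolated $iX_u$ from earlier layers as tools.

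\textbf{Step 3: equality of the algebras.} If (iii) holds for all $j$, then every $iX_w$ with $w\in V(\Gamma_v)$ lies in $\mathfrak{g}^v_{\Gamma,\mathrm{std}}$. We then peel off the individual $Z$ terms of $\mathcal{Z}_{\widehat v}$. One computes $\operatorname{ad}_{iX_w}^2\mathcal{Z}_{\widehat v}=-4\cdot(\text{terms of }\mathcal{Z}_{\widehat v}\text{ involving }w)$. Iterating over the endpoints $w$, $w'$, and using further polynomials in the commuting operators $\operatorname{ad}^2_{iX_w}$ to project out the unwanted terms, isolates each $iZ_wZ_{w'}$ and each $iZ_w$ with $(v,w)\in E$. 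This gives the generators of $\mathfrak{g}^v_{\Gamma,\mathrm{free}}$. The reverse inclusion is immediate because $\mathcal{X}_{\widehat v}$ and $\mathcal{Z}_{\widehat v}$ are sums of the free generators.

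\textbf{Main obstacle.} I expect the hardest part to be the intersection step in Step~1: producing $X_{\mathcal{C}\cap\mathcal{C}'}$ from $X_{\mathcal{C}}$ and $X_{\mathcal{C}'}$ inside the Lie algebra, since only linear combinations and brackets are available. My first attempt would be the projector argument. Write $P_S$ for the projector onto Pauli strings whose $X$-anticommutation set restricted to $\mathcal{N}_{v,j}$ is exactly $S$. Express $P_S$ as a polynomial in the operators $\operatorname{ad}^2_{X_{\mathcal{C}^v_{j,\mathbf a}}}$, and apply it to iterated brackets of $\mathcal{Z}_{\widehat v}$ and $\mathcal{X}_{\widehat v}$ to obtain terms localized on single sites. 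Bracketing these with $\mathcal{X}_{\widehat v}$ should finally produce $iX_w$.
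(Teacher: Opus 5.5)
\textbf{A step in the intersection argument fails as written.} Your skeleton matches the paper's. You take the parity-class sums $X_{\mathcal C^v_{j,\mathbf a}}$ from Theorem~\ref{thm:CoolDLAElements}, use (iii) to write $\{w\}$ as an intersection of such classes, and realize intersections inside $\mathfrak g^v_{\Gamma,\mathrm{std}}$; the paper does the last part via Lemma~\ref{lem:IntersectionFromParity}. Your Step~3, which the paper leaves implicit, is correct: $\operatorname{ad}^2_{iX_w}\operatorname{ad}^2_{iX_{w'}}(\mathcal Z_{\widehat v})=16\,iZ_wZ_{w'}$ for an edge $(w,w')$ of $\Gamma_v$.

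The problem is the step you yourself flagged. Both versions of Step~1 end by bracketing a localized $Z$-type operator with $X$-type elements (or with $\mathcal X_{\widehat v}$) to recover $X_{\mathcal C\cap\mathcal C'}$ or $iX_w$. This cannot work. For a Pauli string $P$, $[iX_x,iP]$ is either $0$ or a multiple of the string obtained from $P$ by exchanging $Y\leftrightarrow Z$ at site $x$. Hence brackets with $X$-type elements, and all polynomials in them (your projectors $P_S$ included), preserve the set of sites carrying $Y$ or $Z$. The pieces of $\mathcal Z_{\widehat v}$ you localize have nonempty such support, while $iX_w$ has empty support, so no sequence of these operations connects them. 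To return to $X$-type operators, a $Z$-type element must act as the adjoint operator, and controlling the operator-valued coefficient this produces is the real content of the step.

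A separate, smaller issue: the claim that $P_S$ is a polynomial in the $\operatorname{ad}^2_{X_{\mathcal C}}$ is unjustified. Pauli strings are not eigenvectors of $\operatorname{ad}_{X_{\mathcal C}}$. Its eigenvalues $2i\sum_{x\in S\cap\mathcal C}\epsilon_x$, with $\epsilon_x=\pm1$, can vanish when $|S\cap\mathcal C|$ is even, so different $S$ need not be spectrally separated. For edge terms you do not need projectors anyway; a single $\operatorname{ad}^2_{X_{\mathcal C}}$ already localizes.

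\textbf{The missing idea: act adjointly with the localized diagonal operator on an $X$-sum, then remove the coefficient with a parity filter.} For $j\ge2$, using $\mathcal X_{\widehat v,j-1},\mathcal X_{\widehat v,j}$ from Theorem~\ref{thm:DistDLAElements}, one has
\[
D_j:=\tfrac1{16}\operatorname{ad}^2_{\mathcal X_{\widehat v,j-1}}\operatorname{ad}^2_{\mathcal X_{\widehat v,j}}(\mathcal Z_{\widehat v,2})=i\sum_{\substack{(x,y)\in E\\ x\in\mathcal N_{v,j},\ y\in\mathcal N_{v,j-1}}}Z_xZ_y .
\]
Set $D_{\mathcal C}:=-\tfrac14\operatorname{ad}^2_{X_{\mathcal C}}(D_j)$, which keeps only the terms with $x\in\mathcal C$; for $j=1$, use $\mathcal Z_{\widehat v,1}$ instead. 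Since $D_{\mathcal C}$ is diagonal, $\operatorname{ad}^2_{D_{\mathcal C}}(iX_x)=-4A_x^2\,iX_x$ for $x\in\mathcal C$, where $A_x=\sum_{y\in\mathcal N_{v,j-1},\,(x,y)\in E}Z_y$, and $\operatorname{ad}^2_{D_{\mathcal C}}(iX_x)=0$ for $x\in\mathcal N_{v,j}\setminus\mathcal C$.

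Hypothesis~(i) makes the number of summands of $A_x$ odd, so $A_x^2$ has only odd-square eigenvalues. Take a polynomial $p$ with $p(0)=0$ and $p(-4b^2)=1$ for odd $b$ (the filter $\mathcal O_m$ from the proof of Theorem~\ref{thm:CoolEltsInStdDLA}). Then $p(\operatorname{ad}^2_{D_{\mathcal C}})(X_{\mathcal C'})=X_{\mathcal C\cap\mathcal C'}$ for any $\mathcal C,\mathcal C'\subseteq\mathcal N_{v,j}$ whose uniform sums lie in the algebra. This is the same mechanism as in Lemma~\ref{lem:IntersectionFromParity}, where (ii) supplies the odd count.

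The parity is essential, not cosmetic. With an even count, $A_x^2$ has eigenvalue $0$, and sending it to $1$ needs a constant term, which reinserts all of $X_{\mathcal C'}$. Your proposal uses (i) and (ii) only through Theorem~\ref{thm:CoolDLAElements}; they must enter precisely here. With this intersection operation in place, your Step~2 goes through as written.
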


\begin{rmk}
    \label{rmk:ExtensionOfResults}
    The statements of Theorems~\ref{thm:DistDLAElements},
\ref{thm:CoolDLAElements} and ~\ref{thm:ParitySeparationImpliesLocalX} are local in nature, in the sense that they depend only on the availability of a single-site Pauli-$X$ generator at a chosen vertex.
Consequently, all conclusions remain valid with the marked vertex $v$ replaced
by \emph{any} vertex $w\in V(\Gamma)$ satisfying
$iX_w \in \mathfrak{g}_{\Gamma,\mathrm{std}}$.

Likewise, in the reduced setting, all constructions and containment results
continue to hold with $v$ replaced by any vertex
$w\neq v$ for which the standard reduced  dynamical Lie algebra
$\mathfrak{g}^{\,v}_{\Gamma,\mathrm{std}}$ contains the element $iX_w$.
\end{rmk}

\begin{obs}
The constructions in Theorem~\ref{thm:CoolDLAElements} and Remark~\ref{rmk:ExtensionOfResults}
can be combined to systematically isolate individual vertices within a
fixed distance layer and thereby establish the containment of all
single-site Pauli-$X$ operators in the standard reduced dynamical Lie
algebra. This is illustrated in Example \ref{ex:ExpFamilyTrees2}
\end{obs}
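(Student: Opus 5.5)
We prove Theorem~\ref{thm:ParitySeparationImpliesLocalX} by induction on the layer index $j$, using Theorem~\ref{thm:CoolDLAElements} as the main input. Hypotheses (i) and (ii) are exactly those of Theorem~\ref{thm:CoolDLAElements}. Hence $\mathfrak{g}^v_{\Gamma,\mathrm{std}}$ contains every class operator $X_{\mathcal{C}^v_{j,\mathbf{a}}}$ for all $j\ge 1$ and $\mathbf{a}\in\mathbb{Z}_2^j$. The task is then purely combinatorial and algebraic: from these sums of commuting Pauli-$X$ terms and the problem generator $\mathcal{Z}_{\widehat{v}}$, we must carve out individual $iX_w$ using the separation hypothesis (iii).

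Step 1 is to extract the Pauli-$X$ operator supported on an intersection of two class sets $\mathcal{C}=\mathcal{C}^v_{j,\mathbf{a}}$ and $\mathcal{C}'=\mathcal{C}^v_{j,\mathbf{b}}$ in the same layer. All $X_w$ commute, so commutators among the $X_{\mathcal{C}}$ vanish. We therefore have to pass through $Z$-type elements. Our plan is to reuse the mechanism behind Theorem~\ref{thm:CoolDLAElements}, where $X_{\mathcal{C}}$ was presumably obtained by nested commutators of the form $[X_{\mathcal{C}},[X_{\mathcal{C}},\mathcal{Z}_{\widehat{v}}]]$. Conjugating $\mathcal{Z}_{\widehat{v}}$ by $X_{\mathcal{C}}$ and then by $X_{\mathcal{C}'}$ produces terms whose $Z$-support is weighted by membership in both sets. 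Bracketing such a term back against the $Z_u$-type terms attached to the previous layer, with parity again controlled by (i) and (ii), should yield $i\sum_{w\in\mathcal{C}\cap\mathcal{C}'}X_w$. Set differences are then obtained by subtraction, since $X_{\mathcal{C}}-X_{\mathcal{C}\cap\mathcal{C}'}$ is supported on $\mathcal{C}\setminus\mathcal{C}'$.

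Step 2 uses (iii) to isolate single vertices. For a fixed $w\in\mathcal{N}_{v,j}$, hypothesis (iii) gives, for each $w'\neq w$ in the layer, a class set containing $w$ but not $w'$. Intersecting these finitely many class sets gives exactly $\{w\}$, so iterating Step 1 yields $iX_w\in\mathfrak{g}^v_{\Gamma,\mathrm{std}}$. Remark~\ref{rmk:ExtensionOfResults} can be used to streamline this: once some $iX_u$ is available, $u$ can serve as a new base vertex for the local constructions.

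Step 3 establishes the equality. Once every $iX_w$ with $w\neq v$ lies in $\mathfrak{g}^v_{\Gamma,\mathrm{std}}$, we recover the local $Z$-terms. The element $[iX_w,\mathcal{Z}_{\widehat{v}}]$ is a combination of $Y_wZ_{w'}$ for edges $ww'$ in $\Gamma_v$, plus $Y_w$ if $w$ is adjacent to $v$. Bracketing again with $iX_w$ isolates $i\big(\sum_{w'}Z_wZ_{w'}+\epsilon Z_w\big)$, the part of $\mathcal{Z}_{\widehat{v}}$ touching $w$. Combining these pieces over different vertices by inclusion--exclusion, and applying single-site rotations generated by the available $iX_w$, separates the individual edge terms $iZ_wZ_{w'}$ and field terms $iZ_w$. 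This is the standard argument that uniform $Z$-sums plus all local $X$ generate the free algebra, and it gives $\mathfrak{g}^v_{\Gamma,\mathrm{free}}\subseteq\mathfrak{g}^v_{\Gamma,\mathrm{std}}$. The reverse inclusion holds trivially, because $\mathcal{X}_{\widehat{v}}$ and $\mathcal{Z}_{\widehat{v}}$ are sums of the free generators.

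We expect Step 1, producing intersection operators from class operators, to be the main obstacle. The parity bookkeeping there must ensure that cross terms from vertices in only one of the sets cancel rather than survive with odd multiplicity. The first thing to try is the structure of the proof of Theorem~\ref{thm:CoolDLAElements}: induct on $j$, first isolate intersections of class sets in layer $j-1$, then propagate them one layer outward along the edges counted in (ii).
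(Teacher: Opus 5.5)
You set out to prove Theorem~\ref{thm:ParitySeparationImpliesLocalX}. Its load-bearing step is your Step~1, which is not established, and the mechanism you sketch does not produce the intersection operator.

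The element $\operatorname{ad}^2_{X_{\mathcal{C}'}}\operatorname{ad}^2_{X_{\mathcal{C}}}(\mathcal{Z}_{\widehat v,2})$ pairs $Z_w$, for $w\in\mathcal{C}\cap\mathcal{C}'$, with $Z_x$ for the neighbours $x\notin\mathcal{C}\cup\mathcal{C}'$ of $w$. It also contains non-diagonal $Y_wY_x$ pieces from edges inside $\mathcal{N}_{v,j}$. Filtering $X$-terms with it therefore tests the parity of $|N(w)\setminus(\mathcal{C}\cup\mathcal{C}')|$, about which (i)--(ii) say nothing. There is also a structural mismatch: Step~1 leans on (ii) for \emph{both} inputs, but Step~2 feeds it running intersections, which are not parity classes.

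The missing idea is to intersect asymmetrically, by \emph{filtering} an arbitrary $X$-sum through the defining data of a single class. Let $\mathcal{Z}_{\mathbf a'}$ be the diagonal operator of \eqref{eq:XLevelnPrep} with $n=j-1$, i.e.\ $i\sum Z_uZ_w$ over edges from $\mathcal{C}^v_{j-1,\mathbf a'}$ to $\mathcal{N}_{v,j}$. The outer $\operatorname{ad}^2_{X_{\widehat v,>n}}$ there removes the $Y_uY_{u'}$ terms that would otherwise spoil the filter. For $w\in\mathcal{N}_{v,j}$ one has $\operatorname{ad}^2_{\mathcal{Z}_{\mathbf a'}}(iX_w)=-4\bigl(\sum_{u\in N(w)\cap\mathcal{C}^v_{j-1,\mathbf a'}}Z_u\bigr)^2\,iX_w$. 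Hence $\mathcal{O}_m(\operatorname{ad}^2_{\mathcal{Z}_{\mathbf a'}})$ multiplies each $iX_w$ by the indicator that this count is odd. By (ii), odd is the same as nonzero, i.e.\ membership in $\mathcal{C}^v_{j,(\mathbf a',0)}\cup\mathcal{C}^v_{j,(\mathbf a',1)}$. Composing with the degree-parity filter built from $\mathcal{Z}_{\widehat v,2}$ then fixes $a_j$. These filters act on each $X_w$ separately, so applying them to $i\sum_{w\in S}X_w$ for \emph{any} $S\subseteq\mathcal{N}_{v,j}$ gives $i\sum_{w\in S\cap\mathcal{C}^v_{j,\mathbf a}}X_w$. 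This is the mechanism behind Lemma~\ref{lem:IntersectionFromParity} (Case~1), and it is what makes your Step~2 iteration legitimate.

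Second, even with Step~1 repaired, your argument only covers graphs satisfying (iii), and it treats Remark~\ref{rmk:ExtensionOfResults} as optional. The Observation is precisely about combining the class construction with that Remark, which becomes indispensable when (iii) fails. That is the situation in Example~\ref{ex:ExpFamilyTrees2}: $\mathcal{C}^v_{2,\mathbf{11}}=\{v_{2,1},v_{2,2}\}$, $\mathcal{C}^v_{3,\mathbf{110}}$, $\mathcal{C}^v_{3,\mathbf{111}}$ and $\mathcal{C}^v_{4,\mathbf{1101}}$ each contain two vertices with identical parity profiles, so no intersection of $v$-classes separates them.

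The missing step is rerooting. Since $\mathcal{N}_{v,5}=\{v_{5,1}\}$, Theorem~\ref{thm:DistDLAElements} gives $iX_{v_{5,1}}\in\mathfrak{g}^v_{\Gamma,\mathrm{std}}$. Remark~\ref{rmk:ExtensionOfResults} then lets you rerun the layer and class constructions with $v_{5,1}$ as base point. This produces new separating sets, e.g.\ $\operatorname{dist}(v_{5,1},v_{2,1})=3\neq 5=\operatorname{dist}(v_{5,1},v_{2,2})$, which are intersected with the $v$-classes by the same termwise filtering. In general this has to be iterated: add layers and classes relative to each newly isolated vertex, and check the parity hypotheses relative to each new base point (automatic for trees).

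Your Step~3 is correct and spells out what the paper leaves implicit. For an edge $ww'$ of $\Gamma_v$, $\operatorname{ad}^2_{iX_{w'}}\operatorname{ad}^2_{iX_w}(\mathcal{Z}_{\widehat v})=16\,iZ_wZ_{w'}$, and the fields $iZ_w$ then follow by subtracting these edge terms from $-\tfrac14\operatorname{ad}^2_{iX_w}(\mathcal{Z}_{\widehat v})$.
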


\begin{ex}
\label{ex:ExpFamilyTrees2}
In order to show the practical importance of the proposed theorems and how they give the distinguished Pauli-$X$ gates in the reduced DLAs, we give a full analysis of the graph $\Gamma'$ depicted in Figure \ref{fig:GammaEx}. First, we give the full collection of subsets $\mathcal{N}_{v, j}$ (see Definition \ref{defn:vertexNSubsets}): 
\[
\begin{aligned}
    \mathcal{N}_{v,1} &= \{v_{1, 1}\}\\
    \mathcal{N}_{v, 2} &= \{v_{2, 1}, v_{2, 2}\}\\
    \mathcal{N}_{v, 3}& = \{v_{3, 1}, v_{3, 2}, v_{3, 3}, v_{3, 4}\};\\
    \mathcal{N}_{v, 4} &= \{v_{4, 1}, v_{4, 2}, v_{4, 3}, v_{4, 4}\};\\ \mathcal{N}_{v, 5} &= \{v_{5, 1}\}.\\
\end{aligned} 
\]

Next, we list out the subsets of vertices $\mathcal{C}^{v}_{j, \textbf{a}}$ according to Definition \ref{defn:vertexSubsets}  for the graph under consideration: 
\[
\begin{aligned}
\mathcal{C}^{v}_{1, \mathbf{a} = \mathbf{1}} &= \{v_{1, 1}\};\\
\mathcal{C}^{v}_{2, \mathbf{a} = \mathbf{11}} &= \{v_{2, 1}, v_{2, 2}\};\\
\mathcal{C}^{v}_{3, \mathbf{a} = \mathbf{110}} &= \{v_{3, 2}, v_{3, 4}\};\\
\mathcal{C}^{v}_{3, \mathbf{a} = \mathbf{111}} &= \{v_{3, 1}, v_{3, 3}\}; \\
\mathcal{C}^{v}_{4, \mathbf{a}=\mathbf{1101}}&=\{v_{4,3}, v_{4, 4}\}; \\
\mathcal{C}^{v}_{4, \mathbf{a}=\mathbf{1110}} &= \{v_{4,1}\}; \\
\mathcal{C}^{v}_{4, \mathbf{a}=\mathbf{1111}} &= \{v_{4,2}\};\\
\mathcal{C}^{v}_{5, \mathbf{a}=\mathbf{11101}} &= \{v_{5,1}\}.
\end{aligned} 
\]

For the four remaining subsets $\mathcal{C}^{v}_{2, \mathbf{a} = \mathbf{11}}, \mathcal{C}^{v}_{3, \mathbf{a} = \mathbf{110}},\mathcal{C}^{v}_{3, \mathbf{a} = \mathbf{111}}$ and $\mathcal{C}^{v}_{4, \mathbf{a}=\mathbf{1101}}$ of cardinality $2$, we can use the observation from Remark \ref{rmk:ExtensionOfResults} by choosing node $v_{5,1}$ as the marked node for the reduced settings, to partition the pair of vertices according to the respective distances from $v_{5,1}$. 

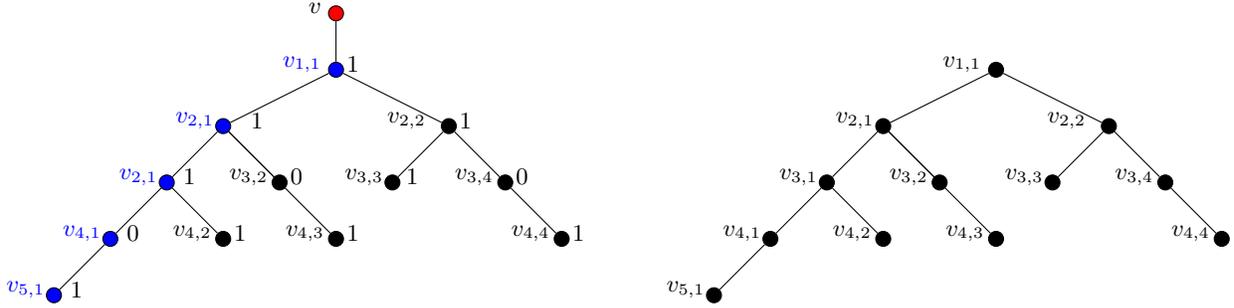
\begin{figure}[h!]
    \centering
    \begin{tikzpicture}[scale=0.75]
        \node[circle, draw, fill=red, inner sep=2pt] (M) at (2,4) {};
         \node at (1.63,4.1) {$v$};
        \node[circle, draw, fill=blue, inner sep=2pt] (N) at (2,3) {};
        \node at (2.3,3.1) {$1$};
        \node at (1.4,3.1) {\color{blue}{$v_{1,1}$}};
        \node[circle, draw, fill=black, inner sep=2pt] (O) at (4,2) {};
        \node at (4.3,2.1) {$1$};
        \node at (3.25,2.1) {$v_{2, 2}$};
        \node[circle, draw, fill=black, inner sep=2pt] (P) at (5,1) {};
        \node at (5.3,1.1) {$0$};  
        \node at (4.45,1.1) {$v_{3, 4}$};
        \node[circle, draw, fill=black, inner sep=2pt] (P1) at (6,0) {};
        \node at (6.3,0.1) {$1$};
        \node at (5.45,0.1) {$v_{4, 4}$};
        \node[circle, draw, fill=black, inner sep=2pt] (O1) at (3,1) {};
        \node at (3.35,1.1) {$1$};
        \node at (2.5,1.1) {$v_{3, 3}$};
        \node[circle, draw, fill=blue, inner sep=2pt] (C) at (0,2) {};
        \node at (0.6,2.1) {$1$};
        \node at (-0.5,2.1) {\color{blue}{$v_{2,1}$}};
        \node[circle, draw, fill=blue, inner sep=2pt] (E) at (-1,1) {};
        \node at (-0.6,1.1) {$1$};
        \node at (-1.5,1.1) {\color{blue}{$v_{3,1}$}};
        \node[circle, draw, fill=black, inner sep=2pt] (F) at (1,1) {};
        \node at (1.3,1.1) {$0$};
        \node at (0.45,1.1) {$v_{3, 2}$};
        \node[circle, draw, fill=blue, inner sep=2pt] (G) at (-2,0) {};
        \node at (-1.6,0.1) {$0$};
        \node at (-2.5,0.1) {\color{blue}{$v_{4,1}$}};
        \node[circle, draw, fill=black, inner sep=2pt] (H) at (0,0) {};
        \node at (0.3,0.1) {$1$};
        \node at (-0.55,0.1) {$v_{4, 2}$};
        \node[circle, draw, fill=black, inner sep=2pt] (I) at (2,0) {};
        \node at (2.3,0.1) {$1$};
        \node at (1.45,0.1) {$v_{4, 3}$};
        \node[circle, draw, fill=blue, inner sep=2pt] (J) at (-3,-1) {};
        \node at (-2.6,-0.9) {$1$};
        \node at (-3.5,-0.9) {\color{blue}{$v_{5,1}$}};
        \draw (M) -- (N);
        \draw (N) -- (C);
        \draw (N) -- (O);
        \draw (O) -- (P);
        \draw (C) -- (E);
        \draw (C) -- (F);
        \draw (C) -- (I);
        \draw (E) -- (G);
        \draw (E) -- (H);
        \draw (G) -- (J);
        \draw (P) -- (P1);
        \draw (O) -- (O1);
    \end{tikzpicture}\hspace{0.3in}
      \begin{tikzpicture}[scale=0.75]

        \node[circle, draw, fill=black, inner sep=2pt] (N) at (2,3) {};
        \node at (1.4,3.1) {$v_{1,1}$};
        \node[circle, draw, fill=black, inner sep=2pt] (O) at (4,2) {};
        \node at (3.25,2.1) {$v_{2, 2}$};
        \node[circle, draw, fill=black, inner sep=2pt] (P) at (5,1) {};
        \node at (4.45,1.1) {$v_{3, 4}$};
        \node[circle, draw, fill=black, inner sep=2pt] (P1) at (6,0) {};
        \node at (5.45,0.1) {$v_{4, 4}$};
        \node[circle, draw, fill=black, inner sep=2pt] (O1) at (3,1) {};
        \node at (2.5,1.1) {$v_{3, 3}$};
        \node[circle, draw, fill=black, inner sep=2pt] (C) at (0,2) {};
        \node at (-0.5,2.1) {$v_{2,1}$};
        \node[circle, draw, fill=black, inner sep=2pt] (E) at (-1,1) {};
        \node at (-1.5,1.1) {$v_{3,1}$};
        \node[circle, draw, fill=black, inner sep=2pt] (F) at (1,1) {};
        \node at (0.45,1.1) {$v_{3, 2}$};
        \node[circle, draw, fill=black, inner sep=2pt] (G) at (-2,0) {};
        \node at (-2.5,0.1) {$v_{4,1}$};
        \node[circle, draw, fill=black, inner sep=2pt] (H) at (0,0) {};
        \node at (-0.55,0.1) {$v_{4, 2}$};
        \node[circle, draw, fill=black, inner sep=2pt] (I) at (2,0) {};
        \node at (1.45,0.1) {$v_{4, 3}$};
        \node[circle, draw, fill=black, inner sep=2pt] (J) at (-3,-1) {};
        \node at (-3.5,-0.9) {$v_{5,1}$};
        \draw (N) -- (C);
        \draw (N) -- (O);
        \draw (O) -- (P);
        \draw (C) -- (E);
        \draw (C) -- (F);
        \draw (C) -- (I);
        \draw (E) -- (G);
        \draw (E) -- (H);
        \draw (G) -- (J);
        \draw (P) -- (P1);
        \draw (O) -- (O1);
    \end{tikzpicture}
    \caption{The graph $\Gamma'$, where each vertex is labeled by its index (left) and by the parity of its degree (right), and the reduced graph $\Gamma'_v$, obtained by removing the vertex $v$ together with all edges incident to it.}

    \label{fig:GammaEx}
\end{figure}
\end{ex}

We conclude this section with a result establishing sufficient conditions for the standard and free reduced dynamical Lie algebra to coincide with the full special unitary algebra on the reduced Hilbert space.

\begin{thm}
\label{thm:FreeReducedFullStructure}
Let $v \in V(\Gamma)$ be such that the reduced graph $\Gamma_v$ is connected and is not bipartite or a cycle. Then the free reduced dynamical Lie algebra $\mathfrak{g}^v_{\Gamma,\mathrm{free}}$ is simple and coincides with the special unitary algebra on the reduced Hilbert space:
\begin{equation}
    \label{eq:FreeReducedIsSU}
    \mathfrak{g}^v_{\Gamma,\mathrm{free}}
    \cong \mathfrak{su}\!\left(W_v\right)
    \cong \mathfrak{su}\!\left(2^{\,n-1}\right).
\end{equation}
\end{thm}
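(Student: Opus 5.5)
The plan is to reduce to the known classification of free DLAs for graph Hamiltonians, and then show that the additional single-site generators $iZ_w$ for $w\in\mathcal{N}_{v,1}$ break the only remaining symmetry. Write $H:=\Gamma_v$ and $m:=n-1=|V(H)|$. By definition, $\mathfrak{g}^v_{\Gamma,\mathrm{free}}$ is generated by $\mathfrak{g}_{H,\mathrm{free}}=\langle iX_w, iZ_wZ_{w'}\rangle_{\mathrm{Lie}}$ together with the elements $iZ_w$ for $w\in\mathcal{N}_{v,1}$. For a connected graph $H$ that is neither bipartite nor a cycle (nor a path, which is bipartite), the free DLA is known (cf.\ \cite{KLFCCZ} and the classification of $XX/ZZ$-type Pauli Lie algebras) to be the full commutant of the global bit-flip $P:=\prod_{w\in V(H)}X_w$. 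That is,
\[
\mathfrak{g}_{H,\mathrm{free}}\cong\mathfrak{su}(2^{m-1})\oplus\mathfrak{su}(2^{m-1}),
\]
acting on the $\pm1$ eigenspaces of $P$. Equivalently, it is spanned by all Pauli strings with an even number of $Z/Y$ factors, which commute with $P$. I will take this as the input result. If it must be reproved, the route is as follows: first, the $ZZ$ terms together with the $X$ terms along an odd cycle generate $iY_aY_b$ and $iZ_aZ_b$ for every pair $a,b$ (connectivity propagates pairs along paths, and the odd cycle plus a branch vertex rules out the matchgate/free-fermion case). Second, products of these pairs produce all $P$-even Pauli strings.

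Next I adjoin the $iZ_w$. Since $\Gamma$ is connected and $m\ge 1$, the set $\mathcal{N}_{v,1}$ is nonempty, so at least one $iZ_{w_0}$ lies in the algebra. This element anticommutes with $P$, so it is $P$-odd. Its commutator with a $P$-even Pauli string $iZ_{w_0}Z_a$ gives $[iZ_{w_0},iZ_{w_0}Z_a]=0$, which is useless. Instead I use $[iZ_{w_0}, iX_{w_0}]\propto iY_{w_0}$, and then bracket with even strings such as $iY_{w_0}Z_a$ (available from the first step) to obtain $iZ_a$-type and other odd strings. An arbitrary Pauli string $\sigma$ with an odd number of $Z/Y$ factors can be written, up to a scalar, as $Z_{w_0}\cdot\tau$ or $Y_{w_0}\cdot\tau$ with $\tau$ even. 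In fact $[iZ_{w_0}, i\tau']\propto i\sigma$ for a suitable even $\tau'$ that anticommutes with $Z_{w_0}$, namely $\tau'=X_{w_0}\tau$ or $Y_{w_0}\tau$ adjusted. So every odd Pauli string lies in the algebra. Combined with all even strings, this gives every non-identity Pauli string on $m$ qubits, i.e.\ $\mathfrak{su}(2^m)=\mathfrak{su}(W_v)$. Simplicity is then immediate.

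For an abstract check of the last step, observe that $\mathfrak{g}_{H,\mathrm{free}}$ is the maximal subalgebra $\mathfrak{s}(\mathfrak{u}(V_+)\oplus\mathfrak{u}(V_-))$ minus its center. The algebra generated by this and a $P$-odd element is invariant under the even algebra, and the odd part $\mathrm{Hom}(V_+,V_-)\oplus\mathrm{Hom}(V_-,V_+)$ is irreducible as a real module. Hence a single nonzero odd element generates the whole odd part, and $[\text{odd},\text{odd}]$ recovers the trace-one-difference element $iP$. The result is all of $\mathfrak{su}(2^m)$.

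The main obstacle is the first step: correctly citing, or establishing, that the free DLA of a connected non-bipartite non-cycle graph is exactly the full $P$-commutant (including the center-free traceless part), rather than something smaller such as an $\mathfrak{so}$ or $\mathfrak{sp}$ subalgebra. I would first try to invoke the classification in \cite{KLFCCZ}. Failing that, I would argue directly via the pairwise $Y_aY_b$, $Z_aZ_b$ generation sketched above.
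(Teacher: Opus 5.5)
\textbf{Overall verdict.} Your argument is correct, but only through your ``abstract check''. Your architecture matches the paper's: take $\mathfrak{k}:=\mathfrak{g}_{\Gamma_v,\mathrm{free}}\cong\mathfrak{su}(2^{n-2})\oplus\mathfrak{su}(2^{n-2})$ from \cite{KLFCCZ}, i.e.\ the $P$-even Pauli strings other than $I$ and $P$ (the paper's $\tau$), and then adjoin the $P$-odd element $iZ_{w_0}$.

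\textbf{The concrete Pauli-string step fails.} As written it does not work. First, $[iY_{w_0},iY_{w_0}Z_a]=0$. Second, a single bracket $[iZ_{w_0},i\tau']$ only ever produces strings whose factor at $w_0$ is $X$ or $Y$. So odd strings with $I$ or $Z$ at $w_0$, for example $iZ_a$ with $a\neq w_0$, are never reached this way. Third, for $\sigma=Y_{w_0}\prod_{u\neq w_0}X_u$ the required $\tau'$ is $P$ itself, which is not in $\mathfrak{k}$.

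\textbf{The abstract check is a complete proof, with two fixes.} The relevant module is the real space $\mathfrak{p}$ of skew-Hermitian $P$-odd operators. This is $\mathrm{Hom}_{\mathbb{C}}(V_+,V_-)\cong V_+^{*}\otimes V_-$ regarded as a real $\mathfrak{k}$-module. The sum $\mathrm{Hom}(V_+,V_-)\oplus\mathrm{Hom}(V_-,V_+)$ you wrote is its complexification, and that sum is reducible.

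You should also state when $\mathfrak{p}$ is $\mathbb{R}$-irreducible. This holds when $\dim V_\pm\ge 3$, because then $V_+^{*}\otimes V_-\not\cong V_+\otimes V_-^{*}$, so the module admits no real form. For $\dim V_\pm=2$ it does have a real form and splits, via $\mathfrak{su}(2)\oplus\mathfrak{su}(2)\cong\mathfrak{so}(4)$, and a single odd element need not generate $\mathfrak{su}(4)$. The caveat is harmless here: $\Gamma_v$ non-bipartite and not a cycle forces $|V(\Gamma_v)|\ge 4$, so $\dim V_\pm\ge 8$. With this in place, $\operatorname{ad}(\mathfrak{k})$-stability gives $\mathfrak{p}\subseteq\mathfrak{g}^v_{\Gamma,\mathrm{free}}$, and $[\mathfrak{p},\mathfrak{p}]$ supplies $iP$.

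\textbf{How the paper's route differs.} The paper's second step is purely combinatorial. It first builds $iZ_{w'}$ at every site via $[iX_wX_{w'},[iZ_w,iY_wY_{w'}]]$ with $w\in\mathcal{N}_{v,1}$, so each qubit carries a full local $\mathfrak{su}(2)$. It then obtains each odd string by replacing one $Y/Z$ factor with $X$ to get an even string, and bracketing that with $iY_{w'}$ or $iZ_{w'}$.

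That route is explicit, giving concrete nested commutators, and needs no representation theory. Yours is shorter and shows that any single $P$-odd element already suffices. It also handles uniformly two cases the paper's case analysis does not address explicitly: the central element $iP$, and odd strings whose $X$-substitution equals $P$, such as $Z_{w'}\prod_{u\neq w'}X_u$.
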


\begin{cor}
\label{thm:ReducedDLAIsFullUnitary}
Let $v \in V(\Gamma)$ be such that the reduced graph $\Gamma_v$ is connected and is neither bipartite nor a cycle. Assume moreover that the containment in \eqref{eq:ERContainment} holds. Then the standard reduced dynamical Lie algebra $\mathfrak{g}^v_{\Gamma,\mathrm{std}}$ is equal to the free reduced dynamical Lie algebra $\mathfrak{g}^v_{\Gamma,\mathrm{free}}$ and, hence, is isomorphic to the special unitary Lie algebra on the reduced Hilbert space $W_v$:
\begin{equation}
    \label{eq:std_free_equality}
    \mathfrak{g}^v_{\Gamma,\mathrm{std}} = \mathfrak{g}^v_{\Gamma,\mathrm{free}}.
\end{equation}
\end{cor}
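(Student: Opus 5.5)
The corollary should follow from combining the containment \eqref{eq:ERContainment} with Theorem~\ref{thm:FreeReducedFullStructure}. The argument is a two-sided inclusion: first show that the standard reduced DLA sits inside the free reduced DLA, then use the hypothesis together with simplicity to get the reverse inclusion.

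\textbf{Easy inclusion.} The inclusion $\mathfrak{g}^v_{\Gamma,\mathrm{std}}\subseteq\mathfrak{g}^v_{\Gamma,\mathrm{free}}$ is immediate from the definitions. Both generators of the standard reduced DLA are real linear combinations of generators of the free reduced DLA:
\[
\mathcal{X}_{\widehat v}=\sum_{w\neq v} iX_w, \qquad \mathcal{Z}_{\widehat v}=\sum_{(ww')\in E(\Gamma_v)} iZ_wZ_{w'}+\sum_{(vw)\in E} iZ_w .
\]
Since a Lie algebra is closed under real linear combinations and brackets, the algebra generated by these two elements lies in $\mathfrak{g}^v_{\Gamma,\mathrm{free}}$.

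\textbf{Reverse inclusion.} Here I will read the assumed containment \eqref{eq:ERContainment} as $\mathfrak{g}_{\Gamma_v,\mathrm{free}}\subseteq\mathfrak{g}^v_{\Gamma,\mathrm{std}}$. By \eqref{eq:FreeDLAgenerators}, $\mathfrak{g}_{\Gamma_v,\mathrm{free}}$ is generated by $\{iX_w\}_{w\neq v}$ and $\{iZ_wZ_{w'}\}_{(ww')\in E(\Gamma_v)}$. Comparing with \eqref{eq:ReducedFreeDLAgenerators}, the only generators of $\mathfrak{g}^v_{\Gamma,\mathrm{free}}$ that may be missing are the single-site operators $iZ_w$ for $w\in\mathcal{N}_{v,1}$. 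There are two routes to recover them.

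\emph{Direct route.} Subtract from $\mathcal{Z}_{\widehat v}$ the element $\sum_{(ww')\in E(\Gamma_v)} iZ_wZ_{w'}$, which lies in $\mathfrak{g}_{\Gamma_v,\mathrm{free}}\subseteq\mathfrak{g}^v_{\Gamma,\mathrm{std}}$. This leaves $Z_{\partial}:=i\sum_{w\in\mathcal{N}_{v,1}}Z_w\in\mathfrak{g}^v_{\Gamma,\mathrm{std}}$. The individual $iX_u$ are also available, and
\[
[iX_u,[iX_u,Z_\partial]]=-4\,iZ_u \quad\text{for } u\in\mathcal{N}_{v,1},
\]
while the terms with $w\neq u$ commute with $X_u$. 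Hence each $iZ_u$ lies in $\mathfrak{g}^v_{\Gamma,\mathrm{std}}$, which gives $\mathfrak{g}^v_{\Gamma,\mathrm{free}}\subseteq\mathfrak{g}^v_{\Gamma,\mathrm{std}}$. This route does not need the graph-theoretic hypotheses at all.

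\emph{Simplicity route.} Alternatively, $\mathfrak{g}^v_{\Gamma,\mathrm{std}}$ contains the nonzero ideal-generating piece coming from $\mathfrak{g}_{\Gamma_v,\mathrm{free}}$. By Theorem~\ref{thm:FreeReducedFullStructure}, $\mathfrak{g}^v_{\Gamma,\mathrm{free}}\cong\mathfrak{su}(2^{n-1})$ is simple. The hypotheses that $\Gamma_v$ is connected and neither bipartite nor a cycle are what that theorem requires. In fact, by the analogous free-DLA classification applied to $\Gamma_v$, $\mathfrak{g}_{\Gamma_v,\mathrm{free}}$ is already all of $\mathfrak{su}(2^{n-1})$. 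Either way, equality follows.

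\textbf{Conclusion and main obstacle.} With both inclusions, \eqref{eq:std_free_equality} holds, and the isomorphism with $\mathfrak{su}(W_v)$ is then read off from Theorem~\ref{thm:FreeReducedFullStructure}. The main point needing care is interpreting \eqref{eq:ERContainment} correctly: it must be applied to the reduced graph $\Gamma_v$, and the single-site $Z$ terms must be recovered, which the double-commutator computation handles. All remaining steps are bookkeeping of generators.
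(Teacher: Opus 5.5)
Your direct route is correct and matches the paper's argument. The paper's proof only asserts that the containment \eqref{eq:ERContainment} forces $\mathfrak{g}^v_{\Gamma,\mathrm{std}}=\mathfrak{g}^v_{\Gamma,\mathrm{free}}$, and then reads off $\mathfrak{su}(W_v)$ from Theorem~\ref{thm:FreeReducedFullStructure}. Your step of subtracting $\sum_{(ww')\in E(\Gamma_v)} iZ_wZ_{w'}$ from $\mathcal{Z}_{\widehat v}$ and then applying $[iX_u,[iX_u,\cdot\,]]$ fills in exactly what the paper leaves implicit; Lemma~\ref{ZsplittingLemma} would serve equally well. You are also right that the equality itself uses no graph-theoretic hypothesis; those hypotheses enter only through the $\mathfrak{su}(W_v)$ identification.

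You should drop the ``simplicity route'', because it is wrong on two counts.
\begin{itemize}
\item Simplicity of $\mathfrak{g}^v_{\Gamma,\mathrm{free}}$ constrains ideals, not subalgebras. A subalgebra that contains some nonzero subalgebra need not be the whole algebra.
\item The claim that $\mathfrak{g}_{\Gamma_v,\mathrm{free}}$ is already all of $\mathfrak{su}(2^{n-1})$ is false. Every generator $iX_w$ and $iZ_wZ_{w'}$ (with $w,w'\neq v$) commutes with $\tau=\prod_{w\neq v}X_w$. Hence $\mathfrak{g}_{\Gamma_v,\mathrm{free}}$ lies in the centralizer of $\tau$, which is a proper subalgebra of $\mathfrak{su}(2^{n-1})$.
\end{itemize}
This is precisely why the single-site terms $iZ_w$ for $w\in\mathcal{N}_{v,1}$, which anticommute with $\tau$, cannot be dispensed with.
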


\section{Original and Reduced DLAs for MaxCut QAOAs on Certain Graphs}
\label{sec:DLASonSomeGraphs}
In this section, we compare the standard and reduced dynamical Lie algebras  associated with the Quantum Approximate Optimization Algorithm  for MaxCut problems on various types of graphs. Our comparison is based on the dimensions of these Lie algebras.

\subsection{Exponential vs.\ Quadratic Growth: The $k$-Armed Spider Graphs}

We start by presenting a family of graphs for which the dimension of the standard dynamical Lie algebra  grows exponentially in the number of vertices, while there exists a distinguished vertex for which the dimension of the corresponding standard \emph{reduced} DLA is only quadratic in the number of vertices. This provides a concrete and explicit separation between the full and reduced settings.

The family under consideration consists of the so-called $k$-armed spider graphs, denoted
\[
\mathcal{O}_{m_1,\dots,m_k},
\]
which were studied in~\cite{MYAZ}. Such a graph is obtained by taking a central vertex $v$ and attaching to it $k$ path graphs
\[
P_{m_1},\dots,P_{m_k},
\]
each joined to $v$ at one of its endpoints. In other words, $\mathcal{O}_{m_1,\dots,m_k}$ consists of $k$ disjoint paths whose initial vertices are identified with the central vertex $v$ (see Figure~\ref{fig:OctopusGraph}).

The reduction of $\mathcal{O}_{m_1,\dots,m_k}$ at the central vertex $v$ removes $v$ and disconnects the graph into a disjoint union of the path graphs
\[
P_{m_1}\sqcup\dots\sqcup P_{m_k}:
\]

\begin{figure}[H]
\begin{center}
    \begin{tikzpicture}[scale=0.8]

\node[circle, draw, fill=black, inner sep=2pt] (r1c11) at (12,5) {};
\node at (12,4.6) {$w_{1,2}$};

\node[circle, draw, fill=black, inner sep=2pt] (r2c9) at (9.6,4) {};
\node at (9.6,3.6) {$w_{1,1}$};

\node[circle, draw, fill=black, inner sep=2pt] (r3c1) at (0,4.2) {};
\node at (0,3.8) {$w_{3,5}$};
\node[circle, draw, fill=black, inner sep=2pt] (r3c3) at (2.4,3.8) {};
\node at (2.4,3.4) {$w_{3,4}$};
\node[circle, draw, fill=black, inner sep=2pt] (r3c4) at (3.6,3.6) {};
\node at (3.6,3.2) {$w_{3,3}$};
\node[circle, draw, fill=black, inner sep=2pt] (r3c5) at (4.8,3.4) {};
\node at (4.8,3) {$w_{3,2}$};
\node[circle, draw, fill=black, inner sep=2pt] (r3c6) at (6,3.2) {};
\node at (5.85,2.85) {$w_{3,1}$};
\node[circle, draw, fill=red, inner sep=2pt] (r3c7) at (7.2,3) {};
\node at (7.2,2.6) {$v$};

\node[circle, draw, fill=black, inner sep=2pt] (r4c6) at (5.2,2.5) {};
\node at (5.2,2.1) {$w_{4,1}$};
\node[circle, draw, fill=black, inner sep=2pt] (r4c9) at (8.8,2) {};
\node at (8.8,1.6) {$w_{2,1}$};

\node[circle, draw, fill=black, inner sep=2pt] (r5c5) at (3.2,2) {};
\node at (3.2,1.6) {$w_{4,2}$};
\node[circle, draw, fill=black, inner sep=2pt] (r5c11) at (10.4,1) {};
\node at (10.4,0.6) {$w_{2,2}$};

\node[circle, draw, fill=black, inner sep=2pt] (r6c4) at (1.2,1.5) {};
\node at (1.2,1.1) {$w_{4,3}$};
\node[circle, draw, fill=black, inner sep=2pt] (r6c5) at (-0.8,1) {};
\node at (-0.7,0.6) {$w_{4,4}$};
\node[circle, draw, fill=black, inner sep=2pt] (r6c13) at (12,0) {};
\node at (12,-0.4) {$w_{2,3}$};


\draw (r3c4) -- (r3c3);
\draw (r3c5) -- (r3c4);
\draw (r3c6) -- (r3c5);
\draw (r3c7) -- (r2c9);
\draw (r3c7) -- (r3c6);
\draw (r3c7) -- (r4c6);
\draw (r3c7) -- (r4c9);
\draw (r4c6) -- (r5c5);
\draw (r4c9) -- (r5c11);
\draw (r5c5) -- (r6c4);
\draw (r6c4) -- (r6c5);
\draw (r3c1) -- (r3c3);
\draw (r1c11) -- (r2c9);
\draw (r5c11) -- (r6c13);

\end{tikzpicture}\vspace{0.3in}

\begin{tikzpicture}[scale=0.8]

\node[circle, draw, fill=black, inner sep=2pt] (r1c11) at (12,5) {};
\node at (12,4.6) {$w_{1,2}$};

\node[circle, draw, fill=black, inner sep=2pt] (r2c9) at (9.6,4) {};
\node at (9.6,3.6) {$w_{1,1}$};

\node[circle, draw, fill=black, inner sep=2pt] (r3c1) at (0,4.2) {};
\node at (0,3.8) {$w_{3,5}$};
\node[circle, draw, fill=black, inner sep=2pt] (r3c3) at (2.4,3.8) {};
\node at (2.4,3.4) {$w_{3,4}$};
\node[circle, draw, fill=black, inner sep=2pt] (r3c4) at (3.6,3.6) {};
\node at (3.6,3.2) {$w_{3,3}$};
\node[circle, draw, fill=black, inner sep=2pt] (r3c5) at (4.8,3.4) {};
\node at (4.8,3) {$w_{3,2}$};
\node[circle, draw, fill=black, inner sep=2pt] (r3c6) at (6,3.2) {};
\node at (5.85,2.85) {$w_{3,1}$};

\node[circle, draw, fill=black, inner sep=2pt] (r4c6) at (5.2,2.5) {};
\node at (5.2,2.1) {$w_{4,1}$};
\node[circle, draw, fill=black, inner sep=2pt] (r4c9) at (8.8,2) {};
\node at (8.8,1.6) {$w_{2,1}$};

\node[circle, draw, fill=black, inner sep=2pt] (r5c5) at (3.2,2) {};
\node at (3.2,1.6) {$w_{4,2}$};
\node[circle, draw, fill=black, inner sep=2pt] (r5c11) at (10.4,1) {};
\node at (10.4,0.6) {$w_{2,2}$};

\node[circle, draw, fill=black, inner sep=2pt] (r6c4) at (1.2,1.5) {};
\node at (1.2,1.1) {$w_{4,3}$};
\node[circle, draw, fill=black, inner sep=2pt] (r6c5) at (-0.8,1) {};
\node at (-0.7,0.6) {$w_{4,4}$};
\node[circle, draw, fill=black, inner sep=2pt] (r6c13) at (12,0) {};
\node at (12,-0.4) {$w_{2,3}$};


\draw (r3c4) -- (r3c3);
\draw (r3c5) -- (r3c4);
\draw (r3c6) -- (r3c5);
\draw (r4c6) -- (r5c5);
\draw (r4c9) -- (r5c11);
\draw (r5c5) -- (r6c4);
\draw (r6c4) -- (r6c5);
\draw (r3c1) -- (r3c3);
\draw (r1c11) -- (r2c9);
\draw (r5c11) -- (r6c13);

\end{tikzpicture}
\end{center}

\caption{Left: the $4$-armed spider graph $\mathcal{O}_{2,3,4,5}$ with central vertex $v$ and $4$ path graphs $P_{2},P_{3},P_{4}$ and $P_{5}$ attached. 
Right: the reduced graph $\mathcal{O}^v_{2,3,4,5}$ obtained by removing the central vertex $v$, resulting in a disjoint union of the $4$ path graphs.}
\label{fig:OctopusGraph}

\end{figure}

Assume now that the lengths $m_1,\dots,m_k$ are pairwise distinct. It was established in~\cite[Corollary~26]{MYAZ} that in this case the standard and free DLAs of the full spider graph coincide, namely
\[
\mathfrak{g}_{\mathcal{O}_{m_1,\dots,m_k},\mathrm{std}}
=
\mathfrak{g}_{\mathcal{O}_{m_1,\dots,m_k},\mathrm{free}}.
\]

Consequently, Theorem~1 of~\cite{KLFCCZ} allows to conclude that $\dim(\mathfrak{g}_{\mathcal{O}_{m_1,\dots,m_k},\mathrm{std}})$ is exponential in $n$. 

Combining Theorem~\ref{thm:DistDLAElements} with Remark~\ref{rmk:ExtensionOfResults}, one obtains directly the analogous statement for the reduced DLAs at the central vertex $v$:
\begin{equation}
\label{eq:ReducedSpiderDecomposition}
\mathfrak{g}^v_{\mathcal{O}_{m_1,\dots,m_k},\mathrm{std}}
=
\mathfrak{g}^v_{\mathcal{O}_{m_1,\dots,m_k},\mathrm{free}}
=
\mathfrak{g}_{P_{m_1},\mathrm{free}}
\oplus
\dots
\oplus
\mathfrak{g}_{P_{m_k},\mathrm{free}}.
\end{equation}

Thus, the reduced algebra decomposes as a direct sum of the free DLAs associated with the individual arms.

We next compute its dimension. It is known that the free DLA of the path graph $P_n$, with an additional single-site $Z$ generator at one of its vertices, is isomorphic to the Lie algebra $\mathfrak{so}(2n+1)$, which has dimension $2n^2+n$ (see Proposition~B3 in~\cite{KLFCCZ}). Applying this to each arm yields
\begin{equation}
\label{eq:ReducedSpiderSO}
\mathfrak{g}^v_{\mathcal{O}_{m_1,\dots,m_k},\mathrm{std}}
=
\mathfrak{so}(2m_1+1)
\oplus
\dots
\oplus
\mathfrak{so}(2m_k+1).
\end{equation}

Consequently,
\[
\dim\!\left(
\mathfrak{g}^v_{\mathcal{O}_{m_1,\dots,m_k},\mathrm{std}}
\right)
=
\sum\limits_{j=1}^k (2m_j^2+m_j)
=
2\sum\limits_{j=1}^k m_j^2
+
\sum\limits_{j=1}^k m_j.
\]

Let $n=\sum\limits_{j=1}^k m_j$ denote the total number of non-central vertices. Then
\begin{equation}
\label{eq:ReducedSpiderDim}
\dim\!\left(
\mathfrak{g}^v_{\mathcal{O}_{m_1,\dots,m_k},\mathrm{std}}
\right)
=
2\sum\limits_{j=1}^k m_j^2 + n
\le
2\left(\sum\limits_{j=1}^k m_j\right)^2 + n
=
2n^2+n.
\end{equation}
This establishes a quadratic upper bound in the system size.

In contrast, for the full (non-reduced) spider graph the standard DLA grows exponentially with the number of vertices, as shown in~\cite{MYAZ}. The $k$-armed spider graphs therefore provide an explicit family in which symmetry reduction at a carefully chosen vertex dramatically alters the algebraic complexity: from exponential growth in the full system to quadratic growth in the reduced one.

\subsection{Acyclic Graphs}

Let $\Gamma$ be a connected acyclic graph, and fix a vertex $v\in V(\Gamma)$ at which we perform the reduction.
Since $\Gamma$ contains no cycles, it can be naturally regarded as a rooted tree with root $v$.
In particular, for every vertex $w\in V(\Gamma)$ there exists a unique simple path
\[
v = w_0 - w_1 - \cdots - w_j = w
\]
connecting $v$ to $w$, where $j=\operatorname{dist}(v,w)$. As a consequence, each vertex $w\in\mathcal{N}_{v,j}$ of the reduced graph $\Gamma_v$ is associated with a
\emph{unique} parity-degree profile
\[
\mathbf{a}=(a_1,\dots,a_j)\in\mathbb{Z}_2^j,
\qquad
a_s \equiv \deg(w_s)\pmod{2},
\]
formed by recording the parities of the degrees of the intermediate vertices along this path.
Equivalently, every vertex $w\in\mathcal{N}_{v,j}$ belongs to exactly one of the subsets
\(
\mathcal{C}^v_{j,\mathbf{a}}
\)
introduced in Definition~\ref{defn:vertexSubsets}.
Thus, for acyclic graphs, the collections $\{\mathcal{C}^v_{j,\mathbf{a}}\}_{\mathbf{a}\in\mathbb{Z}_2^j}$
form a partition of each distance layer $\mathcal{N}_{v,j}$.

This observation allows us to formulate a simple and easily verifiable structural condition that is sufficient for the containment of dynamical Lie algebras in~\eqref{eq:ERContainment} to hold in the acyclic setting. The condition requires that leaf vertices at the same distance from $v$ be
distinguishable by their parity-degree profiles along the unique path to the root.

\begin{thm}
\label{thm:ReducedDLAsForTrees}
Let $\Gamma$ be a connected acyclic graph and let $v\in V(\Gamma)$.
Suppose that for every $j\in\mathbb{Z}_{>0}$, the parity-degree sequences formed by the parities of the degrees of the vertices along the unique paths from $v$ to the \emph{leaf vertices} in $\mathcal{N}_{v,j}$ are pairwise distinct. Then the containment of algebras in~\eqref{eq:ERContainment} holds.
\end{thm}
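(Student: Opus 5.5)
The plan is to deduce the statement from Theorem~\ref{thm:ParitySeparationImpliesLocalX}, combined with Remark~\ref{rmk:ExtensionOfResults} to handle the separation of non-leaf vertices. First I would check hypotheses (i) and (ii) for a rooted tree. In a tree every $w\in\mathcal{N}_{v,j}$ has exactly one neighbor in $\mathcal{N}_{v,j-1}$, namely its parent, so (i) holds. By the discussion preceding the statement, the sets $\mathcal{C}^v_{j,\mathbf{a}}$ partition $\mathcal{N}_{v,j}$, and the parent of $w\in\mathcal{C}^v_{j,\mathbf{a}}$ lies in $\mathcal{C}^v_{j-1,\mathbf{a}'}$. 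Since it is the only neighbor of $w$ in $\mathcal{N}_{v,j-1}$, the count in (ii) equals $1$. Theorem~\ref{thm:CoolDLAElements} then gives $X_{\mathcal{C}^v_{j,\mathbf{a}}}\in\mathfrak{g}^v_{\Gamma,\mathrm{std}}$ for all $j$ and $\mathbf{a}$.

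The main obstacle is (iii). The hypothesis only separates \emph{leaves} at equal depth, while internal vertices may share a profile. My first attempt is an induction from the deepest layer upward, isolating vertices one at a time.

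Take any vertex $u\in\mathcal{N}_{v,j}$ and choose a leaf $\ell$ in the subtree below $u$ at maximal depth $j'$. The profile of $\ell$ extends that of $u$. By hypothesis $\ell$ is the unique leaf at depth $j'$ with this profile. Any non-leaf vertex at depth $j'$ sharing that profile would have descendants deeper than $j'$, so it is not excluded by the hypothesis alone. I would handle this by ordering vertices by the depth of their deepest descendant leaf and inducting. At the maximal depth $j(v)$ every vertex is a leaf, so the sets $\mathcal{C}^v_{j(v),\mathbf{a}}$ are singletons. Hence $iX_\ell\in\mathfrak{g}^v_{\Gamma,\mathrm{std}}$ for every leaf $\ell$ at depth $j(v)$.

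Next, by Remark~\ref{rmk:ExtensionOfResults} such an $\ell$ may serve as a new marked vertex. Distances from $\ell$ then split each set $\mathcal{C}^v_{j,\mathbf{a}}$ further, as in Example~\ref{ex:ExpFamilyTrees2}. Alternatively, I would commute $X_{\mathcal{C}}$ with $\mathcal{Z}_{\widehat v}$ and with the already isolated $iX_\ell$, peeling off the known vertex.

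The inductive step is as follows. Suppose $iX_x$ is known for every vertex $x$ whose deepest descendant leaf lies below depth $d$. A vertex $u$ whose deepest descendant leaf $\ell$ sits at depth $d$ is then distinguished from other members of its class: either their deepest leaves lie at different depths, or they lie at the same depth but carry distinct leaf profiles. Descending along the path from $u$ to $\ell$ through the classes $\mathcal{C}$, I would subtract the already-isolated contributions to obtain a singleton. This is the step I expect to require the most care. I would verify it via the nested-commutator mechanism underlying Theorem~\ref{thm:CoolDLAElements}, for instance $[\,[X_{\mathcal{C}},\mathcal{Z}_{\widehat v}],\dots]$, which transports single-site $X$ operators along tree edges.

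Once all $iX_w$ with $w\neq v$ lie in $\mathfrak{g}^v_{\Gamma,\mathrm{std}}$, the remaining free generators follow. Commuting $\mathcal{Z}_{\widehat v}$ with $iX_w$ and then with $iX_{w}$ again isolates the terms $iZ_wZ_{w'}$ and $iZ_w$ supported at $w$. This yields \eqref{eq:ERContainment}.
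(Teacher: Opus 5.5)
Your proposal is correct and is essentially the paper's argument. Ordering vertices by the depth of their deepest descendant leaf reproduces the paper's induction over the leaf layers $M, M_2,\dots$: the leaves of each layer are isolated inside their parity classes, and $iX$ is then propagated up the unique paths to $v$ via Remark~\ref{rmk:ExtensionOfResults}. Your explicit check of (i)--(ii), together with subtracting the already-isolated internal members of a leaf's class, is in fact the correct justification for the step where the paper simply re-invokes Theorem~\ref{thm:ParitySeparationImpliesLocalX}; hypothesis (iii) of that theorem can fail for internal vertices of that layer.

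The transport step you flag can be written out explicitly. For $u\in\mathcal{N}_{v,j}$ with $j\ge 2$ and parent $p(u)$, using the layer sum $\mathcal{X}_{\widehat v,j-1}$ from Theorem~\ref{thm:DistDLAElements}, one has $\operatorname{ad}^2_{\mathcal{X}_{\widehat v,j-1}}\operatorname{ad}^2_{iX_u}(\mathcal{Z}_{\widehat v,2})=16\,iZ_uZ_{p(u)}$ and $\operatorname{ad}^2_{iZ_uZ_{p(u)}}(\mathcal{X}_{\widehat v})=-4\,(iX_u+iX_{p(u)})$. In your final step, one extra $\operatorname{ad}^2_{iX_{w'}}$ applied to $\operatorname{ad}^2_{iX_w}(\mathcal{Z}_{\widehat v,2})$ is needed to isolate a single edge term $iZ_wZ_{w'}$.
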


\begin{rmk}
\label{rmk:LinearTimeCheckTrees}
The parity-degree sequences appearing in Theorem~VI.1 can be computed and
compared efficiently using a breadth-first search (BFS) rooted at the
distinguished vertex $v$.  
Since $\Gamma$ is acyclic, BFS determines the distance layers
$\mathcal{N}_{v,j}$ and uniquely fixes, for each vertex, the simple path
from $v$ to that vertex.

For each leaf $w \in \mathcal{N}_{v,j}$, the associated parity-degree
sequence is obtained by traversing this path and recording the parity of
the degrees of the vertices along it.  
To verify that these sequences are pairwise distinct within each layer
$\mathcal{N}_{v,j}$, one may encode each sequence as a binary string and
store it in a hash table while traversing the leaves at that distance.
A collision in the hash table indicates a violation of the condition in
Theorem~VI.1.

Because each vertex and edge is visited a constant number of times during
the BFS and each parity-degree sequence is processed once, the total time
required to construct all sequences and check their pairwise distinctness
over all distance layers is $O(|V(\Gamma)|)$ for acyclic graphs.
\end{rmk}

\subsection{Asymmetric Graphs}

Next, we shift our focus to asymmetric graphs, i.e., graphs whose automorphism group is trivial. The smallest such graphs have $6$ vertices. Up to isomorphism, there are $8$ distinct asymmetric graphs on $6$ nodes, depicted in Figure~\ref{AsymGraphsSixNodes}.

\begin{figure}[h!]
\label{SixNodeAsymmGraphs}
\begin{center}
\begin{tikzpicture}[scale=1]

\node[circle, fill=black, inner sep=2pt, label={left:$3$}] (A1) at (-1,2) {};
\node[circle, fill=black, inner sep=2pt, label={above:$6$}] (A2) at (0,1) {};
\node[circle, fill=black, inner sep=2pt, label={above:$2$}] (A3) at (2,1) {};
\node[circle, fill=black, inner sep=2pt, label={left:$4$}] (A4) at (-1,0) {};
\node[circle, fill=black, inner sep=2pt, label={below:$1$}] (A5) at (0,-1) {};
\node[circle, fill=black, inner sep=2pt, label={below:$5$}] (A6) at (2,-1) {};
\node  at (1,-2) {$\Gamma_1$};

\draw (A2)--(A1);
\draw (A2)--(A3);
\draw (A2)--(A5);
\draw (A4)--(A2);
\draw (A4)--(A5);
\draw (A5)--(A6);
\draw (A6)--(A3);

\end{tikzpicture}
\hspace{0.5in}\begin{tikzpicture}[scale=1]
    \node[circle, draw, fill=black, inner sep=2pt, label={left:$2$}] (B1) at (-1,2) {};
    \node[circle, draw, fill=black, inner sep=2pt, label={above:$5$}] (B3) at (1,2) {};
    \node[circle, draw, fill=black, inner sep=2pt, label={left:$6$}] (B2) at (0,1) {};
    \node[circle, draw, fill=black, inner sep=2pt, label={left:$3$}] (B4) at (-1,0) {};
    \node[circle, draw, fill=black, inner sep=2pt, label={left:$1$}] (B5) at (2,0) {};
    \node[circle, draw, fill=black, inner sep=2pt, label={left:$4$}] (B6) at (3,-1) {};
\node  at (1,-2) {$\Gamma_2$};
    \draw (B1) -- (B3);
    \draw (B1) -- (B2);
    \draw (B2) -- (B3);
    \draw (B2) -- (B4);
    \draw (B2) -- (B5);
    \draw (B3) -- (B5);
    \draw (B5) -- (B6);
\end{tikzpicture}
\hspace{0.5in}
\begin{tikzpicture}[scale=1]
    \node[circle, draw, fill=black, inner sep=2pt, label={left:$4$}] (D1) at (4,2) {};
    \node[circle, draw, fill=black, inner sep=2pt, label={left:$1$}] (D2) at (3,1) {};
    \node[circle, draw, fill=black, inner sep=2pt, label={left:$2$}] (D3) at (0,0) {};
    \node[circle, draw, fill=black, inner sep=2pt, label={above:$6$}] (D4) at (2,0) {};
    \node[circle, draw, fill=black, inner sep=2pt, label={left:$5$}] (D5) at (1,-1) {};
    \node[circle, draw, fill=black, inner sep=2pt, label={left:$3$}] (D6) at (0,-2) {};
    \node  at (2,-2) {$\Gamma_3$};
    \draw (D2) -- (D1);
    \draw (D3) -- (D4);
    \draw (D3) -- (D5);
    \draw (D4) -- (D2);
    \draw (D5) -- (D4);
    \draw (D6) -- (D5);
\end{tikzpicture}
\vspace{0.4in}

\begin{tikzpicture}[scale=1]

\node[circle, fill=black, inner sep=2pt, label={left:$3$}] (A1) at (-1,2) {};
\node[circle, fill=black, inner sep=2pt,, label={above:$5$}] (A2) at (0,1) {};
\node[circle, fill=black, inner sep=2pt, label={above:$1$}] (A3) at (2,1) {};
\node[circle, fill=black, inner sep=2pt, label={left:$2$}] (A4) at (-1,0) {};
\node[circle, fill=black, inner sep=2pt, label={below:$6$}] (A5) at (0,-1) {};
\node[circle, fill=black, inner sep=2pt, label={below:$4$}] (A6) at (2,-1) {};
\node  at (1,-2) {$\Gamma_4$};

\draw (A2)--(A1);
\draw (A2)--(A3);
\draw (A2)--(A5);
\draw (A4)--(A2);
\draw (A4)--(A5);
\draw (A5)--(A6);
\draw (A6)--(A3);
\draw (A3)--(A5);
\end{tikzpicture}\hspace{0.5in}
\begin{tikzpicture}[scale=1]

\node[circle, fill=black, inner sep=2pt, label={above:$4$}] (C1) at (0,2) {};
\node[circle, fill=black, inner sep=2pt, label={above:$1$}] (C2) at (2,2) {};
\node[circle, fill=black, inner sep=2pt, label={left:$2$}] (C3) at (-1,1) {};
\node[circle, fill=black, inner sep=2pt, label={below:$6$}] (C4) at (0,0) {};
\node[circle, fill=black, inner sep=2pt, label={below:$5$}] (C5) at (2,0) {};
\node[circle, fill=black, inner sep=2pt, label={above:$3$}] (C6) at (4,-1) {};
\node  at (1,-2) {$\Gamma_5$};

\draw (C1)--(C2);
\draw (C1)--(C4);
\draw (C3)--(C1);
\draw (C3)--(C4);
\draw (C4)--(C5);
\draw (C5)--(C2);
\draw (C5)--(C6);
\end{tikzpicture}
\hspace{0.4in}
\begin{tikzpicture}[scale=1]
    \node[circle, draw, fill=black, inner sep=2pt, label={above:$2$}] (A1) at (-1,2) {};
    \node[circle, draw, fill=black, inner sep=2pt, label={above:$4$}] (A3) at (1,2) {};
    \node[circle, draw, fill=black, inner sep=2pt, label={left:$6$}] (A2) at (0,1) {};
    \node[circle, draw, fill=black, inner sep=2pt, label={above:$1$}] (A4) at (2,0) {};
    \node[circle, draw, fill=black, inner sep=2pt, label={above:$3$}] (A5) at (-2,0) {};
    \node[circle, draw, fill=black, inner sep=2pt, label={below:$5$}] (A6) at (0,-1) {};
    \node  at (1,-2) {$\Gamma_6$};
    
    \draw (A1) -- (A3);
    \draw (A1) -- (A2);
    \draw (A2) -- (A3);
    \draw (A3) -- (A4);
    \draw (A2) -- (A4);
    \draw (A2) -- (A5);
    \draw (A6) -- (A4);
    \draw (A6) -- (A5);
\end{tikzpicture}

\vspace{0.5in}
\begin{tikzpicture}[scale=1]

\node[circle, fill=black, inner sep=2pt, label={above:$3$}] (A1) at (3,-2) {};
\node[circle, fill=black, inner sep=2pt, label={above:$4$}] (A2) at (0,1) {};
\node[circle, fill=black, inner sep=2pt, label={above:$1$}] (A3) at (2,1) {};
\node[circle, fill=black, inner sep=2pt, label={left:$2$}] (A4) at (-1,0) {};
\node[circle, fill=black, inner sep=2pt, label={below:$6$}] (A5) at (0,-1) {};
\node[circle, fill=black, inner sep=2pt, label={below:$5$}] (A6) at (2,-1) {};
\node  at (1,-2) {$\Gamma_7$};

\draw (A6)--(A1);
\draw (A2)--(A3);
\draw (A2)--(A5);
\draw (A4)--(A2);
\draw (A4)--(A5);
\draw (A5)--(A6);
\draw (A6)--(A3);
\draw (A3)--(A5);
\end{tikzpicture}\hspace{1in}
\begin{tikzpicture}[scale=1]

\node[circle, fill=black, inner sep=2pt, label={right:$2$}] (A1) at (1,-2) {};
\node[circle, fill=black, inner sep=2pt, label={above:$5$}] (A2) at (0,1) {};
\node[circle, fill=black, inner sep=2pt, label={above:$1$}] (A3) at (2,1) {};
\node[circle, fill=black, inner sep=2pt, label={left:$3$}] (A4) at (-1,0) {};
\node[circle, fill=black, inner sep=2pt, label={below:$6$}] (A5) at (0,-1) {};
\node[circle, fill=black, inner sep=2pt, label={below:$4$}] (A6) at (2,-1) {};
\node  at (-0.5,-2) {$\Gamma_8$};

\draw (A6)--(A1);
\draw (A5)--(A1);
\draw (A2)--(A3);
\draw (A4)--(A2);
\draw (A4)--(A5);
\draw (A5)--(A6);
\draw (A6)--(A3);
\draw (A2)--(A6);
\draw (A3)--(A5);
\end{tikzpicture}
\vspace{0.4in}

\begin{tabular}{|c|c|c|c|c|c|c|c|}
\hline
$i$ &
$\dim\bigl(\mathfrak{g}_{\Gamma_i,\mathrm{std}}\bigr)$ & $\dim\bigl(\mathfrak{g}^{v_1}_{\Gamma_i,\mathrm{std}}\bigr)$ & $\dim\bigl(\mathfrak{g}^{v_2}_{\Gamma_i,\mathrm{std}}\bigr)$ & $\dim\bigl(\mathfrak{g}^{v_3}_{\Gamma_i,\mathrm{std}}\bigr)$ &
$\dim\bigl(\mathfrak{g}^{v_4}_{\Gamma_i,\mathrm{std}}\bigr)$ & $\dim\bigl(\mathfrak{g}^{v_5}_{\Gamma_i,\mathrm{std}}\bigr)$ & $\dim\bigl(\mathfrak{g}^{v_6}_{\Gamma_i,\mathrm{std}}\bigr)$\\
\hline
$1$ & 1984 & 956 & 1020& 1023 & 1021 & 991 & \color{red}{258} \\
\hline
$2$ & 1970 & \color{red}{258} & 995& 1014 & 1023 & 957 & \color{red}{258}\\
\hline
$3$ & 1461 & 258 & 862& 907 & 915 & 258 & \color{red}{73}\\
\hline
$4$ & 1984 & 1011 & 1022 & 1023 & 1008 & \color{red}{258} & 1023\\
\hline
$5$ & 1908 & 971 & 979 & 1023 & 909 & \color{red}{258} & 890 \\
\hline
$6$ & 1984 & 1013 & 1023& 1023 & 1023 & 1023 & 870\\
\hline
$7$ & 1984 & 987 & 964 & 1023 & 1014 & \color{red}{258} & 995\\
\hline
$8$ & 1984 & \color{red}{1023} & \color{red}{1023} & \color{red}{1023}  & \color{red}{1023} & \color{red}{1023} & \color{red}{1023} \\
\hline
\end{tabular}

 \caption{Dimensions of the standard and reduced dynamical Lie algebras for asymmetric graphs on $6$ nodes.}
    \label{AsymGraphsSixNodes}
\end{center}
\end{figure}

Using the computational framework implemented in~\cite{pennylane}, we computed the dimensions of the standard DLAs and the standard reduced DLAs corresponding to each vertex of each asymmetric graph on $6$ nodes. The results are summarized in Figure~\ref{SixNodeAsymmGraphs}. 

\begin{rmk}\label{rmk:dimension_su_32}
By definition, for an $n$-node graph $\Gamma$ and a vertex $v\in V(\Gamma)$, the standard reduced DLA 
\(
\mathfrak{g}^{v}_{\Gamma,\mathrm{std}}
\)
is a Lie subalgebra of 
\[
\mathfrak{su}(2^{n-1}) \;=\; \mathfrak{su}(W_v),
\]
where $W_v$ is a Hilbert space of dimension $2^{n-1}$. 
In particular, for $n=6$ we have
\[
\dim \mathfrak{su}(2^{5}) = 4^{5} - 1 = 1023.
\]
Therefore, any algebra whose dimension entry equals $1023$ in Figure~\ref{SixNodeAsymmGraphs} necessarily coincides with the full algebra $\mathfrak{su}(32)$ (up to Lie algebra isomorphism).
\end{rmk}

Similarly, we computed the dimensions of the standard and reduced dynamical Lie algebras for all $144$ asymmetric graphs (up to isomorphism) on $7$ nodes. The results are presented in Table~\ref{tab:SevenNodesDimensions}. The patterns observed in the $6$- and $7$-vertex cases suggest a consistent trend in how the dimensions of reduced DLAs behave relative to the full standard DLA, and motivate the following conjecture.

\begin{conj}
\label{mainConj}
    Let $\Gamma$ be an asymmetric graph. Then there exists a vertex $v \in V(\Gamma)$ such that the dimension of the corresponding standard reduced dynamical Lie algebra is strictly smaller than that of the full standard DLA:
    \[
        \dim \mathfrak{g}^{\,v}_{\Gamma,\mathrm{std}}
        \;<\;
        \dim \mathfrak{g}_{\Gamma,\mathrm{std}}.
    \]
\end{conj}
This conjecture has been verified for all asymmetric graphs on $6$ and $7$ vertices (see Figure~\ref{SixNodeAsymmGraphs} and Table~\ref{tab:SevenNodesDimensions}).

\begin{rmk}
Conjecture~\ref{mainConj} may fail for certain graphs with nontrivial groups of automorphisms.  For example, it does not hold for cyclic graphs with $5$ or more vertices (see~\eqref{eq:cyclicGraphsDimInequality}).
\end{rmk}

\begin{table}[H]
\centering
\begin{tabular}{|c|c|c|c|c|c|c|c|c|}
\hline
$i$ &
$\dim\bigl(\mathfrak{g}_{\Gamma_i,\mathrm{std}}\bigr)$ & $\dim\bigl(\mathfrak{g}^{v_1}_{\Gamma_i,\mathrm{std}}\bigr)$ & $\dim\bigl(\mathfrak{g}^{v_2}_{\Gamma_i,\mathrm{std}}\bigr)$ & $\dim\bigl(\mathfrak{g}^{v_3}_{\Gamma_i,\mathrm{std}}\bigr)$ &
$\dim\bigl(\mathfrak{g}^{v_4}_{\Gamma_i,\mathrm{std}}\bigr)$ & $\dim\bigl(\mathfrak{g}^{v_5}_{\Gamma_i,\mathrm{std}}\bigr)$ & $\dim\bigl(\mathfrak{g}^{v_6}_{\Gamma_i,\mathrm{std}}\bigr)$
& $\dim\bigl(\mathfrak{g}^{v_7}_{\Gamma_i,\mathrm{std}}\bigr)$\\
\hline
$1$ & 1808 & 477 & 131 & 1122 & 1072 & 1183 & 521 & \color{red}{33}\\
\hline
$2$ & 4400 & 1025& 1979& 1816& 2099& 2303& 843& \color{red}{75} \\
\hline
$3$ & $ > 3260$ & 2045& 2694& 2775& 3203& 2844& 993& \color{red}{911} \\
\hline
$4$ &  $> 3890$ & 3135& 3079& 3668& 3438& 2976& \color{red}{1026}& \color{red}{1026}\\
\hline
$5$ & $> 2953$ & \color{red}{265}& 1929& 2642& 2277& 2047& 1026& 833 \\
\hline
$6$ & $> 4396$ & 2141& 3627& 2606& 3265& 1024& 2284& \color{red}{963} \\
\hline
$7$ & $> 3145$  & 2131& 2503& 2047& 2708& 1584& 2517& \color{red}{675} \\
\hline
$8$ & $> 4272$  & 2538& 3053& 3584& 3227& 1916& 1023& \color{red}{906} \\
\hline
$9$ & $>4244$ & 2007& 2738& 3450& 3475& 3525& 3169& \color{red}{872} \\
\hline
$10$ & $> 4532$ & 2884& 2834& 3297& 3785& 3466& 2653& \color{red}{1024}\\
\hline
$11$ & 2891 & 1615& 1610& 265& 1855& 697& 919& \color{red}{84}\\
\hline
$12$ & $>4212$ & 2623& 3727& 3609& 3364& 2818& 2861& \color{red}{1026} \\
\hline
$13$ &  $> 4742$ & 3746& 3592& 3432 & 4038 & 3547 & 3269 & \color{red}{1027}\\
\hline
$14$ & $> 4422$ & 1819& 2506& 3032& 3698& 2562& \color{red}{1027}& 1891\\
\hline
$15$ & $>4434$ & 1974 & 2968 & 2499 & 3985 & 2503 & \color{red}{1027} & 1788\\
\hline
$16$ & $>4796$ & 3730 & 3149& 3304& 3970& 3044& \color{red}{1027}& 2508 \\
\hline
$17$ & $4320$ & 2609 & 2576& 3615& 3436& 2306& 1012& \color{red}{977}\\
\hline
$18$ & $>4846$ & 2682 & 3712& 2909& 3474& \color{red}{1025}& 2580& 1026 \\
\hline
$19$ & $> 4208$ & 3621 & 2931 & 3404 & 4051 & 1849 & 3417 & \color{red}{1027} \\
\hline
$20$ & $>4716$ & 3391 & 3408 & 3855 & 3397 & 3421 & 3123 & \color{red}{1027} \\
\hline
$21$ & 3733 & 3042 &  2821 & 3676 & 3799 & 2847 & 1026 & \color{red}{935}\\
\hline
$22$ & $>3903$ & 3939 & 3415 & 3300 & 3972 & 2763 & \color{red}{1027} &  2988 \\
\hline
$23$ & $>4735$ & 3612 & 3831 & 3044 & 3840& 2738 & \color{red}{1027} & 2336\\
\hline
$24$ & 2594 & 511 & 532 & 1366 & 1417 & 1500 & 1469& \color{red}{131} \\
\hline
$25$ & $> 3766$ & 1923 & 1027 & 2091 & 2237 & 3098 & 1756 & \color{red}{266} \\
\hline
$26$ & $>3017$ & 826 & 1019 & 1936 & 2191 & 2920 & 2071 & \color{red}{266} \\
\hline
$27$ & $>4358$ & 2547 & \color{red}{1027} & 3194 & 3062 & 3785 & 2117 & 1767 \\
\hline
$28$ & $>4222$ & 2095 & 3467 & 3715 & 3843 & 3167 & 2277 & \color{red}{1809}\\
\hline
$29$ & $>4803$ & 2560 & 1026 & 3587 & 2705 & 3330 & \color{red}{1002} & 2434 \\
\hline
$30$ & $>3858$ & 2141 & 1027 & 1954 & 1816 & 2832 & \color{red}{266} & 2148\\

\hline
\end{tabular}
\caption{Dimensions of the standard and reduced dynamical Lie algebras for asymmetric graphs on $7$ nodes. The minimal dimension values are highlighted in red.}
\label{tab:SevenNodesDimensions}
\end{table}

\subsubsection{Variance-Based Proxies for DLA Dimensions of Standard and Reduced QAOA}

As it appears infeasible to establish the precise dimensions of standard and reduced dynamical Lie algebras corresponding to QAOAs for MaxCut problems on arbitrary asymmetric graphs with a larger number of nodes, we instead exploit the connection between the variance of the QAOA loss function and the dimension of the corresponding DLA. Broadly speaking, the variance is inversely proportional to the DLA dimension; thus, larger variance signals a smaller-dimensional DLA. 

These results are also of independent interest, since analyzing the variance is essential for understanding the emergence of barren plateaus in the optimization landscape \cite{larocca2025barren, RBSKMLC}. The phenomenon of \emph{barren plateaus} refers to the exponential suppression of gradients in variational quantum algorithms, which renders classical training ineffective \cite{mcclean2018barren, larocca2025barren}. A key diagnostic of this behavior is the variance of the loss function over parameter space: when this variance is exponentially small in the number of qubits \( n \), the corresponding gradients also vanish exponentially \cite{arrasmith2021equivalence}.

We now recall this connection more explicitly. The \emph{loss function} in QAOA is defined as the expectation value of the problem Hamiltonian \( H_P \) with respect to the parametrized QAOA state:
\begin{equation}
\begin{aligned}
\ell_{\boldsymbol{\beta}, \boldsymbol{\gamma}} (\rho, H_P) 
    :=&\, \bra{\psi(\boldsymbol{\beta},\boldsymbol{\gamma})} H_P \ket{\psi(\boldsymbol{\beta},\boldsymbol{\gamma})} \\
    =&\, \operatorname{Tr} \!\left[ U(\boldsymbol{\beta}, \boldsymbol{\gamma}) \, \rho \, U^\dagger(\boldsymbol{\beta}, \boldsymbol{\gamma}) \, H_P \right],
\end{aligned}
    \label{LossFnEqn}
\end{equation}
where \( \rho = \ket{\xi}\bra{\xi} \) is the pure initial state and 
\( U(\boldsymbol{\beta}, \boldsymbol{\gamma}) \) is the parameterized QAOA unitary defined in~\eqref{qaoa-chain}.  

Due to Theorems~2 and~3 of~\cite{RBSKMLC}, for sufficiently deep QAOA circuits (i.e., for large enough value of \( p \)), the variance of the loss function over parameters \( (\boldsymbol{\beta}, \boldsymbol{\gamma}) \) can be approximated by the variance taken over the dynamical Lie group \( G = e^{\mathfrak{g}} \). The latter was computed in Theorem~1 of~\cite{RBSKMLC}, which we now recall.

Let the DLA decompose as
\[
  \mathfrak{g} 
  = \mathfrak{g}_1 \oplus \cdots \oplus \mathfrak{g}_k \oplus \mathfrak{z},
\]
into simple compact Lie algebras \( \mathfrak{g}_j \) together with a (possibly trivial) center \( \mathfrak{z} \) 
(this decomposition exists by Proposition~A.1 in~\cite{WKKB1}).  

Then the variance of the loss function over the group \( G \) is given by
\begin{equation}
\label{eq:VarAndDim}
   \operatorname{Var}_{G}
  \bigl[\ell_{\boldsymbol{\beta}, \boldsymbol{\gamma}} (\rho, H_P) \bigr] 
  = \sum\limits_{j=1}^{k} 
    \frac{\mathcal{P}_{\mathfrak{g}_j}(\rho)\, \mathcal{P}_{\mathfrak{g}_j}(H_P)}
         {\dim(\mathfrak{g}_j)}.
\end{equation}
In the above formula,
$\mathcal{P}_{\mathfrak{g}_j}(H) := \operatorname{Tr}\!\bigl[H_{\mathfrak{g}_j}^2\bigr]$
and \( H_\mathfrak{g_j} \) denotes the orthogonal projection of an operator \(H\) onto the subalgebra \( \mathfrak{g}_j \subseteq \mathrm{End}(W)\).  
\begin{rmk}
Two remarks are in order.  
First, formula~\eqref{eq:VarAndDim} depends on the dimensions of the \emph{simple components} \( \mathfrak{g}_j \) rather than on the total dimension of the DLA itself. Thus, the relation between the variance and the full DLA structure can be subtle: the variance captures a weighted contribution from each simple summand rather than a single global quantity.  

Second, the variance of the loss function provides an indicator of whether a given QAOA instance is susceptible to barren plateaus. Therefore, it can be used as an operational diagnostic of the comparative performance of the standard and reduced QAOA circuits for the same optimization problem.
\end{rmk}

Below, we present numerical computations of the variance for both standard and reduced QAOA circuits for the MaxCut problem on a variety of connected asymmetric graphs with $7$ and $14$ vertices. The variance of the gradient is computed as the mean over 400 initializations of the parametrized circuit corresponding to a MaxCut instance, with each initialization assigning randomized variational parameters between $0$ and $\pi$, with a total of $2p$ parameters, where $p$ is the circuit depth. All gradients were computed for a range of circuit depths using Pennylane's QNode with the \textit{autograd} interface and a \textit{backprop} differentiation method, which resulted in the fastest computation of the gradient.

\afterpage{\clearpage
    \begin{figure}
        \centering
        \includegraphics[width=2\paperwidth, height=0.8\paperheight, keepaspectratio]{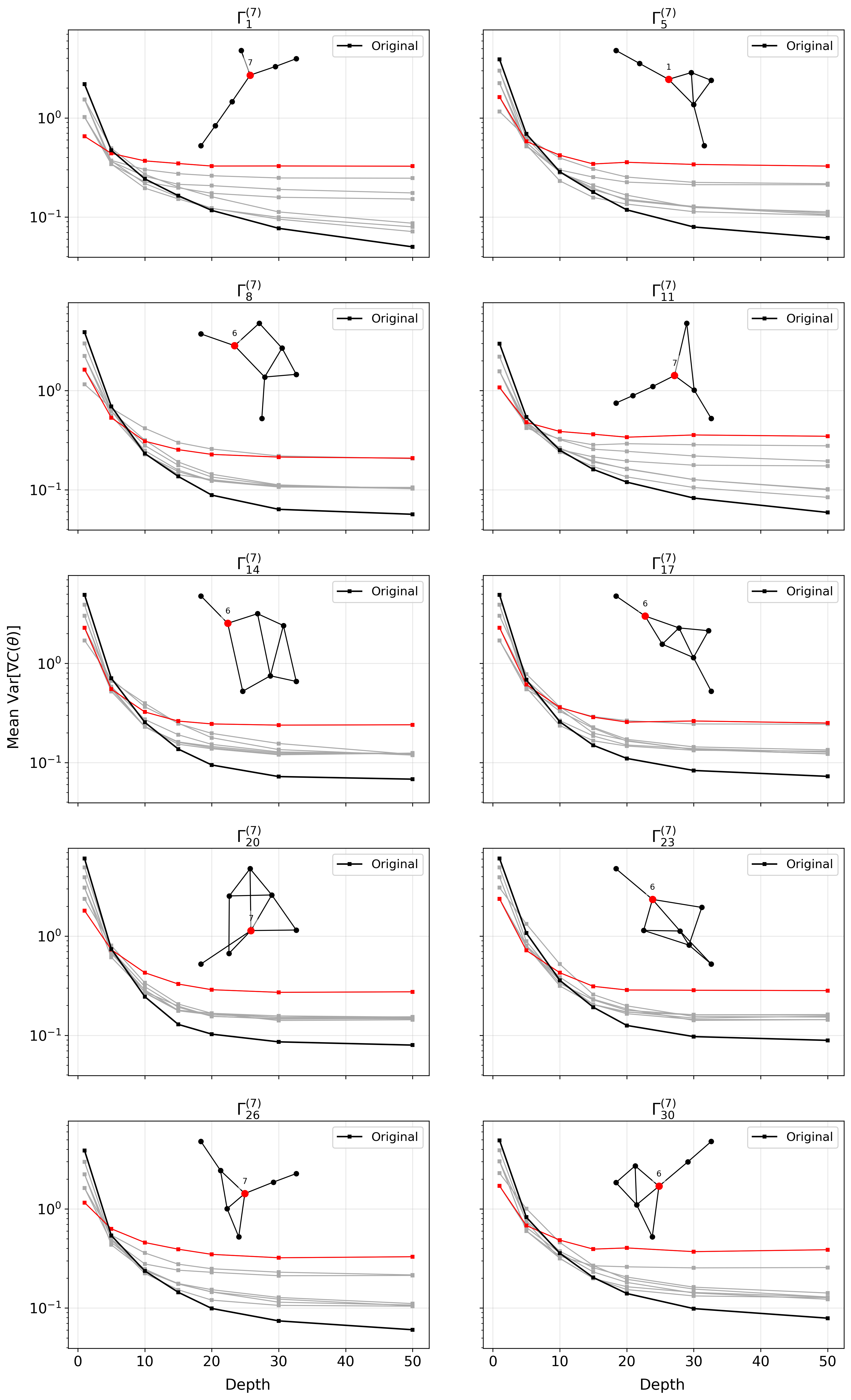}
        \caption{Mean of variance for selected 7-node asymmetric graphs ($\Gamma^{(7)}$). The red curve shows the vertex reduction that results in the largest variance of the gradient at depth 50, with the corresponding vertex highlighted in red and labeled on the graph inset.}
        \label{fig:7_node_variances}
    \end{figure}
    \clearpage
}

Figure \ref{fig:7_node_variances} shows the variance of the gradient for a select number of 7-node asymmetric graphs. Upon comparison to Table \ref{tab:SevenNodesDimensions}, we see that there is good correspondence between the DLA dimensions and the magnitude of the variance in terms of their inverse relationship. In many cases, there is a specific vertex reduction (shown in red) that simplifies the problem significantly compared with the full (standard) problem DLA dimensions. Furthermore, for any vertex reduction (shown in gray) there is a corresponding increase in the variance of the gradient compared to the original problem, for large enough depths. For a circuit depth of $1$, in all cases, we see a larger value for the variance in the standard problem. We also note that around a circuit depth of $5$, the variance of the gradient for the original problem and that for the reduced version cross paths. This behavior, while noteworthy, is beyond the scope of this article. 

In practice, computing the variance of the gradient allows us to determine a good candidate for vertex reduction, especially for cases where computing the DLA dimensions becomes intractable. To demonstrate this practical application we compute the variance for ten 14-node asymmetric graphs. 

\afterpage{\clearpage
    \begin{figure}
        \centering
        \includegraphics[width=2\paperwidth, height=0.8\paperheight, keepaspectratio]{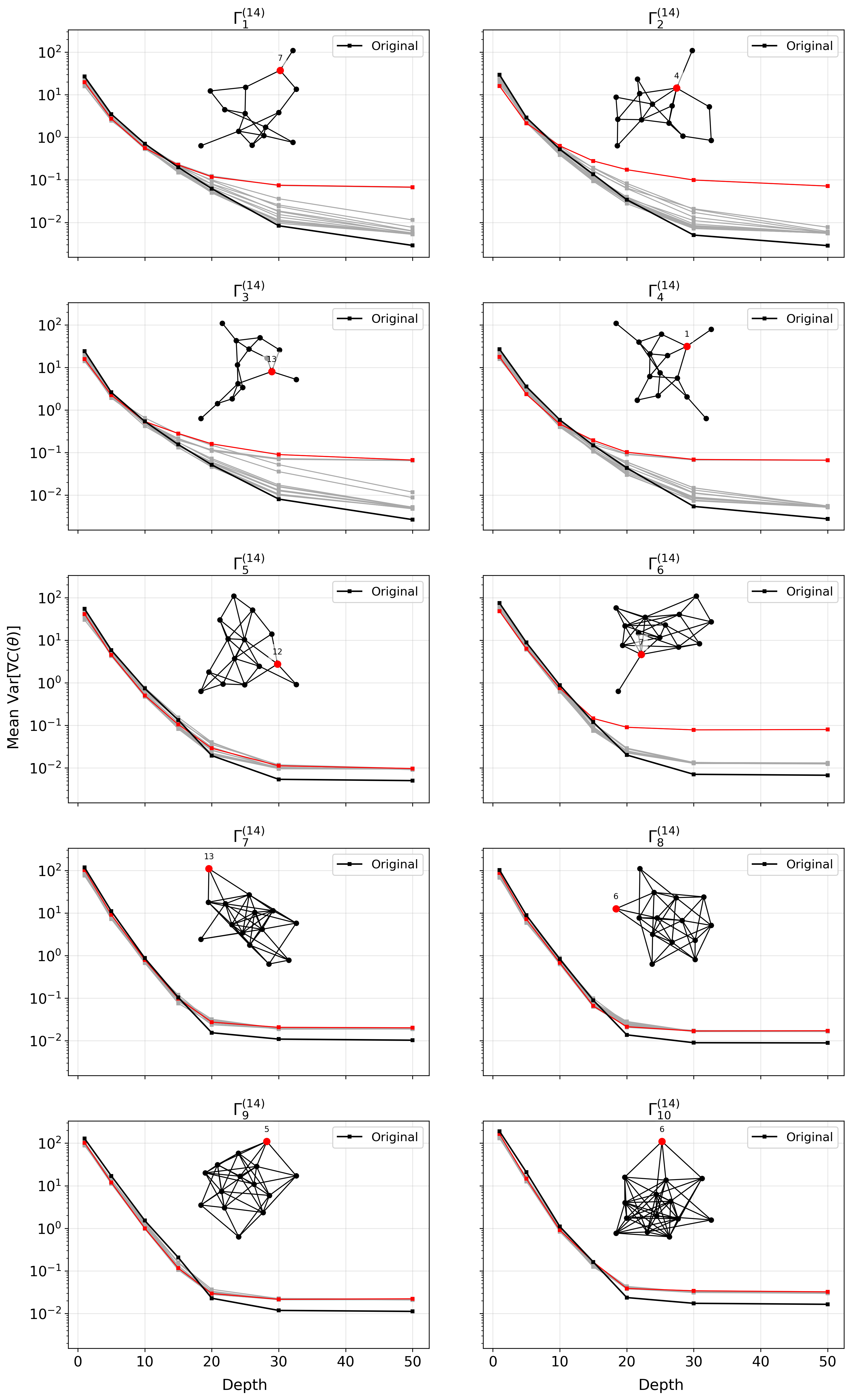}
        \caption{Mean of variance for ten 14-node asymmetric graphs ($\Gamma^{(14)}$). The red curve shows the vertex reduction that results in the largest variance of the gradient at depth 50, with the corresponding vertex highlighted in red and labeled on the graph inset.}
        \label{fig:14_node_variances}
    \end{figure}
    \clearpage
}

Figure \ref{fig:14_node_variances} shows the mean of the variance of the gradient for increasing circuit depth on ten 14-node graphs. The vertex reduction that results in the largest variance is highlighted in red. Here we notice that as the graphs become denser\footnote{These 14-node graphs were generated using a range of edge-creation probabilities, resulting in asymmetric graphs of increasing edge density.}, most vertex reductions result in a small difference in the variance of the gradient with respect to the original problem. Nevertheless, in all cases, any vertex reduction results in a larger variance of the gradient compared with the standard problem without any loss of quality in the solution.

\begin{obs}
\label{leaf_obs}
    In general, we see that the vertex reduction that results in the largest magnitude (at depth 50) of the variance of the gradient corresponds to a vertex that contains a leaf. This can be evidenced in Figure \ref{fig:added_leafs}: when a leaf is added artificially to a vertex in the graph, there is an increase in the variance of the gradient compared to the original problem instance (no leaves).   
\end{obs}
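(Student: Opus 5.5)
Observation~\ref{leaf_obs} is an empirical statement, so the plan below does not aim at a proof for arbitrary graphs. Instead it aims at a rigorous mechanism explaining the observation, followed by proofs for structured families. Throughout, I use the variance formula \eqref{eq:VarAndDim}. I compare the standard QAOA on $\Gamma$ with the reduced QAOA on the leaf-augmented graph $\Gamma^{+}=\Gamma\cup\{u\}$, $(u,w)\in E(\Gamma^{+})$, reduced at the leaf $u$. The point of this choice is that the reduced instance lives on the \emph{same} Hilbert space $(\mathbb{C}^2)^{\otimes n}$ as the original problem. Its Hamiltonians are $H_M=\sum_{x\in V(\Gamma)}X_x$ and $H_P^u=\sum_{E(\Gamma)}Z_xZ_y+Z_w$. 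So the only change is the addition of a single-site field $Z_w$, which breaks the global bit-flip symmetry $\prod_x X_x$ of the original instance.

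\textbf{Step 1: the original DLA decomposes into the two parity sectors.} Since $P=\prod_x X_x$ commutes with both generators of $\mathfrak{g}_{\Gamma,\mathrm{std}}$, the generic structure of this algebra (cf.\ \cite{MYAZ}) is a direct sum of one or two simple ideals acting on the $\pm1$ eigenspaces of $P$. Each eigenspace has dimension $2^{n-1}$, so each ideal has dimension $\Theta(4^{n-1})$.

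\textbf{Step 2: the leaf reduction can only help through the component structure.} For $\mathfrak{g}^u_{\Gamma^+,\mathrm{std}}$ I would use Theorem~\ref{thm:DistDLAElements} and Remark~\ref{rmk:ExtensionOfResults}. Since $u$ is a leaf, the layer $\mathcal{N}_{u,1}=\{w\}$ satisfies the parity hypothesis trivially, which yields $iX_w\in\mathfrak{g}^u_{\Gamma^+,\mathrm{std}}$. From $iX_w$ and $iZ_w$ one obtains a local $\mathfrak{su}(2)$ at $w$, and one then propagates through the distance layers from $w$.

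When these propagation hypotheses hold (for instance for trees, via Theorem~\ref{thm:ReducedDLAsForTrees}), the algebra equals $\mathfrak{g}_{\Gamma,\mathrm{free}}$ plus the $Z_w$ field. Theorem~\ref{thm:FreeReducedFullStructure} then identifies this algebra as simple or of the type $\mathfrak{so}$ or $\mathfrak{su}$. When the hypotheses fail, I would argue via the spider-graph mechanism of \eqref{eq:ReducedSpiderDecomposition}. The goal there is to show that the reduced algebra is a direct sum of components of strictly smaller dimension, because the field $Z_w$ together with the local $X_w$ decouples part of the graph.

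\textbf{Step 3: compare the variance term by term.} The initial state $\ket{+}^{\otimes n}$ has large weight on the ideal $\mathfrak{g}_j$ containing the mixer. I would bound $\mathcal{P}_{\mathfrak{g}_j}(\rho)\,\mathcal{P}_{\mathfrak{g}_j}(H_P)$ from below using the explicit elements $\mathcal{X}_{v,k}$ produced by Theorem~\ref{thm:DistDLAElements}. Whenever the reduced DLA has smaller simple components, \eqref{eq:VarAndDim} then gives a larger variance.

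\textbf{Main obstacle.} The hard step is Step~2 for general graphs. Nothing established so far guarantees that adding $Z_w$ shrinks the simple components rather than merging the two parity sectors into a single $\mathfrak{su}(2^n)$, which has a larger dimension but only one summand. Which of these happens is exactly the delicate issue. I would first establish the claim rigorously on spiders and trees, where both sides are computable, and verify the projection bounds there. For general asymmetric graphs I would leave it supported by the numerics of Figures~\ref{fig:7_node_variances} and \ref{fig:14_node_variances}.
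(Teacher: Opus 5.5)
The paper gives no proof of this statement. The observation rests only on the numerics of Figures~\ref{fig:7_node_variances}, \ref{fig:14_node_variances} and~\ref{fig:added_leafs}, so what you can add is a mechanism. The gap is that yours is attached to the wrong reduction vertex.

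The observation concerns reducing at the vertex $w$ that \emph{carries} the leaf $u$, not at $u$ itself (cf.\ Observation~\ref{leaf_improvement}: ``a leaf is added to the fixed vertex''). The table in Figure~\ref{AsymGraphsSixNodes} shows the difference sharply. For the six graphs with leaves, reducing at a leaf gives dimensions between $907$ and $1023$, essentially all of $\mathfrak{su}(32)$. Reducing at the neighbor of a leaf gives $258$ every time.

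In your setting the reduced graph is still all of $\Gamma$, and $Z_w$ only breaks the bit-flip symmetry, so generically the two parity sectors merge. Indeed, the favourable case of your Step~2 (propagation hypotheses plus Theorem~\ref{thm:FreeReducedFullStructure} for archetypal $\Gamma$) produces exactly $\mathfrak{su}(2^n)$. Since $\rho=(\ket{+}\bra{+})^{\otimes n}$ lies in the $+1$ eigenspace of $\prod_x X_x$, every simple ideal of $\mathfrak{g}_{\Gamma,\mathrm{std}}$ that $\rho$ sees has dimension at most $4^{n-1}-1$. In the generic case, \eqref{eq:VarAndDim} then gives about $|E|/2^{n-1}$ for the original instance and $(|E|+1)/2^{n}$ for yours. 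The variance roughly halves, so Step~3 fails exactly where Step~2 succeeds.

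Your fallback via the spider mechanism does not apply either. The local terms $iX_w$ and $iZ_w$ decouple nothing while the couplings $Z_wZ_x$ remain; the splitting in \eqref{eq:ReducedSpiderDecomposition} comes from \emph{deleting} a cut vertex. Two smaller points: Theorem~\ref{thm:FreeReducedFullStructure} requires a non-bipartite reduced graph, so it says nothing about trees. And the two-ideal picture of Step~1 is a high-probability statement for $G(n,1/2)$, not a property of asymmetric graphs.

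The missing idea is precisely that deletion: reduce at $w$. The leaf $u$ becomes an isolated vertex of the reduced graph, carrying only the mixer term $X_u$ and the field $Z_u$ inherited from the edge $(w,u)$. Both reduced generators therefore have the form $A_u\otimes I+I\otimes B$. The circuit thus factorizes as a single-qubit QAOA on $u$ tensored with the reduced QAOA of $\Gamma$ at $w$.

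Consequently the reduced DLA is contained in $\mathfrak{su}(2)\oplus\mathfrak{g}^{w}_{\Gamma,\mathrm{std}}$. By Goursat's lemma this is an equality whenever the second factor has no $\mathfrak{su}(2)$ quotient; this is the $258=3+255$ in the table. In \eqref{eq:VarAndDim}, with $d$ the dimension of the remaining space, the $\mathfrak{su}(2)$ summand contributes $\mathcal{P}(\rho)\mathcal{P}(H)/3=\frac{1}{2d}\cdot\frac{2d}{3}=\frac13$. This is just the variance of $\langle Z\rangle$ for a Haar-random qubit state. It does not depend on $n$, whereas generic reductions contribute only exponentially small terms. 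That explains both halves of the observation.

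Upgrading ``largest among all reductions'' to a theorem would additionally require lower bounds on the reduced DLAs at the other vertices. Neither your proposal nor the paper supplies these.
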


\begin{figure}[h!]
    \centering
    \includegraphics[width=\linewidth, keepaspectratio]{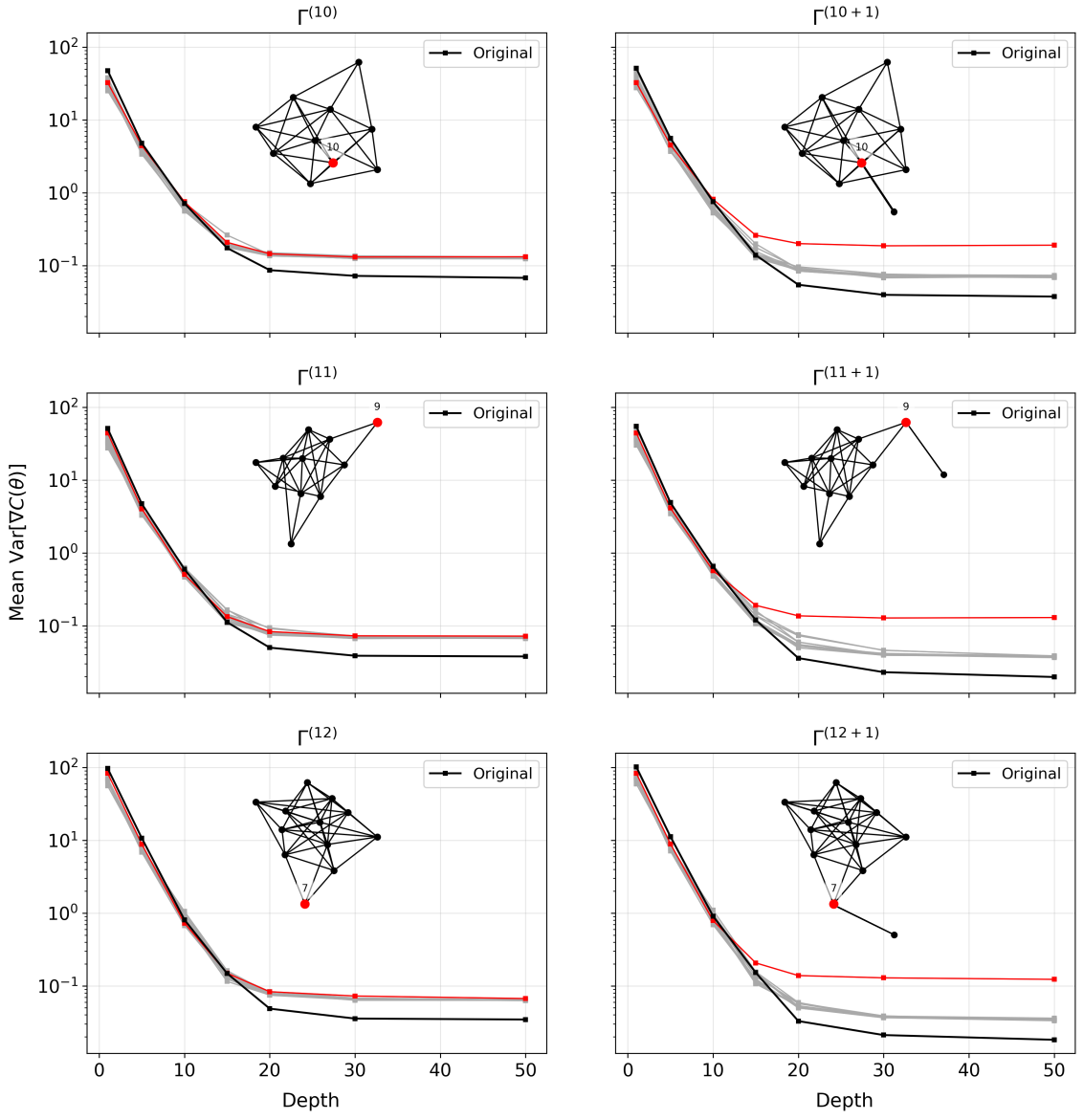}
    \caption{Effect of artificially introducing a leaf. The graphs on the left column ($\Gamma^{(n)}$) contain no leaves and any vertex reduction results in similar magnitudes in the variance of the gradient. The graphs on the right column ($\Gamma^{(n+1)}$) show an increase in the variance of the gradient when a leaf is artificially introduced to a vertex. 
    }
    \label{fig:added_leafs}
\end{figure}

\begin{obs}
\label{leaf_improvement}
    Remarkably, as the variance of the gradient increases with the graph size (black curve in Figure \ref{fig:added_leafs}) for the original problem instances, there is an improvement in the variance with the artificial addition of a leaf for the reduced case. Furthermore, since a leaf is added to the fixed vertex, the quality of the solution is not jeopardized. Similarly, we see that vertices with similar vertex connectivity result in similar magnitudes of the variance of the gradient upon reduction.
\end{obs}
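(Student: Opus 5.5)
Since this observation is empirical, I would not try to prove the numerical curves themselves. Instead I would isolate and prove the structural facts behind them. There are two claims: (a) attaching a leaf $u$ to the reduction vertex $v$ leaves the MaxCut solutions of $\Gamma$ recoverable; (b) the reduced DLA of the augmented graph $\Gamma^{+}:=\Gamma\cup\{(v,u)\}$ at $v$ is only marginally larger than $\mathfrak{g}^{v}_{\Gamma,\mathrm{std}}$ and contains a low-dimensional simple summand, which forces a non-negligible contribution to the variance in \eqref{eq:VarAndDim}.

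For (a), the argument is classical. Any cut of $\Gamma^{+}$ restricts to a cut of $\Gamma$, and the value increases by at most $1$, since only the edge $(v,u)$ is added. Conversely, every cut of $\Gamma$ extends to a cut of $\Gamma^{+}$ by placing $u$ opposite $v$. Hence $\mathrm{MaxCut}(\Gamma^{+})=\mathrm{MaxCut}(\Gamma)+1$, and restricting an optimal cut of $\Gamma^{+}$ to $V(\Gamma)$ gives an optimal cut of $\Gamma$. This parallels the optimization-equivalence item of Theorem~\ref{thm:GraphExtensionEmbedding}.

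For (b), I would use the tensor decomposition $W^{+}_v=W_v\otimes\mathbb{C}^2_u$. By \eqref{eq:ReducedHamiltonians}, fixing $x_v=0$ turns the edge $(v,u)$ into the single-site term $Z_u$, so
\[
\mathcal{Z}^{+}_{\widehat v}=\mathcal{Z}_{\widehat v}\otimes 1+ 1\otimes iZ_u,
\qquad
\mathcal{X}^{+}_{\widehat v}=\mathcal{X}_{\widehat v}\otimes 1+1\otimes iX_u .
\]
The two summands act on different tensor factors and therefore commute. Consequently $\mathfrak{g}^{v}_{\Gamma^{+},\mathrm{std}}$ embeds in $\mathfrak{g}^{v}_{\Gamma,\mathrm{std}}\oplus\mathfrak{su}(2)_u$, and it surjects onto each factor. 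Here the $\mathfrak{su}(2)_u$ factor is generated by $iZ_u$ and $iX_u$.

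By Goursat's lemma, the image is either the whole direct sum or a subalgebra glued along a common quotient. In either case its dimension is at most $\dim\mathfrak{g}^{v}_{\Gamma,\mathrm{std}}+3$. In the generic case, where $\mathfrak{g}^v_{\Gamma,\mathrm{std}}$ has no $\mathfrak{su}(2)$ quotient through which the gluing could occur, I expect to obtain the full direct sum. This can be checked by taking nested commutators and looking for an element whose $u$-component vanishes while its $\Gamma$-component does not.

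I would then substitute this decomposition into \eqref{eq:VarAndDim}. The $\mathfrak{su}(2)_u$ summand has dimension $3$, and both $\mathcal{P}_{\mathfrak{su}(2)_u}(H_P)$ and $\mathcal{P}_{\mathfrak{su}(2)_u}(\rho)$ are nonzero. For $H_P$ this holds because its projection onto $\mathfrak{su}(2)_u$ is the $1\otimes Z_u$ part. For $\rho=\ket{+}\bra{+}^{\otimes}$ it holds because $\ket{+}$ has nonzero $X_u$-expectation. The remaining summands reproduce the reduced-$\Gamma$ contributions, up to the factor of $2$ in the trace norms coming from the extra qubit. This yields a variance at least of the order of the leaf-free reduced variance, plus a term $\sim 1/3$ after normalization, which explains the observed increase.

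The main obstacle is the comparison with the \emph{unreduced} leaf-free instance. That comparison depends on the actual simple summands of $\mathfrak{g}_{\Gamma,\mathrm{std}}$, which the paper can only estimate numerically. I would state it conditionally: whenever $\mathfrak{g}_{\Gamma,\mathrm{std}}$ has only large simple components, the dimension-$3$ summand dominates and the variance inequality follows. I would also note explicitly that the claim that vertices with similar connectivity give similar variances remains an empirical statement.
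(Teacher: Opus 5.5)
\textbf{Overall.} The paper gives no argument for this observation. It rests entirely on the finite-depth gradient-variance curves of Figure~\ref{fig:added_leafs}, plus the classical remark that the leaf sits at the fixed vertex. Your route is therefore genuinely different, and within the scope you set it is sound, apart from one step fixed in the second paragraph.

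\textbf{Comparison.} Part (a) is exactly right; in reduced variables it is just $\widetilde F^{+}=\widetilde F+x_u$. In (b), the tensor splitting, the embedding into $\mathfrak{g}^{v}_{\Gamma,\mathrm{std}}\oplus\mathfrak{su}(2)_u$ and the Goursat dichotomy are all correct. In the unglued case, \eqref{eq:VarAndDim} gives exactly $\operatorname{Var}^{+}=\operatorname{Var}_{\Gamma,v}+\tfrac13$ for the \emph{unnormalized} loss. Two small corrections to your wording:
\begin{itemize}
\item The factors $2$ and $\tfrac12$ in $\mathcal{P}(H_P)$ and $\mathcal{P}(\rho)$ cancel, so the $\Gamma$-summands contribute exactly as before, not merely ``up to a factor $2$''.
\item The $\tfrac13$ holds before normalization; after normalizing it becomes $\tfrac{1}{3\lambda_{\max}^2}$.
\end{itemize}
What your approach buys is a mechanism and a sharp prediction, which the paper's own data corroborate. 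In Figure~\ref{AsymGraphsSixNodes}, precisely the reductions at vertices carrying a leaf give $258=\dim\mathfrak{su}(16)+\dim\mathfrak{su}(2)$. What the numerics buy is what your argument cannot reach: finite-depth \emph{gradient} variance rather than Haar-averaged loss variance, the comparison with the unreduced black curve (which you rightly leave conditional), and the ``similar connectivity'' clause.

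Your computation also makes explicit something the observation glosses over. The entire gain is the variance of $\langle Z_u\rangle$ on a decoupled qubit. Since the circuit factors as $U_\Gamma(\boldsymbol{\beta},\boldsymbol{\gamma})\otimes U_u(\boldsymbol{\beta},\boldsymbol{\gamma})$ with shared parameters, the signal coming from $\Gamma$ is untouched. So (a) protects the MaxCut optima, but it does not show that the $\Gamma$-part becomes easier to train.

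\textbf{The step that needs fixing.} Your proposed test for the unglued case looks for an element whose $u$-component vanishes while its $\Gamma$-component does not. This cannot distinguish the two Goursat cases. Such elements form the ideal $N_1$, which in the glued case has codimension $3$ and is therefore nonzero as soon as $\dim\mathfrak{g}^v_{\Gamma,\mathrm{std}}>3$. For example, $\mathcal{Z}_{\widehat v,2}\otimes 1$ always lies in the algebra when $\Gamma_v$ has an edge.

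The correct witness goes the other way: an element with vanishing $\Gamma$-component and nonzero $u$-component, i.e.\ $1\otimes iZ_u\in\mathfrak{g}^{v}_{\Gamma^{+},\mathrm{std}}$. Alternatively, argue structurally:
\begin{itemize}
\item A gluing map $\phi$ must send $\mathcal{X}_{\widehat v}\mapsto iX_u$ and $\mathcal{Z}_{\widehat v}\mapsto iZ_u$.
\item $\mathcal{Z}_{\widehat v,1}$ is a polynomial in $\operatorname{ad}_{\mathcal{X}_{\widehat v}}$ applied to $\mathcal{Z}_{\widehat v}$ (Lemma~\ref{ZsplittingLemma}), so this forces $\phi(\mathcal{Z}_{\widehat v,2})=0$.
\item Hence the sum is direct whenever the ideal generated by $\mathcal{Z}_{\widehat v,2}$ is all of $\mathfrak{g}^v_{\Gamma,\mathrm{std}}$. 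In particular this holds whenever that algebra is simple and not isomorphic to $\mathfrak{su}(2)$.
\end{itemize}
This hypothesis cannot be dropped: adding a leaf at the center of $K_{1,n}$ leaves the reduced DLA equal to $\mathfrak{su}(2)$.
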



Refer to Appendix D for additional numerical results for graphs ranging from $11 - 15$ nodes. Code availability for all numerical results can be found here: \url{https://github.com/joseluisfalla/QAOA_Reductions_by_Classical_Symmetries.git}.

\subsection{Star Graphs} 
\label{sec:star_graph}
Consider the star graph $K_{1,n}$ consisting of a central vertex,
denoted by $0$, and $n$ leaf vertices labeled
$1,\ldots,n$. Each leaf vertex is connected to the central
vertex by a single edge, and there are no edges among the leaves.
See Figure~\ref{starGraph} for an illustration in the case $n=5$.

\begin{figure}[H]
\begin{center}
\begin{tikzpicture}[scale=1.1]
  \node[circle, draw, fill=black, inner sep=2pt, label={left:$0$}] (0) at (0,0) {};
  
  \foreach \i in {1,...,5} {
    \node[circle, draw, fill=black, inner sep=2pt, label={\i}] (\i) at ({72*(\i-1)}:2) {};
    \draw (0) -- (\i);
  }
\end{tikzpicture}
\end{center}
\caption{The star graph $K_{1,5}$ with center vertex $0$ and outer vertices $1,\dots,5$.}
\label{starGraph}
\end{figure}
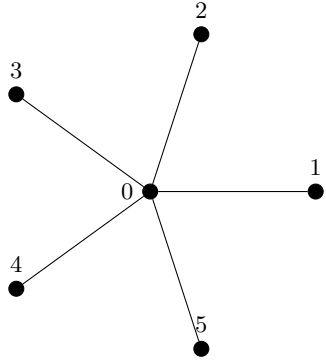

The standard dynamical Lie algebra associated to $K_{1,n}$, denoted 
$\mathfrak{g}_{K_{1,n},\mathrm{std}}$, is generated by the two operators
\[
   X \;=\; iX_0 + iX_1 + \cdots + iX_n,
   \qquad
   Z \;=\; iZ_0 Z_1 + \cdots + iZ_0 Z_n.
\]
Although we did not obtain a closed-form expression for 
$\dim( \mathfrak{g}_{K_{1,n},\mathrm{std}})$ as a function of $n$, 
we computed the dimensions numerically for several values of $n$.
These values are presented in Table~\ref{tab:starDimensions}.

\begin{table}[H]
\centering
\begin{tabular}{|c|c|c|c|c|c|c|c|c|c|c|}
\hline
$n$ & 2 & 3& 4& 5& 6 &7 &8 & 9 & 10 & 11  \\
\hline
$\dim\bigl(\mathfrak{g}_{K_{1,n},\mathrm{std}}\bigr)$ & 9 & 24 & 33 & 66 & 81 & 138 & 161 & 247 & 281 & 400  \\
\hline
\end{tabular}
\caption{Numerically computed dimensions of the standard dynamical Lie algebra 
$\mathfrak{g}_{K_{1,n},\mathrm{std}}$ for several values of $n$.}
\label{tab:starDimensions}
\end{table}

\begin{rmk}
The \emph{orbit Lie algebra} $\mathfrak{g}_{\Gamma,\mathrm{orb}}$ of a graph $\Gamma$ is defined as the Lie algebra generated by the $\operatorname{Aut}(\Gamma)$–orbit sums of the single-qubit and two-qubit Pauli operators, namely
\[
\left\{ i\sum\limits_{g \in \operatorname{Aut}(\Gamma)} X_{g(w)} \;\middle|\; w \in V(\Gamma) \right\}
\qquad\text{and}\qquad
\left\{ i\sum\limits_{g \in \operatorname{Aut}(\Gamma)} Z_{g(w)} Z_{g(w')} \;\middle|\; (ww') \in E(\Gamma) \right\}
\]
(see Section VI in \cite{KLFCCZ}).

 The automorphism group of the star graph $K_{1,n}$ is the group $S_n$, which acts by permuting the $n$ leaf vertices, while fixing the central vertex. Thus there are two vertex–orbits: the singleton orbit of the center, and the orbit formed by the leaves. Consequently, the corresponding orbit-sum generators are
\[
iX_0, \qquad i\sum\limits_{j=1}^n X_j,
\]
and for the edges (all of which are of the form $(0,i)$), the unique orbit-sum operator
\[
i\sum\limits_{j=1}^n Z_0 Z_j.
\]

When $n$ is even, the degrees of the vertices modulo $2$ satisfy:
\[
\deg(0) = n \equiv 0 \pmod{2}, \qquad 
\deg(j) = 1 \equiv 1 \pmod{2} \quad \text{for all } j \ge 1.
\]
Thus the center and the leaves lie in \emph{distinct degree-parity classes}. By Theorem~\ref{thm:CoolEltsInStdDLA}, this implies that the standard dynamical Lie algebra $\mathfrak{g}_{K_{1,n},\mathrm{std}}$ contains all orbit generators of $\mathfrak{g}_{K_{1,n},\mathrm{orb}}$. Hence, for even $n$, we obtain
\[
\mathfrak{g}_{K_{1,n},\mathrm{std}}
  \;=\;
\mathfrak{g}_{K_{1,n},\mathrm{orb}}.
\]
For odd values of $n$, this inclusion may fail. A concrete example is $K_{1,3}$, where
\[
\dim\bigl(\mathfrak{g}_{K_{1,3},\mathrm{std}}\bigr)=24,
\qquad
\dim\bigl(\mathfrak{g}_{K_{1,3},\mathrm{orb}}\bigr)=26.
\]
\end{rmk}

The standard reduced  dynamical Lie algebra at the central vertex $v=0$, 
denoted $\mathfrak{g}^0_{K_{1,n},\mathrm{std}}$, is generated by the elements
\begin{equation}
   \label{eq:GeneratorsOfReducedStarGraph}
    \mathcal{X}_{\widehat{0}} \;:=\; iX_1 + \cdots + iX_n, 
   \qquad 
   \mathcal{Z}_{\widehat{0}} \;:=\; iZ_1 + \cdots + iZ_n,
\end{equation}
and is isomorphic to the three-dimensional simple Lie algebra $\mathfrak{su}(2)$. 

\begin{rmk}
The standard reduced DLA $\mathfrak{g}^0_{K_{1,n},\mathrm{std}}$ is thus independent of $n$ and has dimension $3$, in stark contrast to the full standard DLA $\mathfrak{g}_{K_{1,n},\mathrm{std}}$, whose dimension grows with $n$ (see Table~\ref{tab:starDimensions}).  

While dimensions of $\mathfrak{g}_{K_{1,n},\mathrm{std}}$ can be computed explicitly for small $n$, a closed-form expression for its dimension, as well as a uniform description of its decomposition into simple Lie algebra components and its center, remains nontrivial. This highlights the simplifying effect of the vertex-reduction procedure on the DLA structure.
\end{rmk}

\subsection{Path Graphs}

Consider the path graph $P_n$, illustrated in Figure~\ref{pathGraph}.

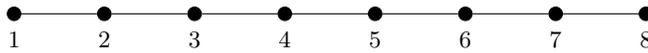
\begin{figure}[h!]
\begin{center}
\def\nn{8} 
\begin{tikzpicture}[scale=1, every node/.style={circle, draw, fill=black, inner sep=1.8pt}]
  \foreach \i in {1,...,\nn} {
    \node (v\i) at ({1.2*(\i-1)},0) {};
  }
  \foreach \i in {1,...,\nn} {
    \pgfmathtruncatemacro\ip{int(\i+1)}
    \ifnum\ip>\nn\else
      \draw (v\i) -- (v\ip);
    \fi
  }
  \foreach \i in {1,...,\nn} {
    \node[draw=none, fill=none] at ({1.2*(\i-1)},-0.35) {$\i$};
  }
\end{tikzpicture}
\caption{The path graph $P_8$ with $8$ vertices.}
\label{pathGraph}
\end{center}
\end{figure}

It was shown in \cite{KLFCCZ} that the free dynamical Lie algebra associated with $P_n$ satisfies
\[
\mathfrak{g}_{P_n,\mathrm{free}} \cong \mathfrak{so}(2n),
\]
and therefore
\[
\dim\!\left(\mathfrak{g}_{P_n,\mathrm{free}}\right) = 2n^2 - n.
\]
Moreover, the standard and orbit dynamical Lie algebras coincide and are isomorphic to the unitary Lie algebra,
\[
\mathfrak{g}_{P_n,\mathrm{std}}
=
\mathfrak{g}_{P_n,\mathrm{orb}}
\cong
\mathfrak{u}(n),
\]
as established in Theorem~1 and Lemma~A.6 of \cite{KLFCCZ}. In particular,
\[
\dim\!\left(\mathfrak{g}_{P_n,\mathrm{std}}\right)
=
\dim\!\left(\mathfrak{g}_{P_n,\mathrm{orb}}\right)
=
n^2.
\]

\medskip

Let $v = k \in V(P_n)$ be a fixed vertex.  
Then 
Remark~\ref{rmk:ExtensionOfResults} together with
Theorem~\ref{thm:DistDLAElements} imply the following containments:
\begin{equation}
\label{eq:PathContainments}
\mathfrak{g}^{\,v}_{P_n,\mathrm{std}}
\;\supseteq\;
\begin{cases}
\mathfrak{g}_{P_{n-1},\mathrm{free}},
& k = 1 \text{ or } k = n, \\[6pt]
\mathfrak{g}_{P_{(n-1)/2},\mathrm{free}},
& k = \frac{n+1}{2}, \; n \equiv 1 \pmod{2}, \\[6pt]
\mathfrak{g}_{P_{k-1},\mathrm{free}}
\;\oplus\;
\mathfrak{g}_{P_{n-k},\mathrm{free}},
& \text{otherwise}.
\end{cases}
\end{equation}

\medskip

In particular, if $v$ is a leaf vertex (i.e., $k=1$ or $k=n$), then
\[
\mathfrak{g}^{\,v}_{P_n,\mathrm{std}}
\;\supseteq\;
\mathfrak{g}_{P_{n-1},\mathrm{free}},
\]
and therefore
\[
\dim\!\left(\mathfrak{g}^{\,v}_{P_n,\mathrm{std}}\right)
\;\ge\;
\dim\!\left(\mathfrak{g}_{P_{n-1},\mathrm{free}}\right)
=
2(n-1)^2 - (n-1).
\]

For all $n \ge 5$, one verifies that
\[
2(n-1)^2 - (n-1) > n^2,
\]
which shows that the dimension of the standard reduced dynamical Lie algebra at a leaf vertex strictly exceeds the dimension of the standard dynamical Lie algebra of the original path graph.


\subsection{Cycle Graphs}

We now consider the cycle graphs \(C_n\), illustrated in Figure~\ref{cycleGraph}.

\begin{figure}[h!]
\begin{center}
\def\nn{7} 
\begin{tikzpicture}[scale=1.2, every node/.style={circle, draw, fill=black, inner sep=1.6pt}]
  \foreach \i in {1,...,\nn} {
    \node (v\i) at (90+\i*360/\nn:2) {};
  }
  \foreach \i in {1,...,\nn} {
    \pgfmathtruncatemacro\ip{int(mod(\i, \nn) + 1)}
    \draw (v\i) -- (v\ip);
  }
  \foreach \i in {1,...,\nn} {
    \node[draw=none, fill=none] at (90+\i*360/\nn:2.3) {$\i$};
  }
\end{tikzpicture}
\caption{The cycle graph \(C_7\) with \(7\) vertices.}
\label{cycleGraph}
\end{center}
\end{figure}
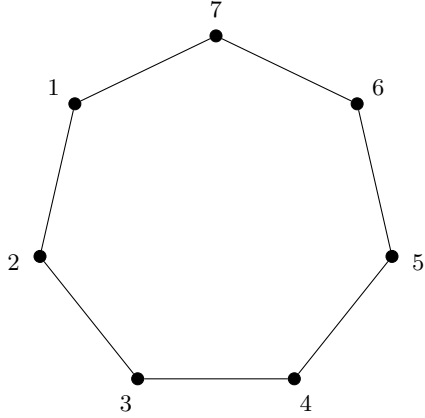

It follows from Theorem~1 in \cite{KLFCCZ} that the free standard dynamical Lie algebra associated to a cycle graph satisfies
\[
\mathfrak{g}_{C_n,\mathrm{free}} \cong \mathfrak{so}(2n) \oplus \mathfrak{so}(2n).
\]

In addition, Theorem~33 in \cite{ASYZ1} gives an explicit decomposition:
\[
\mathfrak{g}_{C_n,\mathrm{std}} 
\cong 
\underbrace{\mathfrak{su}(2) \oplus \cdots \oplus \mathfrak{su}(2)}_{n - 1 \text{ copies}}
\;\oplus\; \mathfrak{u}(1)^{\oplus 2}.
\]
In particular,
\[
\dim\!\bigl(\mathfrak{g}_{C_n,\mathrm{std}}\bigr)
=3(n-1)+2 = 3n - 1.
\]

According to Lemma~8 in \cite{KLFCCZ}, the natural Lie algebra \(\mathfrak{g}_{C_n,\mathrm{nat}}\) associated to the cycle graph also satisfies
\[
\dim(\mathfrak{g}_{C_n,\mathrm{nat}})=3n-1.
\]

We always have the containments
\[
\mathfrak{g}_{C_n,\mathrm{std}}
\;\subseteq\;
\mathfrak{g}_{C_n,\mathrm{orb}}
\;\subseteq\;
\mathfrak{g}_{C_n,\mathrm{nat}}.
\]
Since all three Lie algebras have the same dimension, they must coincide. Therefore,
\[
\mathfrak{g}_{C_n,\mathrm{std}}
=\mathfrak{g}_{C_n,\mathrm{orb}}
=\mathfrak{g}_{C_n,\mathrm{nat}}
\;\cong\;
\underbrace{\mathfrak{su}(2) \oplus \cdots \oplus \mathfrak{su}(2)}_{n - 1 \text{ copies}}
\;\oplus\; \mathfrak{u}(1)^{\oplus 2}.
\]

   We turn our attention to the reduced DLAs. Let $v \in C_n$ be any vertex of the cycle graph $C_n$. Consider the corresponding standard reduced  dynamical Lie algebra $\mathfrak{g}^v_{C_n,\mathrm{std}}$. By Lemma \ref{lem:ReducedSubalgebras}, this algebra contains the standard DLA of the path graph obtained by removing $v$ and its two incident edges from $C_n$, that is,
\[
\mathfrak{g}_{P_{n-1},\mathrm{std}} \subseteq \mathfrak{g}^v_{C_n,\mathrm{std}}.
\]
As a consequence, we have the dimension bound
\[
\dim\big(\mathfrak{g}^v_{C_n,\mathrm{std}}\big) \ge \dim\big(\mathfrak{g}_{P_{n-1},\mathrm{std}}\big) = (n-1)^2,
\]
which exceeds the dimension of the standard DLA of the cycle graph:
\[
\dim\big(\mathfrak{g}_{C_n,\mathrm{std}}\big) = 3n-1,
\]
for all $n \ge 5$, i.e.

\begin{equation}
    \label{eq:cyclicGraphsDimInequality}
\dim\big(\mathfrak{g}^v_{C_n,\mathrm{std}}\big)>\dim\big(\mathfrak{g}_{C_n,\mathrm{std}}\big) \text{ for } n>5
\end{equation}

\section{Reduced DLAs for the Grover-mixer QAOA}
\label{sec:ReducedGMQAOA}

In this subsection we compare the reduced and standard dynamical Lie algebras arising in the variant of QAOA that employs the 
\emph{Grover mixer}, defined as the  projector onto the initial state:
\begin{equation}\label{eq:GM}
    G_M \;:=\; \ket{\xi}\bra{\xi}
    \;=\;
    H^{\otimes n} \ket{0\cdots 0}\bra{0\cdots 0} H^{\otimes n},
\end{equation}
where $\ket{\xi} = \ket{+\cdots+}$ is taken to be the uniform superposition of all computational basis states.

We briefly recall the setup and refer the reader to~\cite{BE} and~\cite{TNB} for a more detailed discussion. We consider a general combinatorial optimization problem of minimizing an objective function $F$ of the form given in~\eqref{eq:optfunc}.  The variant of QAOA, employing the mixer Hamiltonian $G_M$ from \eqref{eq:GM} was introduced in~\cite{BE} and is known as the \emph{Grover-mixer QAOA} (GM-QAOA).

Notice that  the range of the reduced objective function $\widetilde{F}(x)$ coincides with the range of $F(x)$ (each value appears with multiplicity reduced by a factor of two). Consequently, by Theorem~III.1 of~\cite{TNB}, the dynamical Lie algebra $\mathfrak{g}_\xi$ generated by the GM-QAOA Hamiltonians, together with all of its bit-reduced versions $\mathfrak{g}^b_\xi$ (obtained by fixing any bit $b$ to the value $0$ or $1$), are all isomorphic to
\begin{equation}
    \label{eq:GroverDLAdim}
    \mathfrak{su}(r) \oplus \mathfrak{u}(1) \oplus \mathfrak{u}(1),
\end{equation}
    
where $r$ denotes the number of distinct objective values attained by $F(x)$ and $\widetilde{F}(x)$.

We now look at the situation for QAOA implementing Grover search as in~\cite{jiang2017near}. Consider the case of searching for a single marked element $\omega \in \mathbb{B}^n$. The problem Hamiltonian is the rank-one projector
\[
    H_P = \ket{\omega}\bra{\omega},
\]
while the mixer $H_M$ is chosen to be the standard $X$-mixer $B=\sum\limits_{j=1}^n X_j$. These generate the \emph{standard dynamical Lie algebra}
\[
    \mathfrak{g}_{\mathrm{std}}
    := 
    \langle iB,\; iH_P\rangle_{\mathrm{Lie}}.
\]
By Corollary~V.3 of~\cite{TNB}, this algebra is isomorphic to

\begin{equation}
    \label{eq:GroverDLAdim2}
    \mathfrak{su}(n+1) \oplus \mathfrak{u}(1) \oplus \mathfrak{u}(1),
\end{equation}
while all corresponding standard reduced algebras $\mathfrak{g}^b_{\mathrm{std}}$ (obtained by restricting the bit $b$ to a fixed bit value) satisfy
\[
    \mathfrak{g}^b_{\mathrm{std}}
    \;\cong\;
    \mathfrak{su}(n) \oplus \mathfrak{u}(1) \oplus \mathfrak{u}(1).
\]

\begin{rmk}
The above results highlight a distinctive feature of the Grover-mixer version of QAOA: 
its standard reduced dynamical Lie algebras are isomorphic to the original ones. In contrast, for QAOA with the standard $X$-mixer, bit reductions can alter the structure of the DLA in many different ways. 
\end{rmk}

\section{Action on Hilbert Spaces}
\label{sec:HilbSpaces}

The original Hilbert space $W$, as well as the reduced spaces $W_j$, admit decompositions into isotypic components under the action of the respective dynamical Lie algebras. The dimensions of these components are central to the analysis of classical simulability~\cite{CLG,GLCCS} and constitute the primary object of study in this section.

\begin{thm}
\label{DecompThm}
  Let $\Gamma$ be a connected graph on $n$ vertices.  Then the Hilbert space $W$ admits a decomposition
\begin{equation}
    \label{eq:EvenOddSpaces}
    W = W_{\mathrm{even}} \oplus W_{\mathrm{odd}},
\end{equation}
where $W_{\mathrm{even}}$ is spanned by all computational basis states that contain an even number of $\ket{-}$ factors, and $W_{\mathrm{odd}}$ is spanned by those containing an odd number of $\ket{-}$ factors.

Moreover, these subspaces have equal dimension,
\[
\dim W_{\mathrm{even}} = \dim W_{\mathrm{odd}} = 2^{n-1}.
\]
This decomposition yields a direct sum of two irreducible representations of the dynamical Lie algebra $\mathfrak{g}_{\Gamma,\mathrm{free}}$.
\end{thm}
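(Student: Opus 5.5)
The plan is to exhibit an explicit operator commuting with the whole free algebra, and to take $W_{\mathrm{even}}, W_{\mathrm{odd}}$ to be its eigenspaces. Irreducibility will then come from computing the full commutant of $\mathfrak{g}_{\Gamma,\mathrm{free}}$ in $\mathrm{End}(W)$ and showing it is only two-dimensional.

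\emph{Step 1 (invariance).} Consider the global parity operator
\[
P := \prod_{w\in V(\Gamma)} X_w = X^{\otimes n}.
\]
Each $X_w$ commutes with $P$. Each edge term $Z_wZ_{w'}$ also commutes with $P$, because $Z_w$ anticommutes with $X_w$ and $Z_{w'}$ anticommutes with $X_{w'}$, and the two signs cancel. Hence $P$ commutes with every generator of $\mathfrak{g}_{\Gamma,\mathrm{free}}$, and so with the whole algebra. In the $X$-eigenbasis $\{\ket{+},\ket{-}\}^{\otimes n}$, $P$ acts on a basis vector by $(-1)^{\#\{\ket{-}\text{ factors}\}}$. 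Therefore its $\pm1$ eigenspaces are exactly $W_{\mathrm{even}}$ and $W_{\mathrm{odd}}$. These spaces are invariant, and together they give the decomposition $W=W_{\mathrm{even}}\oplus W_{\mathrm{odd}}$. Both have dimension $2^{n-1}$, since half of the $2^n$ binary strings have even weight (equivalently, $\operatorname{Tr}P=0$).

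\emph{Step 2 (commutant).} Let $A\in\mathrm{End}(W)$ commute with all generators. Commuting with every $X_w$ forces $A$ to be diagonal in the $X$-basis. Since the diagonal algebra is spanned by the Pauli strings $X_S:=\prod_{w\in S}X_w$ for $S\subseteq V$, we can write $A=\sum_S c_S X_S$. Conjugation by $Z_wZ_{w'}$ sends $X_S$ to $(-1)^{|S\cap\{w,w'\}|}X_S$, and the $X_S$ are linearly independent. So commuting with $Z_wZ_{w'}$ forces $c_S=0$ unless $|S\cap\{w,w'\}|$ is even, i.e.\ unless $S$ contains both endpoints of the edge or neither. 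Because $\Gamma$ is connected, this edge-by-edge condition propagates along paths, so $S=\emptyset$ or $S=V$. Thus the commutant is $\operatorname{span}_{\mathbb{C}}\{I,P\}$.

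\emph{Step 3 (irreducibility).} The complex subspaces invariant under the real Lie algebra $\mathfrak{g}_{\Gamma,\mathrm{free}}$ are the same as those invariant under the unital complex associative algebra $\mathcal{A}$ it generates. $\mathcal{A}$ is closed under adjoints, since the generators are skew-Hermitian. By the double commutant theorem, $\mathcal{A}$ equals the commutant of $\operatorname{span}\{I,P\}$, which is $\mathrm{End}(W_{\mathrm{even}})\oplus\mathrm{End}(W_{\mathrm{odd}})$. Hence each summand is irreducible.

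The two representations are moreover inequivalent. Indeed, if they were equivalent, the commutant would contain an intertwiner mixing the two summands and would have dimension greater than $2$. This is consistent with, but not needed for, the statement.

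The only delicate point I anticipate is the passage from the real Lie algebra to the complex associative algebra in Step 3 (and, relatedly, the identity $I$ not lying in $\mathfrak{g}_{\Gamma,\mathrm{free}}$). Adjoining $I$ does not change which subspaces are invariant, and the $*$-closure argument above handles the rest. Connectivity of $\Gamma$ enters only in Step 2, where it cuts the commutant down to $\{I,P\}$.
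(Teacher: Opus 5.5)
Your proof is correct, but it reaches irreducibility by a different route from the paper's. Your Step~1 matches the paper's invariance argument in substance.

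The paper uses a cyclic-vector argument. From any nonzero vector in a parity sector, it extracts a single Hadamard basis vector using the $X_w$ (in effect, the projectors $(I\pm X_w)/2$ of the associative envelope). It then uses connectivity to telescope products of edge terms along a path into $Z_wZ_{w'}$ for arbitrary $w,w'$. These act transitively on the Hadamard basis vectors of fixed parity, so every nonzero vector generates its whole sector.

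You work dually with the same two ingredients: the $X_w$ have one-dimensional joint eigenspaces, and connectivity propagates the edge constraints. From these you show the commutant is $\operatorname{span}_{\mathbb{C}}\{I,P\}$, and the double commutant theorem finishes the argument.

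The paper's argument is elementary and constructive, since it exhibits the operators that move basis vectors. Its write-up is loose, however, about passing from the real Lie algebra to the associative algebra: it writes $\operatorname{span}(X_1,\dots,X_n)(v)$ where products and projectors are needed. Your argument invokes a standard theorem, but it handles that passage cleanly and yields more. It shows the associative envelope is exactly $\mathrm{End}(W_{\mathrm{even}})\oplus\mathrm{End}(W_{\mathrm{odd}})$ and that the two sectors are inequivalent.

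Your method also transfers at once to Theorem~\ref{DecompThmReduced}. The extra generators $iZ_w$, $w\in\mathcal{N}_{v,1}$, force every surviving $X_S$ (with $S\subseteq V\setminus\{v\}$) to avoid the neighbours of $v$. Connectivity of $\Gamma$ then gives $S=\varnothing$, so the commutant on $W_v$ is $\mathbb{C}I$.
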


\begin{rmk}
    A similar result in \cite{KLFCCZ} was previously established for \textit{archetypal} graphs, connected graphs that are neither bipartite nor cycles.
\end{rmk}

\begin{rmk}
\label{rmk:GlobalParity}
    The representations \( W_{\text{even}} \) and \( W_{\text{odd}} \) appearing in Theorem \ref{DecompThm} correspond to the trivial and sign representations of the group \( \mathbb{Z}_2 \), respectively. Here, \( \mathbb{Z}_2 \) acts via the global Pauli-\(X\) operator \( X_1 \cdots X_n \), which flips all qubits simultaneously. 
\end{rmk}

Theorem~\ref{DecompThm} admits the following analogue for reduced DLAs.  

\begin{thm}
    Let $\Gamma$ be a connected graph and $v$ a vertex in $\Gamma$. Then the reduced Hilbert space $W_v$ is an irreducible representation of the Lie algebra $\mathfrak{g}^{\,v}_{\Gamma,\mathrm{free}}$.
\label{DecompThmReduced}
\end{thm}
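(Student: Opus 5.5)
The plan is to prove irreducibility through the commutant rather than by decomposing $\mathfrak{g}^{\,v}_{\Gamma,\mathrm{free}}$ explicitly. A complex subspace of $W_v$ is invariant under the Lie algebra if and only if it is invariant under the unital associative algebra $\mathcal{A}\subseteq\mathrm{End}(W_v)$ generated by the generators listed in \eqref{eq:ReducedFreeDLAgenerators}, namely
\[
\{ iX_w \mid w\neq v\},\qquad \{ iZ_wZ_{w'} \mid (ww')\in E(\Gamma_v)\},\qquad \{ iZ_u \mid (vu)\in E\}.
\]
So it suffices to show that $\mathcal{A}=\mathrm{End}(W_v)$. Then the commutant of $\mathfrak{g}^{\,v}_{\Gamma,\mathrm{free}}$ consists of scalars, and Schur's lemma (equivalently, Burnside's theorem) gives irreducibility. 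Since $\mathcal{A}$ is closed under products and real linear combinations of the generators, and the generators square to $-\mathbb{1}$, we may freely multiply generators together.

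\textbf{Step 1 (seed qubits).} For each neighbor $u$ of $v$, the algebra $\mathcal{A}$ contains $X_u$ and $Z_u$, hence also $Y_u=-iX_uZ_u$ up to a scalar. Therefore $\mathcal{A}$ contains the full matrix algebra $\mathrm{End}(\mathbb{C}^2_u)\otimes\mathbb{1}$ acting on qubit $u$. Because $\Gamma$ is connected and has at least two vertices, $v$ has at least one neighbor, so this step is never vacuous.

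\textbf{Step 2 (propagation).} Suppose $Z_w\in\mathcal{A}$ for some $w\neq v$, and let $(ww')\in E(\Gamma_v)$. Then
\[
Z_{w'}=Z_w\cdot(Z_wZ_{w'})\in\mathcal{A}.
\]
Together with $X_{w'}\in\mathcal{A}$, this puts the full algebra on qubit $w'$ inside $\mathcal{A}$. Now run a breadth-first search in $\Gamma_v$ starting from the neighbors of $v$. The reduced graph $\Gamma_v$ may be disconnected, but every connected component of $\Gamma_v$ contains a neighbor of $v$; otherwise that component would be disconnected from $v$ in $\Gamma$, contradicting connectedness. Hence the search reaches every vertex $w\neq v$, and $\mathcal{A}$ contains $X_w$ and $Z_w$ for all $w\in V(\Gamma_v)$.

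\textbf{Step 3 (conclusion).} The single-qubit Paulis $X_w,Z_w$ for all $w\in V(\Gamma_v)$ generate all Pauli strings on the $n-1$ qubits. These strings span $\mathrm{End}(W_v)$, so $\mathcal{A}=\mathrm{End}(W_v)$, and $W_v$ is an irreducible representation of $\mathfrak{g}^{\,v}_{\Gamma,\mathrm{free}}$.

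The only delicate point is the passage from the Lie algebra to the associative algebra, i.e.\ that invariant subspaces and commutants coincide for the two. This is standard, because a subspace invariant under a set of operators is invariant under all products and linear combinations of them. I expect no real obstacle beyond checking that the connectivity argument in Step 2 covers components of $\Gamma_v$ that do not themselves contain a neighbor of $v$ a priori; Step 2 shows that no such component exists.

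Notably, the reduced setting avoids the $W_{\mathrm{even}}\oplus W_{\mathrm{odd}}$ splitting of Theorem~\ref{DecompThm}. That splitting arises because the global flip $X_1\cdots X_n$ commutes with all $ZZ$ terms, whereas here the single-site generators $iZ_u$ break this symmetry.
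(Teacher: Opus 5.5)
Your proof is correct and takes essentially the paper's route. In both arguments the key step is to produce every single-site $Z_w$ in the associative envelope by multiplying $Z_u$, for a neighbor $u$ of $v$, with the edge terms $Z_wZ_{w'}$ along paths in $\Gamma_v$; your stepwise $Z_{w'}=Z_w(Z_wZ_{w'})$ is a cleanly stated version of the paper's path-product identity.

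The only difference is the finish. The paper uses the $X_w$'s to extract a Hadamard basis vector from any nonzero vector, then uses the $Z_w$'s to move it transitively through the Hadamard basis. You instead conclude directly that the complex associative algebra contains all Pauli strings and so equals $\mathrm{End}(W_v)$, which packages the same idea more cleanly. Take $\mathcal{A}$ over $\mathbb{C}$ rather than with the "real linear combinations" you wrote; this is harmless, since invariant subspaces are complex.
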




\section{Connection to the Literature} 
\label{sec:ConnectionLiterature}

We begin by relating the results of this work to a central conjecture proposed in \cite{KLFCCZ} (see Conjecture~1 therein).  
In that work, the authors introduced the notion of the \emph{natural dynamical Lie algebra}, designed to respect the intrinsic symmetries of the MaxCut problem.  
Concretely, this algebra is defined as the subalgebra of the free dynamical Lie algebra that commutes with both the action of the graph automorphism group and the global bit-flip symmetry on the Hilbert space $W$.  
A precise definition may be found in Section~VI of \cite{KLFCCZ}.

For our purposes, the key structural fact is that the natural dynamical Lie algebra sits between the standard and free DLAs:
\begin{equation}
    \label{eq:DLAInclusions}
    \mathfrak{g}_{\Gamma, \text{std}} \subseteq \mathfrak{g}_{\Gamma, \text{nat}} \subseteq \mathfrak{g}_{\Gamma, \text{free}}.
\end{equation}

\begin{defn}
    Given a vector $v$ in a representation space $V$ of a Lie algebra $\mathfrak{g}$, the \emph{cyclic representation generated by $v$} is the smallest $\mathfrak{g}$-invariant subspace of $V$ that contains $v$. It is the span of all vectors obtained by applying finite products of elements of $\mathfrak{g}$ to $v$, i.e.,
\[
\mathfrak{g}\cdot v := \mathrm{span} \left\{ x_1x_2\cdots x_k \cdot v \;\middle|\; x_1, \ldots, x_k \in \mathfrak{g},\; k \geq 0 \right\}.
\]

\end{defn}

Let
\begin{equation}
\label{eq:uniformSuperpositionState}
\ket{\xi_n}
:= \underbrace{\ket{+\,+\,\cdots\,+}}_{n}
\end{equation}
denote the uniform superposition over the standard basis states indexed by \(\mathbb{B}^n\).

Denote by $m$ and $M$ the respective dimensions of the cyclic subrepresentations of $W$ generated by $\ket{\xi_n}$ under the actions of
$\mathfrak{g}_{\Gamma,\mathrm{std}}$ and $\mathfrak{g}_{\Gamma,\mathrm{nat}}$.

\begin{rmk}
One motivation for focusing on these subspaces is rooted in the structure of the Quantum Approximate Optimization Algorithm. The state $\ket{\xi_n}$ is the standard choice of initial state for QAOA, and the unitaries appearing in the QAOA circuit are generated by elements of $\mathfrak{g}_{\Gamma,\mathrm{std}}$. Consequently, the entire QAOA evolution starting from $\ket{\xi}$ remains confined to the cyclic subrepresentation up to the final measurement.

In other words, this subspace captures the effective portion of the Hilbert space that is dynamically accessible to QAOA when initialized in the uniform superposition state. Knowing its dimension therefore provides insight into the true degrees of freedom governing the algorithm’s evolution, as opposed to the full ambient Hilbert space.
\end{rmk}

The conjecture proposed in \cite{KLFCCZ} can be reformulated as follows.

\begin{conj}
\label{Conjecture1}
For most graphs $\Gamma$ on $n$ vertices, the quantity
\begin{equation}
\label{eq:Delta}
    \Delta := M - m
\end{equation}

is bounded by a polynomial in $n$.  
Equivalently, the difference between the dimensions of the cyclic representations generated by $\ket{\xi_n}$ for
$\mathfrak{g}_{\Gamma,\mathrm{std}}$ and $\mathfrak{g}_{\Gamma,\mathrm{nat}}$
grows at most polynomially with the system size.
\end{conj}

While a reduced analog of the natural dynamical Lie algebra has not yet been formally defined, a natural candidate would be the (largest with respect to inclusion) subalgebra of
$\mathfrak{g}^v_{\Gamma,\mathrm{free}}$ commuting with the action of the automorphism group of the reduced graph $\Gamma_v$.
This yields an algebra squeezed between the reduced DLAs
$\mathfrak{g}^v_{\Gamma,\mathrm{std}}$ and $\mathfrak{g}^v_{\Gamma,\mathrm{free}}$.

We now explain why, in the reduced setting considered in this work, the analogue of Conjecture~\ref{Conjecture1} holds in a particularly strong form.

By combining the identity
\[
\mathfrak{g}^v_{\widehat{\Gamma}_v,\mathrm{free}}
=
\mathfrak{g}^v_{\widehat{\Gamma}_v,\mathrm{std}},
\]
which holds for a suitable extension $\widehat{\Gamma}_v$ of any graph $\Gamma$ with chosen vertex $v$ (constructed in Section~\ref{sec:DistinguishedElts}),
together with the fact that the reduced Hilbert space $W_v$ carries an irreducible representation of $\mathfrak{g}^v_{\Gamma,\mathrm{free}}$
(Theorem~\ref{DecompThmReduced}), we obtain the following.

The cyclic representations generated by the initial state $\ket{\xi_{n-1}}$ under the actions of
$\mathfrak{g}^v_{\Gamma,\mathrm{std}}$
and
$\mathfrak{g}^v_{\Gamma,\mathrm{free}}$
both coincide with the entire reduced Hilbert space $W_v$.

In particular, the analogue of the quantity $\Delta$ in \eqref{eq:Delta} is equal to zero in the reduced setting.  
Therefore, Conjecture~\ref{Conjecture1} holds for the reduced dynamical Lie algebras considered in this work, associated to the natural extension of any graph $\Gamma$ with a chosen vertex $v$.

\subsection{Implications for optimization landscapes and barren plateaus}
\label{subsec:BarrenPlateaus}

Another important consequence of our results concerns the structure of the optimization landscape associated with variational quantum algorithms.  
In particular, the dimension and representation-theoretic structure of the underlying dynamical Lie algebra  are known to be tightly linked to the possible emergence of barren plateaus, regions of exponentially suppressed gradients in the parameter landscape
\cite{FHCKYHSP,LJGCC,LCSMCC,RBSKMLC}.

Beyond variational optimization, DLAs also play a central role in quantum machine learning (QML).
In supervised QML, one seeks a circuit within a parametrized family, generated by a finite set of Hamiltonians, that approximates a target function and generalizes to unseen data.
The associated DLAs provide rigorous criteria for the presence or absence of barren plateaus
\cite{GLCCS,LTWS,WHSU}, and furnish powerful tools for analyzing questions of classical simulability
\cite{CLG,GLCCS}.

The barren plateau phenomenon refers to the exponential suppression of gradients of the loss function with respect to circuit parameters, rendering classical training methods ineffective
\cite{mcclean2018barren,larocca2025barren}.
A key diagnostic is the variance of the loss function over the parameter space: if this variance decays exponentially in the number of qubits $n$, then gradient magnitudes also vanish exponentially
\cite{arrasmith2021equivalence}.

Recall that for sufficiently large circuit depth $p$, the variance of the loss function can be approximated using the expression in \eqref{eq:VarAndDim}, which depends explicitly on the projections of the initial state and problem Hamiltonian onto irreducible components of the DLA.

Let
\[
\rho_{n-1} := \ket{\xi_{n-1}}\bra{\xi_{n-1}}
\]
denote the reduced uniform superposition state, and define the normalized reduced problem Hamiltonian
\[
\widehat{H}^v_P := \frac{1}{|\lambda_{\max}|}\, H^v_P,
\]
where $\lambda_{\max}$ is the maximal value of the reduced objective function (see \eqref{eq:maxcutredobjfunction}).

Theorem~\ref{thm:ParitySeparationImpliesLocalX}, together with Theorem~1 of \cite{KLFCCZ}, implies that for any pair $(\Gamma, v)$ for which the identity \eqref{eq:ReducedDLAEquality} holds and the reduced graph $\Gamma_v$ is \emph{archetypal}, we have
\[
\mathfrak{g}^v_{\Gamma,\mathrm{std}}
\simeq \mathfrak{su}(2^{n-1}),
\]

is a simple Lie algebra of dimension $4^{\,n-1}-1$. In addition, we compute
\begin{equation}
    \label{eq:ProjBoundRho}
    \mathcal{P}_{\mathfrak{g}^v_{\Gamma,\mathrm{std}}}(\rho_{n-1}) = \operatorname{Tr}(\rho_{n-1}^2)=1
\end{equation}
(see \eqref{eq:VarAndDim}). Moreover, we have the general upper bound
\begin{equation}
    \label{eq:ProjBoundHP}
    \mathcal{P}_{\mathfrak{g}^v_{\Gamma,\mathrm{std}}}(\widehat{H}^v_P) \;\le\; 2^{n-1}.
\end{equation}

Collecting these bounds and applying \eqref{eq:VarAndDim}, we obtain
\begin{equation}
    \label{eq:VarUpperBound}
\operatorname{Var}
\bigl[
\ell_{\boldsymbol{\beta}, \boldsymbol{\gamma}} (\rho_{n-1}, \widehat{H}^v_P)
\bigr]
\;=\;
\frac{
\mathcal{P}_{\mathfrak{g}^v_{\Gamma,\mathrm{std}}}(\rho_{n-1})\,
\mathcal{P}_{\mathfrak{g}^v_{\Gamma,\mathrm{std}}}(\widehat{H}^v_P)
}{\dim(\mathfrak{g}^v_{\Gamma,\mathrm{std}})}
\;\le\;
\frac{2^{n-1}}{4^{n-1}-1}.
\end{equation}

In particular, the variance decays exponentially in $n$, implying the presence of a barren plateau.


\section{Conclusions} 

We investigated the impact of classical bit-flip symmetry reduction on QAOA through the structure of its dynamical Lie algebras. Although symmetry reduction is classically trivial, we showed that it can induce substantial and nontrivial changes in the quantum dynamics of QAOA.  

For QAOA on MaxCut with the standard mixer, we derived explicit  graph-theoretic conditions under which the standard reduced dynamical Lie algebra contains the free dynamical Lie algebra of the reduced problem. We identified sufficient conditions guaranteeing maximal reduced expressivity, proved that this regime can always be enforced via a quadratic graph extension, and exhibited families of graphs where the algebra dimension grows exponentially. By contrasting these results with Grover-mixer QAOA, we demonstrated that the observed behavior is not universal but depends critically on the underlying algebraic structure.  

Overall, our results establish symmetry reduction as a meaningful mechanism for shaping QAOA dynamics and highlight reduced dynamical Lie algebras as a fundamental tool for understanding expressivity and trainability in variational quantum algorithms. Numerically, we show how vertex reductions lead to an increase in the variance of the QAOA loss function gradient and present the idea of artificial leaf addition as a possible technique for further reduction in the reduced DLAs dimensions.

Our findings also underscore the importance of selecting the reduction vertex strategically, according to criteria such as balancing expressivity against the risk of barren plateaus. This points to the potential utility of classical (polynomial-time) algorithms that can compute or estimate the dimensions of standard reduced DLAs, or, ideally, provide simple factor decompositions. Most technical results in this manuscript are constructive and could, in principle, be implemented to guide subsequent algorithm design and analysis in practical QAOA applications.

\section*{Acknowledgments}
We would like to thank Rui Mao for a careful reading of an earlier version of the manuscript and for identifying a counterexample to the corresponding version of Theorem~\ref{thm:DistDLAElements}.
This research was supported in part by NSF Award No. 2427042. 

\bibliographystyle{plain}
\bibliography{References}

\appendix
\section{Structural Preliminaries}
\label{sec:Preliminaries}

The results and constructions presented in this section provide the main preliminary ingredients for the proofs of the statements appearing in Section~\ref{sec:DistinguishedElts}. While these results are developed here in service of the main theorems, they are formulated in a general setting and may be of independent interest.

\subsection{Distinguished Relations and Elements in Standard DLAs}
We begin by deriving several fundamental relations in the standard dynamical Lie algebras. 
Throughout, we adopt the following notation for iterated adjoint actions. 
For elements $A,B$ of a Lie algebra $\mathfrak{g}$ and $k \in \mathbb{N}$, define
\begin{equation}
\label{eq:adjoint-power}
\operatorname{ad}_A^{k}(B) := [A,\operatorname{ad}_A^{k-1}(B)],
\qquad 
\operatorname{ad}_A^{0}(B) := B .
\end{equation}
That is, $\operatorname{ad}_A^{k}(B)$ denotes the $k$-fold adjoint action of $A$ on $B$.

\begin{prop}
The following relations hold in the algebra $\mathfrak{g}_{\Gamma,\text{std}}$: 
\begin{enumerate}
\item $\operatorname{ad}^k_Z(X) =  -2^ki \sum\limits_{v \in V} Y_v \left( \sum\limits_{w \in \mathcal{N}_v} Z_w \right)^k, \quad \text{for $k$ odd,}$
    \item $\operatorname{ad}^k_Z(X) =  -2^ki \sum\limits_{v \in V} X_v \left( \sum\limits_{w \in \mathcal{N}_v} Z_w \right)^k, \quad\quad \text{for $k$ even,}$
    \item $\operatorname{ad}^3_X(Z)+16\operatorname{ad}_X(Z) =0$.
\end{enumerate}
\label{propRelns}
\end{prop}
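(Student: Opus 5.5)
The plan is to verify all three identities by direct commutator computations with Pauli operators. The key structural observation is that the problem generator can be regrouped vertex by vertex. For each $v\in V$ set
\[
S_v:=\sum_{w\in\mathcal{N}_v} Z_w .
\]
Then the edge terms containing $Z_v$ sum to $Z_vS_v$, and $Z=\tfrac{i}{2}\sum_{v}Z_vS_v$. Since $S_v$ involves only $Z$'s at sites $w\neq v$, it commutes with $X_v$, $Y_v$ and $Z_v$. It also commutes with every term of $Z$, because all $Z$-strings commute.

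For items 1 and 2 I will induct on $k$. For the base case, only the edges $(v,w)$ with $w\in\mathcal{N}_v$ fail to commute with $X_v$. The relation $Z_vX_v-X_vZ_v=2iY_v$ then gives
\[
[Z,X]=-\sum_v [Z_vS_v,X_v]=-2i\sum_v Y_vS_v ,
\]
which is the $k=1$ case. For the inductive step, apply $\operatorname{ad}_Z$ to a single summand $Y_vS_v^{k}$ or $X_vS_v^{k}$. Only the terms $iZ_vS_v$ act nontrivially. Because $S_v$ commutes with everything involved, we get
\[
[\,iZ_vS_v,\;P_vS_v^k\,]=i\,[Z_v,P_v]\,S_v^{k+1}\qquad (P\in\{X,Y\}).
\]
Using $[Z_v,Y_v]=-2iX_v$ and $[Z_v,X_v]=2iY_v$, each step multiplies the coefficient by $2$ and swaps $X_v\leftrightarrow Y_v$, while raising the power of $S_v$ by one. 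This produces the parity split between items 1 and 2. The only real work here is sign bookkeeping: one factor of $-1$ appears on the $Y\to X$ step and not on the $X\to Y$ step. I expect this to be the main obstacle. I will track it carefully by also checking the step $k=1\to2$ explicitly, and, if needed, adjust the overall sign convention (note that the $k=0$ term is $X$ itself, so the displayed formulas should be read for $k\ge1$).

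For item 3, I will argue edge by edge. A fixed edge term $iZ_aZ_b$ commutes with every $X_w$, $w\neq a,b$. Therefore $\operatorname{ad}_X$ acts on it as $\operatorname{ad}_{i(X_a+X_b)}$, which preserves the four-dimensional span of $Z_aZ_b,\ Y_aZ_b,\ Z_aY_b,\ Y_aY_b$. Split
\[
Z_aZ_b=\tfrac12(Z_aZ_b+Y_aY_b)+\tfrac12(Z_aZ_b-Y_aY_b).
\]
The first piece commutes with $X_a+X_b$, since it is invariant under simultaneous rotation about the $x$-axis on both qubits; hence it is killed by $\operatorname{ad}_X$. The second piece transforms with twice the single-qubit angular frequency. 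A direct computation gives $\operatorname{ad}_X(Z_aZ_b-Y_aY_b)=2(\ldots)$ and $\operatorname{ad}_X^2(Z_aZ_b-Y_aY_b)=-16(Z_aZ_b-Y_aY_b)$. Consequently $\operatorname{ad}_X$ satisfies $t(t^2+16)=0$ on each edge block. Summing over edges yields $\operatorname{ad}_X^3(Z)+16\operatorname{ad}_X(Z)=0$.
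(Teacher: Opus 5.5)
\textbf{Items 1 and 2.} There is a genuine gap here. It sits exactly at the sign bookkeeping you deferred, and it cannot be closed, because the stated formulas are false for $k\equiv 0,3 \pmod 4$.

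Your own setup gives
\[
\operatorname{ad}_Z\bigl(c\,iX_vS_v^k\bigr)=c\,i\cdot i[Z_v,X_v]S_v^{k+1}=-2c\,iY_vS_v^{k+1},
\qquad
\operatorname{ad}_Z\bigl(c\,iY_vS_v^k\bigr)=c\,i\cdot i[Z_v,Y_v]S_v^{k+1}=+2c\,iX_vS_v^{k+1}.
\]
So the factor $-1$ sits on the $X\to Y$ step, not on the $Y\to X$ step as you wrote. Every two steps therefore multiply each vertex summand by $-4S_v^2$, which gives
\[
\operatorname{ad}_Z^{2j}(X)=(-4)^j\, i\sum_{v\in V} X_vS_v^{2j},\qquad
\operatorname{ad}_Z^{2j+1}(X)=-2(-4)^j\, i\sum_{v\in V} Y_vS_v^{2j+1}.
\]
This agrees with the claimed coefficient $-2^k$ only for $k\equiv 1,2\pmod 4$. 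No ``overall sign convention'' can repair this, because the true sign is not constant in $k$.

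A one-edge check confirms it. For $\Gamma=K_2$ one computes $\operatorname{ad}_Z^2(X)=-4X$. Hence $\operatorname{ad}_Z^4(X)=16X=+16i(X_1+X_2)$, whereas item 2 predicts $-16i(X_1+X_2)$.

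The paper's own proof has the same defect. It obtains $+4i$ at $k=2$ through an arithmetic slip ($-2\cdot 2i=-4i$), which contradicts its own item 2, and then asserts the higher cases ``by induction.'' What you can actually prove is the corrected formula above. The paper only uses these identities through polynomials in $\operatorname{ad}_Z^2$ whose coefficients are chosen by interpolation, so the factor $(-4)^j$ is absorbed into those coefficients and nothing downstream breaks.

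\textbf{Item 3.} Your argument is correct, and it takes a different route from the paper. You split
\[
Z_aZ_b=\tfrac12(Z_aZ_b+Y_aY_b)+\tfrac12(Z_aZ_b-Y_aY_b)
\]
into an $\operatorname{ad}_X$-invariant piece and a piece on which $\operatorname{ad}_X^2=-16$. This explains the constant $16$ as the square of the doubled rotation frequency. The paper instead computes $\operatorname{ad}_X^2(Z)$ and $\operatorname{ad}_X^3(Z)$ explicitly. One small correction: the first-step factor is $\operatorname{ad}_X(Z_aZ_b-Y_aY_b)=4(Y_aZ_b+Z_aY_b)$, not $2(\ldots)$, but your value $-16$ for $\operatorname{ad}_X^2$ is right.
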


\begin{proof}
First, we confirm the base case for \( k = 1 \):  
\begin{equation}\label{eq:AdZX}
    \operatorname{ad}_Z(X) = [Z, X] = \left[ i\sum\limits_{(vw) \in E} Z_v Z_w, \sum\limits_{v \in V} iX_v \right] = -2i \sum\limits_{(vw) \in E} (Y_v Z_w + Z_v Y_w) = -2i \sum\limits_{v \in V} Y_v \left( \sum\limits_{w \in \mathcal{N}_v} Z_w \right)
\end{equation}

Next, we observe that the \( Z \)-gates corresponding to any pair of vertices (including coinciding ones) commute. As a result, the adjoint operator \( \operatorname{ad}_Z \) commutes with any polynomial \( f(Z_1, \ldots, Z_n) \).

Substituting the expression for \( \operatorname{ad}_Z(X) \), we proceed as follows:  
\[
\operatorname{ad}^2_Z(X)= [Z, [Z, X]] = \left[Z, -2i \sum\limits_{v \in V} Y_v \left( \sum\limits_{w \in \mathcal{N}_v} Z_w \right)\right].
\]  

Using the aforementioned commutativity, we get:  
\[
-2i \sum\limits_{v \in V} \left( \sum\limits_{w \in \mathcal{N}_v} Z_w \right) [Z, Y_v]=-2 \sum\limits_{v \in V} \left( \sum\limits_{w \in \mathcal{N}_v} Z_w \right) [Z, iY_v].
\]  

Recalling that \( [iZ_wZ_v, iY_v] = 2i Z_wX_v \), we substitute:  
\[
\operatorname{ad}^2_Z(X) = -2 \sum\limits_{v \in V} \left( \sum\limits_{w \in \mathcal{N}_v} Z_w \right)^2(2i X_v) = 4i \sum\limits_{v \in V} X_v \left( \sum\limits_{w \in \mathcal{N}_v} Z_w \right)^2.
\]

The assertion for higher values of \( k \) follows analogously by induction, using the same commutator properties and the fact that \( \operatorname{ad}_Z \) commutes with polynomials in \( Z_1, \ldots, Z_n \).

Finally, we verify property $(3)$. Using \eqref{eq:AdZX}, we get that 
\[
\operatorname{ad}_X(Z) = -\operatorname{ad}_Z(X) = 2 i\sum\limits_{(vw) \in E} (Y_v Z_w + Z_v Y_w).
\]
Next, we compute 
\[
\begin{aligned}
 \operatorname{ad}^2_X(Z) &= 2 \left[\sum\limits_{v \in V} iX_v, \sum\limits_{(vw) \in E} i(Y_v Z_w + Z_v Y_w) \right] = 4 i\sum\limits_{(vw) \in E} (-Z_v Z_w + Y_v Y_w - Z_v Z_w + Y_v Y_w) \\
 &=8i \sum\limits_{(vw) \in E} (Y_v Y_w -Z_v Z_w) 
\end{aligned}
\]
 and establish
\[
\begin{aligned}
 \operatorname{ad}^3_X(Z) &= 8 \left[i\sum\limits_{i \in V} X_v, i\sum\limits_{(vw) \in E} (Y_v Y_w -Z_v Z_w) \right] = -16i \sum\limits_{(vw) \in E} (Z_v Y_w + Y_v Z_w + Y_v Z_w + Z_v Y_w)\\
 &= -32i \sum\limits_{(vw) \in E} (Z_v Y_w + Y_v Z_w) = -16 \operatorname{ad}_X(Z).
 \end{aligned}
\]
\end{proof}

\begin{defn}
\label{defn:BalanceFn}
Let $b \in \mathbb{B}^n$ be a binary string of length $n$, and let 
\[
\{i_1, i_2, \ldots, i_m\} \subseteq \{1,2,\ldots,n\}
\]
be a set of distinct indices. The \emph{balance} of $b$ supported on these indices, denoted by
\[
\operatorname{Bal}_{i_1 \ldots i_m}(b),
\]
is defined as the number of $0$’s minus the number of $1$’s appearing in $b$ at positions $i_1, i_2, \ldots, i_m$. Equivalently,
\[
\operatorname{Bal}_{i_1 \ldots i_m}(b)
:= \bigl|\{j \in \{i_1,\ldots,i_m\} : b_j = 0\}\bigr|
   - \bigl|\{j \in \{i_1,\ldots,i_m\} : b_j = 1\}\bigr|.
\]
\end{defn}

\begin{prop}
    Consider the polynomial 
    \[
    f(Z_{i_1}, Z_{i_2}, \dots, Z_{i_m}) = (Z_{i_1} + Z_{i_2} + \ldots + Z_{i_m})^k.
    \] 
    Then for any standard basis vector \( \ket{b}  \), we have 
\[
f(Z_{i_1}, Z_{i_2}, \dots, Z_{i_m})(\ket{b}) = (\operatorname{Bal}_{i_1 \ldots i_m}(b))^k \ket{b}.
\] 
\label{BalProp}
\end{prop}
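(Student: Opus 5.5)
The statement is essentially a diagonalization claim, so I will verify it by acting on a single computational basis vector $\ket{b}$, $b\in\mathbb{B}^n$, and then raise to the $k$-th power.

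First, for a single index $j$, the Pauli operator satisfies $Z_j\ket{b} = (-1)^{b_j}\ket{b}$. This is because $Z$ acts as $+1$ on $\ket{0}$ and as $-1$ on $\ket{1}$ in the $j$-th tensor factor, and as the identity on the other factors.

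Second, by linearity,
\[
(Z_{i_1}+\cdots+Z_{i_m})\ket{b} = \Bigl(\sum_{s=1}^m (-1)^{b_{i_s}}\Bigr)\ket{b}.
\]
The scalar on the right counts $+1$ for each position among $i_1,\dots,i_m$ carrying a $0$ and $-1$ for each position carrying a $1$. By Definition~\ref{defn:BalanceFn}, this scalar is exactly $\operatorname{Bal}_{i_1\ldots i_m}(b)$. Hence $\ket{b}$ is an eigenvector of $S:=Z_{i_1}+\cdots+Z_{i_m}$ with eigenvalue $\operatorname{Bal}_{i_1\ldots i_m}(b)$.

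Third, I will induct on $k$. The case $k=0$ is trivial: the operator is the identity and the eigenvalue is $\operatorname{Bal}^0=1$. For the inductive step, suppose $S^{k-1}\ket{b}=\operatorname{Bal}^{k-1}\ket{b}$. Then
\[
S^k\ket{b}=S\bigl(\operatorname{Bal}^{k-1}\ket{b}\bigr)=\operatorname{Bal}^{k}\ket{b}.
\]
This uses only that the scalar $\operatorname{Bal}^{k-1}$ commutes with $S$ and the second step above.

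The only point needing care is the sign convention: the balance counts zeros positively because $Z\ket{0}=\ket{0}$. Beyond that the argument is routine, and I expect no real obstacle.
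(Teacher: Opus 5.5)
Your proposal is correct and follows essentially the same argument as the paper: compute $Z_{i_j}\ket{b}=\pm\ket{b}$, sum to obtain the eigenvalue $\operatorname{Bal}_{i_1\ldots i_m}(b)$, then raise to the $k$-th power. Your explicit induction on $k$ just spells out the final step, which the paper states directly.
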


\begin{proof}
    Recall that for any \( j \in \{i_1, i_2, \dots, i_m\} \), the action of \( Z_{i_j} \) on the standard basis vector \( \ket{b} \) is given by
    \[
    Z_{i_j}(\ket{b}) = 
    \begin{cases} 
    \ket{b}, & \text{if } b_{i_j} = 0, \\ 
    -\ket{b}, & \text{if } b_{i_j} = 1.
    \end{cases}
    \]
    Therefore, the sum of operators \( Z_{i_1} + Z_{i_2} + \ldots + Z_{i_m} \) acting on \( \ket{b} \) results in
    \[
    (Z_{i_1} + Z_{i_2} + \ldots + Z_{i_m})(\ket{b}) = \operatorname{Bal}_{i_1 \ldots i_m}(b) \ket{b}.
    \]
   This follows because each \( Z_{i_j} \) contributes \( +1 \) if \( b_j = 0 \) and \( -1 \) if \( b_j = 1 \) with $j\in \{i_1,\ldots,i_m\}$.

    Hence, applying \( f(Z_{i_1}, Z_{i_2}, \dots, Z_{i_m}) = (Z_{i_1} + Z_{i_2} + \ldots + Z_{i_m})^k \) to \( \ket{b} \) gives
    \[
    f(Z_{i_1}, Z_{i_2}, \dots, Z_{i_m})(\ket{b}) = (\operatorname{Bal}_{i_1 \ldots i_m}(b))^k \ket{b}.
    \]
    This concludes the proof.
\end{proof}

Let \( f(t) = \sum\limits_{j=1}^n \alpha_j t^j \) be a polynomial with real coefficients, i.e., \( f(t) \in \mathbb{R}[t] \). We associate the operator  
\begin{equation}\label{eq:OpAssocToPoly}
    F(\operatorname{ad}^2_Z) := \sum\limits_{j=1}^n \alpha_j \operatorname{ad}^{2j}_Z
\end{equation}  
to \( f(t) \) and observe that the element \( F(\operatorname{ad}^2_Z)(X) \) resides in the standard dynamical Lie algebra \( \mathfrak{g}_{\Gamma, \text{std}} \).  

Using part $(2)$ of Proposition \ref{propRelns}, we find that  
\[
F(\operatorname{ad}^2_Z)(X) = -i\sum\limits_{j=1}^n 2^{2j} \alpha_j \sum\limits_{v \in V} X_v \left( \sum\limits_{w \in \mathcal{N}_v} Z_w \right)^{2j}.
\]  
This can be rewritten as  
\begin{equation}\label{PolyAdZEqn}
   F(\operatorname{ad}^2_Z)(X) = -i\sum\limits_{v \in V} X_v \sum\limits_{i=1}^n 2^{2j} \alpha_j \left( \sum\limits_{w \in \mathcal{N}_v} Z_w \right)^{2j} = -i\sum\limits_{v \in V} X_v \widehat{f} \left( \left( \sum\limits_{w \in \mathcal{N}_v} Z_w \right)^2 \right),
\end{equation}

where \( \widehat{f}(t) := \sum\limits_{j=1}^n 2^{2j} \alpha_j t^j \).  

Combining this result with Proposition \ref{BalProp}, we have  
\begin{equation}\label{eq:AdZonStandardBasis}
    F(\operatorname{ad}^2_Z)(X)(\ket{b}) = -i\sum\limits_{v \in V} X_v \widehat{f} \left( \operatorname{Bal}_{\mathcal{N}_v}(\ket{b})^2 \right),
\end{equation}

for any standard basis vector \( \ket{b}  \).

Let 
\begin{equation}
    \label{eq:EvenOddVertices}
    \mathcal{V}_{\text{even}}:=\{v\in V(\Gamma)\mid\deg(v)\equiv 0\pmod{2}\} \text{ and } \mathcal{V}_{\text{odd}}:=\{v\in V(\Gamma)\mid\deg(v)\equiv 1\pmod{2}\}
\end{equation}
denote, respectively, the subsets of vertices of even and odd degree in the graph~$\Gamma$.

With this notation in place, we are ready to establish the following result.

\begin{thm}
\label{thm:CoolEltsInStdDLA}
    Let \( \Gamma \) be a connected graph. The standard dynamical Lie algebra \( \mathfrak{g}_{\Gamma, \mathrm{std}} \) contains the following elements:
    \begin{enumerate}
        \item \( \mathcal{X}_{\text{even}} := i\sum\limits_{v \in \mathcal{V}_{\text{even}}} X_v \), 
        \item \( \mathcal{X}_{\text{odd}} := i\sum\limits_{v \in \mathcal{V}_{\text{odd}}} X_v \), 
        \item \( \mathcal{Z}_{\text{even/odd}} := i\sum\limits_{\substack{(vw)\in E\\ v\in \mathcal{V}_{\text{even}}\\w\in \mathcal{V}_{\text{odd}}}} Z_v Z_w \),

    \end{enumerate}
\end{thm}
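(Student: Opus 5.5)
The plan is to use the operators $F(\operatorname{ad}^2_Z)(X)$ from \eqref{PolyAdZEqn} with a suitably chosen polynomial. By \eqref{eq:AdZonStandardBasis}, the coefficient attached to $X_v$ is $\widehat f$ evaluated at $\operatorname{Bal}_{\mathcal{N}_v}(b)^2$. The key observation is that $\operatorname{Bal}_{\mathcal{N}_v}(b)\equiv \deg(v)\pmod 2$. Hence for $v\in\mathcal{V}_{\mathrm{odd}}$ the squared balance lies in $\{1,9,25,\dots\}$, while for $v\in\mathcal{V}_{\mathrm{even}}$ it lies in $\{0,4,16,\dots\}$. All values are bounded by $\Delta^2$, where $\Delta$ is the maximal degree.

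\textbf{Constructing $\mathcal{X}_{\mathrm{odd}}$.} Choose a real polynomial $g$ with $g(s^2)=1$ for all odd $s\le\Delta$ and $g(s^2)=0$ for all even $s\le\Delta$ (including $s=0$). Such a $g$ exists by Lagrange interpolation on finitely many distinct nodes. Because the interpolation condition $g(0)=0$ forces zero constant term, we can write $g(t)=\widehat f(t)$ for some $f$ of the form $\sum_{j\ge1}\alpha_j t^j$, rescaling the coefficients by $2^{-2j}$. Then $F(\operatorname{ad}^2_Z)(X)$ acts on every basis vector $\ket b$ exactly as $-i\sum_{v\in\mathcal{V}_{\mathrm{odd}}}X_v$. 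Since the operators agree on a basis, they are equal, so $\mathcal{X}_{\mathrm{odd}}\in\mathfrak{g}_{\Gamma,\mathrm{std}}$ up to sign.

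\textbf{Ordering caveat.} The expression $X_v\,\widehat f(\cdot)$ should be read with the diagonal factor acting first. This factor commutes with $X_v$ anyway, since $\mathcal{N}_v$ excludes $v$, so the order is harmless.

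\textbf{Constructing $\mathcal{X}_{\mathrm{even}}$.} Subtract: $\mathcal{X}_{\mathrm{even}}=X-\mathcal{X}_{\mathrm{odd}}$.

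\textbf{Constructing $\mathcal{Z}_{\mathrm{even/odd}}$.} Use the computation in Proposition~\ref{propRelns}, linearized in the $X$-generator. Replacing $X$ by $\mathcal{X}_{\mathrm{odd}}$, the brackets give
\[
[\mathcal{X}_{\mathrm{odd}},Z] = 2i\sum_{(vw)\in E}\bigl(\chi(v)\,Y_vZ_w+\chi(w)\,Z_vY_w\bigr),
\]
where $\chi$ denotes the indicator of $\mathcal{V}_{\mathrm{odd}}$. Bracketing once more with $\mathcal{X}_{\mathrm{odd}}$ yields a combination of $Y_vY_w$ and $Z_vZ_w$ terms. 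Within it, odd--odd edges receive $8i\,(Y_vY_w-Z_vZ_w)$, while mixed edges receive only the single-site rotation, namely a multiple of $Z_vZ_w$ alone. Even--even edges do not appear.

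The double bracket therefore equals $-4\,Z_{\mathrm{mix}}+8i\sum_{\mathrm{odd\text{-}odd}}(Y_vY_w-Z_vZ_w)$. One more bracket with $\mathcal{X}_{\mathrm{odd}}$ separates these pieces by eigenvalue. Under $\operatorname{ad}_{\mathcal{X}_{\mathrm{odd}}}$, the mixed part rotates with $\operatorname{ad}^2$-eigenvalue $-4$ and the odd--odd part with eigenvalue $-16$. The even--even part of $Z$ and the odd--odd $(Y_vY_w+Z_vZ_w)$ combination are fixed, with eigenvalue $0$.

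Thus $Z$ decomposes into $\operatorname{ad}^2_{\mathcal{X}_{\mathrm{odd}}}$-eigencomponents with eigenvalues $0,-4,-16$. A polynomial $p$ in $\operatorname{ad}^2_{\mathcal{X}_{\mathrm{odd}}}$ satisfying $p(0)=0$, $p(-16)=0$ and $p(-4)=1$ projects $Z$ onto its mixed component, which is exactly $\mathcal{Z}_{\mathrm{even/odd}}$. This component lies in $\mathfrak{g}_{\Gamma,\mathrm{std}}$.

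\textbf{Main obstacle.} The delicate step is the last one. It requires carefully verifying the eigen-decomposition, namely that $\sum_{\mathrm{mix}}Z_vZ_w$ spans a $2$-dimensional invariant plane with eigenvalue $-4$. It also requires checking that the constraint $p(0)=0$, together with $\mathfrak{g}$ being closed only under $\operatorname{ad}$ (not containing $Z$ times scalars beyond linear span), is consistent. It is, since $p(\operatorname{ad}^2)Z$ with $p(0)=0$ is a linear combination of iterated brackets.
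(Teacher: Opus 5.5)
Your argument is correct. For $\mathcal{X}_{\mathrm{odd}}$ and $\mathcal{X}_{\mathrm{even}}$ it is the paper's argument: Lagrange interpolation in $\operatorname{ad}^2_Z$ separates odd from even squared balances, and subtracting from $X$ gives the other element. Interpolating at all $s^2$ with $s\le\Delta$ is in fact more careful than the paper's choice of nodes. The paper stops at $(m+1)^2$, with $m$ the largest even degree, so its nodes can miss a large odd degree.

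One small correction that does not affect the argument: a direct check gives $\operatorname{ad}^2_Z(iX_v h)=-4\,iX_v S_v^2 h$ for $h$ a polynomial in the $Z_w$ and $S_v=\sum_{w\in\mathcal{N}_v}Z_w$. So the coefficient rescaling is by $(-4)^{-j}$ rather than $2^{-2j}$, which is immaterial because any invertible rescaling works.

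For $\mathcal{Z}_{\mathrm{even/odd}}$ you take a different route. You use a spectral filter in $\operatorname{ad}^2_{\mathcal{X}_{\mathrm{odd}}}$ alone, for instance $\mathcal{Z}_{\mathrm{even/odd}}=-\tfrac{1}{48}\bigl(\operatorname{ad}^4_{\mathcal{X}_{\mathrm{odd}}}+16\,\operatorname{ad}^2_{\mathcal{X}_{\mathrm{odd}}}\bigr)(Z)$. Your eigenvalue bookkeeping is right: eigenvalues $0$, $-4$, $-16$, with each odd--odd edge split into $i(Y_vY_w+Z_vZ_w)$ and $i(Z_vZ_w-Y_vY_w)$.

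The paper instead composes the two parity-restricted mixers: $\operatorname{ad}^2_{\mathcal{X}_{\mathrm{odd}}}\operatorname{ad}^2_{\mathcal{X}_{\mathrm{even}}}(Z)=16\,\mathcal{Z}_{\mathrm{even/odd}}$. This works because $\mathcal{X}_{\mathrm{even}}$ annihilates odd--odd edges. Then $\mathcal{X}_{\mathrm{odd}}$ annihilates the transformed even--even terms, which are supported on even sites only. Each mixed edge simply picks up a factor $(-4)^2$.

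The paper's version is shorter and never has to resolve the odd--odd terms. Yours shows that $\mathcal{X}_{\mathrm{odd}}$ by itself suffices. As a by-product, it also places the other spectral pieces of $Z$, such as $i\sum_{\text{odd--odd}}(Y_vY_w-Z_vZ_w)$, in $\mathfrak{g}_{\Gamma,\mathrm{std}}$.
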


\begin{proof}
Let \( m \) denote the largest even degree among the vertices of \( \Gamma \).
We introduce the polynomials
\begin{equation}
\label{eq:SuperPoly}
\widehat{o}_m(t) := \sum\limits_{j=1}^{m+1} \alpha_j t^j \in \mathbb{R}[t],
\qquad
\widehat{e}_m(t) := \sum\limits_{j=1}^{m+1} \alpha_j t^j \in \mathbb{R}[t],
\end{equation}
characterized by the interpolation conditions
\[
\widehat{o}_m(k)=
\begin{cases}
0, & k = 0^2,2^2,\dots,m^2,\\
1, & k = 1^2,3^2,\dots,(m+1)^2,
\end{cases}
\qquad
\widehat{e}_m(k)=
\begin{cases}
1, & k = 0^2,2^2,\dots,m^2,\\
0, & k = 1^2,3^2,\dots,(m+1)^2.
\end{cases}
\]
Such polynomials can be constructed explicitly using Lagrange interpolation. Observe that the range of the function \( \operatorname{Bal}_{i_1 \ldots i_k} \) on the set of standard basis vectors is \( \{-k, -k+2, \ldots, k-2, k\} \).  
    Therefore, the union of the ranges of the functions \( \operatorname{Bal}_{i_1 \ldots i_k} \) for \( k \in \{2, 4, \ldots, m\} \) (with \( m \) even) is \( \{-m, -m+2, \ldots, m-2, m\} \) (see Definition \ref{defn:BalanceFn}).  
    Consequently, the union of the ranges of the squared functions \( \operatorname{Bal}_{i_1 \ldots i_k}^2 \) on the set of standard basis vectors is \( \{0^2, 2^2, 4^2, \ldots, m^2\} \).  
    
    Let $\mathcal{O}_m(\operatorname{ad}^2_Z)$ and $\mathcal{E}_m(\operatorname{ad}^2_Z)$ be the operators associated with the polynomials $o_m(t)$ and $e_m(t)$ according to the formula in \eqref{eq:OpAssocToPoly}. Using the formula from the operator expansion in \eqref{PolyAdZEqn}, we get  
    \[
    F_m(\operatorname{ad}^2_Z)(iX_v) = i\left( \sum\limits_{j=1}^m \alpha_j \left( \sum\limits_{w \in \mathcal{N}_v} Z_w \right)^{2j} \right) X_v. 
    \] operator \( F_m(\operatorname{ad}^2_Z)(X) \in \mathfrak{g}_{\Gamma, \text{std}} \).  For any standard basis vector \( b \in \mathcal{B}_{0,1} \), define \( w_{b,v} := \operatorname{Bal}_{\mathcal{N}_v}(b) \), where \( \mathcal{N}_v \) denotes the neighborhood of \( v \). Then, according to Proposition~\ref{BalProp}, we have:
\[
\mathcal{O}_m(\operatorname{ad}^2_Z)(iX_v)(b) = i\widehat{o}_m(w_{b,v}^2) X_v(b).
\]

If the degree of \( v \) is an even number \( k \), then \( w_{b,v} \) ranges over the even numbers between \( 0 \) and \( m \). Therefore, \( \widehat{o}_m(w_{b,v}^2) = 0 \), and the operator \( \mathcal{O}_m(\operatorname{ad}^2_Z)(iX_v) \) acts identically as zero on the Hilbert space \( W \).  

On the other hand, for any vertex \( v \) of odd degree and any \( b \in \mathcal{B}_{0,1} \), the range of \( w_{b,v} \) consists of odd values between $1$ and $m+1$. In this case, \( \widehat{o}_m(w_{b,v}^2)=1 \).

This implies that 
\[
\mathcal{O}_m(\operatorname{ad}^2_Z)(iX_v) = 
\begin{cases}
iX_v, & \text{if \( v \) has odd degree}, \\
0, & \text{if \( v \) has even degree}.
\end{cases}
\]
Therefore, we conclude
\[
\mathcal{O}_m(\operatorname{ad}^2_Z)(X) = i\sum\limits_{v \in \mathcal{V}_{\text{odd}}} X_v = \mathcal{X}_{\text{odd}}\in\mathfrak{g}_{\Gamma, \text{std}}.
\]

Hence, the elements \( \mathcal{X}_{\text{odd}} \) and \( \mathcal{X}_{\text{even}} = X - \mathcal{X}_{\text{odd}} \) both belong to \( \mathfrak{g}_{\Gamma, \text{std}} \).

Next, observe that the operator \( \operatorname{ad}^2_{\mathcal{X}_{\text{odd}}}(\operatorname{ad}^2_{\mathcal{X}_{\text{even}}}(Z)) \) is proportional to \( \mathcal{Z}_{\text{even/odd}} \). Thus, \( \mathcal{Z}_{\text{even/odd}} \) is in \( \mathfrak{g}_{\Gamma, \text{std}} \). Furthermore, since:
\[
\mathcal{Z}_{\text{even/even}} + \mathcal{Z}_{\text{odd/odd}} = Z - \mathcal{Z}_{\text{even/odd}},
\]
it follows that \( \mathcal{Z}_{\text{even/even}} + \mathcal{Z}_{\text{odd/odd}} \) is also in \( \mathfrak{g}_{\Gamma, \text{std}} \).

This completes the proof.
\end{proof}

\subsection{Distinguished Elements in Reduced DLAs}
We begin with two structural lemmas describing elements that necessarily
belong to the standard and free reduced dynamical Lie algebras.
The first shows that in the standard reduced DLA one can separate
the single-site $Z$-terms supported on neighbors of $v$
from the $Z$–$Z$ interactions internal to the reduced graph.
The second lemma identifies natural Lie subalgebras contained in the
free and standard reduced DLAs. In particular, the free reduced DLA contains, for each vertex
adjacent to the reduction vertex $v$, a copy of $\mathfrak{su}(2)$
supported on the corresponding qubit Hilbert space.

\begin{lem}
\label{ZsplittingLemma}
    The standard reduced  dynamical Lie algebra $\mathfrak{g}^{\,v}_{\Gamma,\mathrm{std}}$ 
    contains the elements
    \begin{equation}
        \label{eq:Zsplitting}
        \mathcal{Z}_{\widehat{v},1} := i\sum\limits_{(vw) \in E(\Gamma)} Z_w,
        \qquad 
        \mathcal{Z}_{\widehat{v},2} := i\sum\limits_{(ww') \in E(\Gamma_v)} Z_w Z_{w'}.
    \end{equation}
\end{lem}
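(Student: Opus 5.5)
Since $\mathcal{Z}_{\widehat{v}} = \mathcal{Z}_{\widehat{v},1} + \mathcal{Z}_{\widehat{v},2}$, it suffices to exhibit one of the two summands inside $\mathfrak{g}^{\,v}_{\Gamma,\mathrm{std}}$; the other then follows by subtraction. I would single out $\mathcal{Z}_{\widehat{v},1}$, which is a sum of single-site terms, by exploiting how differently the two pieces behave under $\operatorname{ad}_{\mathcal{X}_{\widehat{v}}}$.

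The key computation mirrors part (3) of Proposition~\ref{propRelns}. For one qubit, $\operatorname{ad}_{iX_w}$ acts on $\mathrm{span}\{iY_w, iZ_w\}$ as a rotation with eigenvalues $\pm 2i$, so
\[
\operatorname{ad}^3_{\mathcal{X}_{\widehat{v}}}(\mathcal{Z}_{\widehat{v},1}) = -4\,\operatorname{ad}_{\mathcal{X}_{\widehat{v}}}(\mathcal{Z}_{\widehat{v},1}).
\]
For the two-site part, Proposition~\ref{propRelns}(3) applied to the graph $\Gamma_v$ with mixer $\mathcal{X}_{\widehat{v}}$ gives
\[
\operatorname{ad}^3_{\mathcal{X}_{\widehat{v}}}(\mathcal{Z}_{\widehat{v},2}) = -16\,\operatorname{ad}_{\mathcal{X}_{\widehat{v}}}(\mathcal{Z}_{\widehat{v},2}).
\]
More fully, $\operatorname{ad}_{\mathcal{X}_{\widehat{v}}}$ restricted to $\mathrm{span}\{Z_wZ_{w'}, Y_wZ_{w'}+Z_wY_{w'}, Y_wY_{w'}\}$ has eigenvalues $0, \pm 4i$. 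The constant term $Z_wZ_{w'}+Y_wY_{w'}$ lies in the kernel, and the remainder sits in the $\pm 4i$ eigenspaces. For $\mathcal{Z}_{\widehat{v},1}$ the eigenvalues are $\pm 2i$.

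Set $A_k := \operatorname{ad}^k_{\mathcal{X}_{\widehat{v}}}(\mathcal{Z}_{\widehat{v}}) \in \mathfrak{g}^{\,v}_{\Gamma,\mathrm{std}}$. Then
\[
A_1 = a_1 + a_2, \qquad A_3 = -4a_1 - 16a_2,
\]
where $a_j := \operatorname{ad}_{\mathcal{X}_{\widehat{v}}}(\mathcal{Z}_{\widehat{v},j})$. From $16A_1 + A_3 = 12\,a_1$ we get $a_1 \in \mathfrak{g}^{\,v}_{\Gamma,\mathrm{std}}$. Finally, $\operatorname{ad}_{\mathcal{X}_{\widehat{v}}}(a_1)$ is a nonzero multiple of $\mathcal{Z}_{\widehat{v},1}$, because on each single site, applying $\operatorname{ad}_{iX_w}$ twice to $iZ_w$ returns $-4\,iZ_w$. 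Hence $\mathcal{Z}_{\widehat{v},1}$ lies in $\mathfrak{g}^{\,v}_{\Gamma,\mathrm{std}}$, and so does $\mathcal{Z}_{\widehat{v},2} = \mathcal{Z}_{\widehat{v}} - \mathcal{Z}_{\widehat{v},1}$.

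I expect no serious obstacle; the work is bookkeeping of Pauli commutator constants. The one point needing care is avoiding the step of recovering $\mathcal{Z}_{\widehat{v},2}$ itself from $\operatorname{ad}$-images. Its kernel component $Z_wZ_{w'}+Y_wY_{w'}$ is invisible to $\operatorname{ad}_{\mathcal{X}_{\widehat{v}}}$, so it cannot be reconstructed that way. This is why I isolate the single-site sum first and obtain $\mathcal{Z}_{\widehat{v},2}$ by subtraction. If the graph has no neighbors of $v$ (impossible for connected $\Gamma$ with at least 2 vertices) or no edges in $\Gamma_v$, the statement is trivial.
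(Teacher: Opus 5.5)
Your proposal is correct and follows essentially the same route as the paper. The paper also forms $\operatorname{ad}^3_{\mathcal{X}_{\widehat{v}}}(\mathcal{Z}_{\widehat{v}})+16\,\operatorname{ad}_{\mathcal{X}_{\widehat{v}}}(\mathcal{Z}_{\widehat{v}})$, which kills the edge terms (factor $-16$) and isolates the single-site sum $i\sum_{(vw)\in E}Y_w$ (factor $-4$); it then applies $\operatorname{ad}_{\mathcal{X}_{\widehat{v}}}$ once more to get $\mathcal{Z}_{\widehat{v},1}$, leaving $\mathcal{Z}_{\widehat{v},2}=\mathcal{Z}_{\widehat{v}}-\mathcal{Z}_{\widehat{v},1}$ implicit, whereas you spell out this subtraction and track the constants more carefully than the paper's displayed formulas, which contain harmless sign slips.
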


\begin{proof}
We compute:
\[
\begin{aligned}
   [\,\mathcal{X}_{\widehat{v}}, \mathcal{Z}_{\widehat{v}}\,] 
   &= -2i \sum\limits_{(ww') \in E(\Gamma_v) } \bigl( Y_w Z_{w'} + Z_w Y_{w'} \bigr) -2i \sum\limits_{(vw) \in E(\Gamma)} Y_w;\\
   \operatorname{ad}^3_{\mathcal{X}_{\widehat{v}}}(\mathcal{Z}_{\widehat{v}}) &=  32i \sum\limits_{(ww') \in E(\Gamma_v) } \bigl( Y_w Z_{w'} + Z_w Y_{w'} \bigr) +8i \sum\limits_{(vw) \in E(\Gamma)} Y_w.
\end{aligned}
\]

From this we isolate the two contributions. Define
\[
\widetilde{Y}_1 :=\frac{1}{24}\cdot\left(\operatorname{ad}^3_{\mathcal{X}_{\widehat{v}}}(\mathcal{Z}_{\widehat{v}})+ 16[\,\mathcal{X}_{\widehat{v}}, \mathcal{Z}_{\widehat{v}}\,] \right)
=i \sum\limits_{(vw) \in E(\Gamma)} Y_w,
\]
and then

\[
-0.5[\,\mathcal{X}_{\widehat{v}}, \widetilde{Y}_1\,] =i\sum\limits_{(vw) \in E} Z_w= \mathcal{Z}_{\widehat{v},1}, 
\]
yields the desired elements $\mathcal{Z}_{\widehat{v},1}$ and $\mathcal{Z}_{\widehat{v},2}$ in $\mathfrak{g}^{\,v}_{\Gamma,\mathrm{free}}$ (see \ref{eq:Zsplitting}).
\end{proof}

\begin{lem}
\label{lem:ReducedSubalgebras}
Let $\Gamma$ be a graph and $v \in V(\Gamma)$ a vertex.  
Then the free reduced dynamical Lie algebra $\mathfrak{g}^{\,v}_{\Gamma,\mathrm{free}}$ contains the following natural subalgebras:
\begin{itemize}

    \item[$(1)$] The free dynamical Lie algebra on the reduced graph $\Gamma_v$ (see Definition \ref{def:ReducedGraph}):
    \[
        \mathfrak{g}_{\Gamma_v,\mathrm{free}}
        = \Big\langle 
            \{\, iX_w \mid w \in V(\Gamma)\setminus\{v\} \,\},\;
            \{\, iZ_w Z_{w'} \mid (w w') \in E(\Gamma),\ w,w' \neq v \,\}
        \Big\rangle_{\mathrm{Lie}},
    \]
    generated by all single-qubit $iX$ terms except the one corresponding to $v$, together with all $Z-Z$ interactions on edges not incident to $v$.

    \item[$(2)$] The direct sum of single-qubit $\mathfrak{su}(2)$ algebras acting on the neighbors of $v$,
    \[
        \mathfrak{su}(2)^{\oplus |\mathcal{N}_v|}
        =
        \Big\langle 
            \{\, iX_j \mid w \in \mathcal{N}_w \,\}, \;
            \{\, iZ_j \mid w \in \mathcal{N}_w \,\}
        \Big\rangle_{\mathrm{Lie}},
    \]
    corresponding to the independent local actions on each vertex adjacent to $v$.
\end{itemize}

Similarly, the standard reduced  dynamical Lie algebra $\mathfrak{g}^{\,v}_{\Gamma,\mathrm{std}}$ contains the following subalgebras:
\begin{itemize}

    \item[$(1)$] The standard dynamical Lie algebra on the reduced graph $\Gamma_v$:
    \[
        \mathfrak{g}_{\Gamma_v,\mathrm{std}}
        = \Big\langle 
           \mathcal{X}_{\widehat{v}},\;
            \,\mathcal{Z}_{\widehat{v},2} \,
        \Big\rangle_{\mathrm{Lie}}.
    \]

    \item[$(2)$] The  $\mathfrak{su}(2)$ subalgebra, arising as
    \[
        \mathfrak{su}(2)=\Big\langle 
           \mathcal{X}_{\widehat{v}},\;
            \,\mathcal{Z}_{\widehat{v},1} \,
        \Big\rangle_{\mathrm{Lie}}.
    \]
    
\end{itemize}
\end{lem}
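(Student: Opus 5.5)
The plan is to verify each containment directly from the generating sets, using Lemma~\ref{ZsplittingLemma} for the standard case. In every case it suffices to show that the generators of the claimed subalgebra lie in the ambient Lie algebra. The ambient algebra is closed under brackets and real linear combinations, so the Lie algebra they generate is then automatically contained in it. The only real content is to identify the generated algebras in part $(2)$ of each list.

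\emph{Free reduced DLA.} For $(1)$, compare the generating set of $\mathfrak{g}_{\Gamma_v,\mathrm{free}}$ with \eqref{eq:ReducedFreeDLAgenerators}. Every $iX_w$ with $w\neq v$, and every $iZ_wZ_{w'}$ with $(ww')\in E$ and $w,w'\neq v$, is literally a generator of $\mathfrak{g}^{\,v}_{\Gamma,\mathrm{free}}$, so the inclusion is immediate.

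For $(2)$, fix $w\in\mathcal{N}_v$. Both $iX_w$ (as $w\neq v$) and $iZ_w$ are generators. Their bracket $[iX_w,iZ_w]=2iY_w$ shows that $\mathrm{span}_{\mathbb{R}}\{iX_w,iY_w,iZ_w\}\cong\mathfrak{su}(2)$, acting on the $w$-th tensor factor, lies in the algebra. Operators supported on distinct qubits commute. Hence these copies for different $w\in\mathcal{N}_v$ pairwise commute and intersect trivially, so they span a direct sum $\mathfrak{su}(2)^{\oplus|\mathcal{N}_v|}$.

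\emph{Standard reduced DLA.} By Lemma~\ref{ZsplittingLemma}, both $\mathcal{Z}_{\widehat v,1}$ and $\mathcal{Z}_{\widehat v,2}$ lie in $\mathfrak{g}^{\,v}_{\Gamma,\mathrm{std}}$, and $\mathcal{X}_{\widehat v}$ is a generator.

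For $(1)$, $\mathcal{X}_{\widehat v}$ is exactly the uniform mixer on $V(\Gamma_v)$, and $\mathcal{Z}_{\widehat v,2}$ is exactly $i$ times the MaxCut Hamiltonian of $\Gamma_v$. So $\langle\mathcal{X}_{\widehat v},\mathcal{Z}_{\widehat v,2}\rangle_{\mathrm{Lie}}=\mathfrak{g}_{\Gamma_v,\mathrm{std}}$, and it is contained in $\mathfrak{g}^{\,v}_{\Gamma,\mathrm{std}}$.

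For $(2)$, write $X_N:=i\sum_{w\in\mathcal{N}_v}X_w$, $Y_N:=i\sum_{w\in\mathcal{N}_v}Y_w$, and $X_R:=\mathcal{X}_{\widehat v}-X_N$ (the $X$-terms on vertices of $\Gamma_v$ not adjacent to $v$). I will compute the brackets
\[
[\mathcal{X}_{\widehat v},\mathcal{Z}_{\widehat v,1}]=-2Y_N,\qquad [\mathcal{X}_{\widehat v},Y_N]=2\mathcal{Z}_{\widehat v,1},\qquad [\mathcal{Z}_{\widehat v,1},Y_N]=-2X_N,
\]
using that $X_R$ commutes with every $Z_w,Y_w$ for $w\in\mathcal{N}_v$. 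Consequently $\mathrm{span}\{X_N,Y_N,\mathcal{Z}_{\widehat v,1}\}$ is a copy of $\mathfrak{su}(2)$: it is the collective spin on $\mathcal{N}_v$, i.e.\ the diagonal embedding in $\mathfrak{su}(2)^{\oplus|\mathcal{N}_v|}$. It lies in $\mathfrak{g}^{\,v}_{\Gamma,\mathrm{std}}$.

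The main subtlety is at this last step. The Lie algebra literally generated by $\mathcal{X}_{\widehat v}$ and $\mathcal{Z}_{\widehat v,1}$ is $\mathfrak{su}(2)\oplus\mathbb{R}X_R$. Here $X_R$ is central and nonzero whenever $V(\Gamma_v)\neq\mathcal{N}_v$. So I would state the result as: the generated algebra contains, and up to this central summand equals, the $\mathfrak{su}(2)$ spanned by $X_N,Y_N,\mathcal{Z}_{\widehat v,1}$. Equality holds exactly when every vertex of $\Gamma_v$ is adjacent to $v$.
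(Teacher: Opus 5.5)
Your proposal is correct and follows the paper's route. You show that the generators of each claimed subalgebra lie in the ambient algebra, directly for the free case and via Lemma~\ref{ZsplittingLemma} for $\mathcal{Z}_{\widehat v,1},\mathcal{Z}_{\widehat v,2}$ in the standard case, and closure does the rest. Your refinement of the standard part~(2) is also right, and the paper's one-line proof glosses over it. Since $X_R$ commutes with $X_N$, $Y_N$ and $\mathcal{Z}_{\widehat v,1}$, one gets $\langle\mathcal{X}_{\widehat v},\mathcal{Z}_{\widehat v,1}\rangle_{\mathrm{Lie}}=\mathfrak{su}(2)\oplus\mathbb{R}X_R$, which is exactly $\mathfrak{su}(2)$ only when $V(\Gamma_v)=\mathcal{N}_v\neq\varnothing$ (e.g.\ the star graph at its center). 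Your bracket signs are flipped (e.g.\ $[\mathcal{X}_{\widehat v},\mathcal{Z}_{\widehat v,1}]=+2Y_N$, consistent with your own $[iX_w,iZ_w]=2iY_w$), but this is harmless.
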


\begin{proof}
By Lemma~\ref{ZsplittingLemma}, the specified generators indeed belong to the corresponding dynamical Lie algebras.  
It then follows directly that their Lie brackets generate the subalgebras listed above. 
\end{proof}

\section{Proofs and Technical Details: Sections \ref{sec:DistinguishedElts} and \ref{sec:DLASonSomeGraphs}} 
\label{sec:Technicalities}

This section collects the main technical arguments underlying the results of Sections~\ref{sec:DistinguishedElts} and~\ref{sec:DLASonSomeGraphs}. Our main goal is to verify that the elements
\eqref{eq:FixedDistX1element} and \eqref{eq:ReducedDLAElts} are contained in the standard reduced
dynamical Lie algebra
\(
\mathfrak{g}^{\,v}_{\Gamma,\mathrm{std}}.
\)
\subsection{More Distinguished Elements in Reduced DLAs}

We begin by showing that $\mathfrak{g}^{\,v}_{\Gamma,\mathrm{std}}$ contains uniform $X$-type operators supported on distance layers from the distinguished vertex $v$. We then refine this construction to isolate subsets of vertices specified by parity data along distance-increasing paths.

\begin{proof}[Proof of Theorem \ref{thm:DistDLAElements}]
    We proceed by induction on $j$. For the base case $j=1$, we first observe that the degree of any vertex $w \in \mathcal{N}_{v,1}$ is one greater than its degree in the reduced graph $\Gamma_v$. A direct computation shows that
\begin{equation}
\label{eq:X1element}
\begin{aligned}
\mathcal{X}_{\widehat{v},1}&=0.25\operatorname{ad}^{2}_{\mathcal{Z}_{\widehat{v},1}}(\mathcal{X}_{\widehat{v}}),\\
\mathcal{X}_{v,1,\mathrm{even}}&=\mathcal{O}_m(\operatorname{ad}^{2}_{\mathcal{Z}_{\widehat{v},2}})(\mathcal{X}_{\widehat{v},1}),\\
\mathcal{X}_{v,1,\mathrm{odd}}&=\mathcal{X}_{\widehat{v},1}-\mathcal{X}_{v,1,\mathrm{even}},
\end{aligned}
\end{equation}
where $\mathcal{X}_{\widehat{v}}$, $\mathcal{Z}_{\widehat{v},1}$, and $\mathcal{Z}_{\widehat{v},2}$ are defined as in Equations~\eqref{eq:ReducedStandardDLAgenerators} and~\eqref{eq:Zsplitting}. From \eqref{eq:X1element}, it follows that $\mathcal{X}_{\widehat{v},1}$, $\mathcal{X}_{v,1,\mathrm{even}}$, and $\mathcal{X}_{v,1,\mathrm{odd}}$ are elements of the reduced dynamical Lie algebra $\mathfrak{g}^{\,v}_{\Gamma,\mathrm{std}}$, thereby establishing the base case.

Now, assume that $\mathcal{X}_{\widehat{v},j}, \mathcal{X}_{v,j,\mathrm{even}}, \mathcal{X}_{v,j,\mathrm{odd}} \in \mathfrak{g}^{\,v}_{\Gamma,\mathrm{std}}$ for all $j \le n$. To prove the statement for $j = n+1$, we construct an auxiliary operator consisting of two-qubit Pauli-$Z$ terms supported on the vertices in $\mathcal{N}_{v,n} \cup \mathcal{N}_{v,n+1}$:

\begin{equation}
    \label{eq:XLevelnPreparation}
\mathcal{Z}_{n}:=-0.25\operatorname{ad}^{2}_{\mathcal{X}_{\widehat{v},n}}\left(\mathcal{Z}_{\widehat{v},2}\right)=i\sum\limits_{\substack{(ww')\in E\\ w\in \mathcal{N}_{v,n}\\w'\not\in \mathcal{N}_{v,n}}} Z_{w} Z_{w'}-2i\sum\limits_{\substack{(ww')\in E\\ w,w'\in \mathcal{N}_{v,n}}} (Y_{w} Y_{w'}-Z_{w} Z_{w'})
\end{equation}

Utilizing this element and the assumption in \eqref{eq:ParityAssumption}, we obtain $\mathcal{X}_{\widehat{v},n+1}$ via the relation:

\begin{equation}
    \label{eq:XLeveln}
\mathcal{X}_{\widehat{v},n+1}=\mathcal{O}_m(\operatorname{ad}^{2}_{\mathcal{Z}_{n}})(\mathcal{X}_{\widehat{v}}-\mathcal{X}_{\widehat{v},n}).
\end{equation}
Finally, the parity-specific components are recovered following the same logic as the base case:
\begin{equation}
    \label{eq:XLevelnParity}
\begin{aligned}
\mathcal{X}_{v,n+1,\mathrm{odd}}&=\mathcal{O}_m(\operatorname{ad}^{2}_{\mathcal{Z}_{\widehat{v},2}})(\mathcal{X}_{\widehat{v},n+1}),\\
\mathcal{X}_{v,n+1,\mathrm{odd}}&=\mathcal{X}_{\widehat{v},n+1}-\mathcal{X}_{v,n+1,\mathrm{even}}.
\end{aligned}
\end{equation}

This completes the inductive step and hence the proof.
\end{proof}

We proceed with the verifying the properties listed in Theorem~\ref{thm:GraphExtensionEmbedding}. We recall that the graph extension construction was already introduced in Section \ref{sec:DistinguishedElts} of the main text. 
For completeness, we reproduce it here.

\begin{proof}[Proof of Theorem~\ref{thm:GraphExtensionEmbedding}]
By monitoring the change in the number of vertices and edges at each modification step, we verify that $\widehat{\Gamma}_v$ satisfies the theorem's structural requirements.

 \medskip
\noindent\textbf{Step 0. Adjusting parity.} The initial reduced graph $\Gamma_v$ contains $n-1$ vertices. Since each vertex violating the parity assumption \eqref{eq:ParityAssumption} triggers an extension via the gadget shown in Figure~\ref{fig:ExtGamma}, adding two vertices and four edges per instance, the total number of vertices in the resulting graph is bounded by $1 + 3(n-1) = 3n-2$. Similarly, the number of edges is at most $|E(\Gamma)| + 4(n-1)$.

\medskip

\noindent\textbf{Step 1. Uniformizing maximal distance.} 
Let $v\in V(\Gamma)$ be the distinguished vertex and let $j(v)$ denote the maximal distance from $v$ to any vertex of $\Gamma$ (see \eqref{eq:MaxDistToV}). Notice that it is not altered on the previous step. 
For each vertex $w \in V(\Gamma_v)$, we attach to $w$ a path graph of length 
\[
j(v)-\mathrm{dist}(v,w).
\]
If $\mathrm{dist}(v,w)=j(v)$, no modification is performed.

Denote by $\Gamma'$ the resulting graph.  
By construction, every vertex $w \in V(\Gamma_v)$ now either lies at distance $j(v)$ from $v$ in $\Gamma'$, or is the ancestor of a unique leaf at that distance.
Hence there is a bijection
\[
V(\Gamma_v)
\longrightarrow 
\mathcal{N}'_{v,j(v)},
\]
obtained by following the attached path from $w$ to its endpoint.

Since each added vertex contributes exactly one new edge, the total number of added vertices and edges in this step is bounded above by
\[
(j(v)-1)(3n-3).
\]

\medskip

\noindent\textbf{Step 2. Creating distinct distance levels.} 
Label the vertices of $\mathcal{N}'_{v,j(v)}$ from $0$ to $|\mathcal{N}'_{v,j(v)}|-1$. To each vertex with index $j$, attach a path of length $j$. Let $\widehat{\Gamma}_v$ denote the resulting graph.

By construction, the newly added vertices in this step have pairwise distinct distances to $v$.  Moreover, each such vertex is the unique vertex in $\widehat{\Gamma}_v$ at its respective distance from $v$.

\medskip

\noindent\textbf{Vertex and edge bounds.}

We verify that $\widehat{\Gamma}_v$ possesses the properties stated in the theorem. 
Specifically, the total vertex count is bounded by:
\[
\lvert V(\widehat{\Gamma}_v) \rvert 
< 1 + 3(j(v)-1)(n-1) + \sum_{j=0}^{\lvert\mathcal{N}'_{v,j(v)}\rvert-1} j 
< 1 + 3(j(v)-1)(n-1) + \frac{(3n-2)(3n-3)}{2},
\]
where the leading $1$ represents the vertex $v$ and the final inequality follows from the bound $\lvert\mathcal{N}'_{v,j(v)}\rvert < n$.

Similarly, the total number of edges in $\widehat{\Gamma}_v$ satisfies
\[
|E(\widehat{\Gamma}_v)|
<
|E(\Gamma)| + 3(j(v)-1)(n-1) + \sum\limits_{j=0}^{|\mathcal{N}'_{v,j(v)}|-1} j
<
|E(\Gamma)| + 3(j(v)-1)(n-1)+ \frac{(3n-2)(3n-3)}{2}.
\]

\medskip

The vertices added in the second step have pairwise distinct distances to $v$ by construction. 
Moreover, each such vertex is the unique vertex in $\widehat{\Gamma}_v$ at its respective distance from $v$. 
It therefore follows from Theorem~\ref{thm:DistDLAElements} that
\[
iX_w \in \mathfrak{g}^{\,v}_{\widehat{\Gamma}_v,\mathrm{std}}
\]
for each of these vertices.

Applying Remark~\ref{rmk:ExtensionOfResults} together with Theorem~\ref{thm:DistDLAElements} to the vertices in $\mathcal{N}'_{v,j(v)}$, we conclude that
\[
iX_w \in \mathfrak{g}^{\,v}_{\widehat{\Gamma}_v,\mathrm{std}}
\]
for all remaining vertices as well, allowing to establish the required containment of DLAs. 

Finally, we establish a canonical extension-reduction principle between the original and extended graphs. Specifically, every MaxCut configuration on $\Gamma$ admits a unique extension to a configuration on $\widehat{\Gamma}_v$. Conversely, any solution on $\widehat{\Gamma}_v$ restricts uniquely to a solution on $\Gamma$ by discarding the values assigned to the auxiliary vertices. 

Each triangular subgraph introduced in Step $0$ (cf.~Figure~\ref{fig:ExtGamma}) possesses a maximum cut value of $3$. A crucial property of this construction is that this maximum is attained regardless of whether the states assigned to $w_{j-2}$ and $w_j$ are identical or opposite, thereby preserving the original cut constraints. 

This local property, combined with the path attachments in the subsequent steps, allows the MaxCut objective function on $\widehat{\Gamma}_v$ to be decomposed into two independent contributions: the original objective function and a constant term arising from the optimal cuts of the auxiliary edges in $\widehat{\Gamma}_v \setminus \Gamma$. Because the auxiliary components do not impose additional constraints on the states of the original vertices, every solution on $\Gamma$ extends to a unique solution on $\widehat{\Gamma}_v$. Conversely, any MaxCut solution on $\widehat{\Gamma}_v$ reduces to a solution on $\Gamma$ by restriction. This ensures a natural one-to-one correspondence between the set of MaxCut solutions on $\Gamma$ and those on $\widehat{\Gamma}_v$, which completes the proof.
\end{proof}




\begin{proof}[Proof of Theorem \ref{thm:CoolDLAElements}] We argue by induction on $j$, beginning with the base case $j=1$.

Recall that the containment $\mathcal{X}_{\widehat{v},1}\in \mathfrak{g}^{\,v}_{\Gamma,\mathrm{std}}$ was established in Theorem \ref{thm:DistDLAElements}.
Next, proceeding as in the proof of Theorem~\ref{thm:CoolEltsInStdDLA}, we obtain

\begin{equation}
    \label{eq:XLevel1}
\begin{aligned}
X_{\mathcal{C}^v_{1,0}}&=\mathcal{E}_m(\operatorname{ad}^{2}_{\mathcal{Z}_{\widehat{v},2}})(\mathcal{X}_{\widehat{v},1})&=4i \sum\limits_{\substack{w\in\mathcal{N}_{v,1}\\ \deg(w)\equiv 0\pmod 2}} X_w,\\
X_{\mathcal{C}^v_{1,1}}&=\mathcal{X}_{\widehat{v}}-X_{\mathcal{C}^v_{1,0}},
\end{aligned}
\end{equation}

both of which lie in $\mathfrak{g}^{\,v}_{\Gamma,\mathrm{std}}$. This establishes the claim for $j=1$.

Now assume that for some $n\ge 1$ and for every sequence $\mathbf{a}:=(a_1,a_2,\dots,a_j)\in\mathbb{Z}_{2}^j$ with $j\le n$, the elements
\[
X_{\mathcal{C}^v_{n,\mathbf{a}}}
\in 
\mathfrak{g}^{\,v}_{\Gamma,\mathrm{std}}
\]
are contained in the reduced dynamical Lie algebra. We show that the same holds for $j=n+1$.

Let $\mathbf{a}:=(a_1,a_2,\dots,a_{n+1})\in\mathbb{Z}_{2}^{n+1}$ and denote by
$\mathbf{a}':=(a_1,a_2,\dots,a_n)\in\mathbb{Z}_{2}^n$
its truncation. By the induction hypothesis, the element
$X_{\mathcal{C}^v_{n,\mathbf{a}'}}$ belongs to $\mathfrak{g}^{\,v}_{\Gamma,\mathrm{std}}$. We define
\[
X_{\widehat{v},> n}
\;:=\;
\mathcal{X}_{\widehat{v}} - \sum\limits_{k=1}^{n} X_{\widehat{v},k},
\]
which belongs to $\mathfrak{g}^{\,v}_{\Gamma,\mathrm{std}}$ by Theorem~\ref{thm:DistDLAElements}. Using this operator, we introduce the following auxiliary element:

\begin{equation}
    \label{eq:XLevelnPrep}
\begin{aligned}
\mathcal{Z}_{\mathbf{a}'}:&=\frac{1}{16}\cdot\operatorname{ad}^{2}_{X_{\widehat{v},> n}}\left(\operatorname{ad}^{2}_{X_{\mathcal{C}^v_{n,\mathbf{a}'}}}(\mathcal{Z}_{\widehat{v},2})\right)\\
&=\frac{1}{16}\cdot\operatorname{ad}^{2}_{X_{\widehat{v},> n}}\left(-4i\sum\limits_{\substack{(ww')\in E\\ w\in \mathcal{C}^v_{n,\mathbf{a}'}\\w'\not\in \mathcal{C}^v_{n,\mathbf{a}'}}} Z_{w} Z_{w'}+8i\sum\limits_{\substack{(ww')\in E\\ w,w'\in \mathcal{C}^v_{n,\mathbf{a}'}}} (Y_{w} Y_{w'}-Z_{w} Z_{w'})\right)\\
&=i\sum\limits_{\substack{(ww')\in E\\ w\in \mathcal{C}^v_{n,\mathbf{a}'}\\w'\in X_{\widehat{v},> n}}} Z_{w} Z_{w'}.
\end{aligned}
\end{equation}



Finally, we obtain

\begin{equation}
    \label{eq:XLeveln2}
X_{\mathcal{C}^v_{n+1,\mathbf{a}',1}}=\operatorname{ad}^{2}_{\mathcal{Z}_{\mathbf{a}'}}\left(X_{\widehat{v},> n,\operatorname{odd}}\right)\text{ and } X_{\mathcal{C}^v_{n+1,\mathbf{a}',0}}=\operatorname{ad}^{2}_{\mathcal{Z}_{\mathbf{a}'}}\left(X_{\widehat{v},> n,\operatorname{even}}\right)
\end{equation}

This establishes the inductive step and completes the proof.

  \end{proof}

In order to establish the statement in Theorem \ref{thm:ParitySeparationImpliesLocalX}, we begin with a technical lemma that allows us to isolate elements supported on intersections of parity-defined subsets.

\begin{lem}
\label{lem:IntersectionFromParity}
Let $j\ge 1$ and let $\mathbf{a}\neq\mathbf{b}$ be binary sequences  in $\mathbb{Z}_2^j$.
Let
\[
\mathcal{S}_{\mathbf{a}}\subseteq \mathcal{C},
\qquad
\mathcal{S}_{\mathbf{b}}\subseteq \mathcal{C}',
\qquad
\mathcal{S}_{\mathbf{a}}\cap \mathcal{S}_{\mathbf{b}}\neq\varnothing,
\]
where $(\mathcal{C},\mathcal{C}')$ denotes 
$(\mathcal{C}^v_{j,\mathbf{a}},\mathcal{C}^v_{j,\mathbf{b}})$. Assume that
\[
i\sum\limits_{w\in \mathcal{S}_{\mathbf{a}}} X_w,
\qquad
i\sum\limits_{w\in \mathcal{S}_{\mathbf{b}}} X_w
\;\in\;
\mathfrak{g}^v_{\Gamma,\mathrm{std}}.
\]
Then
\begin{equation}
\label{eq:IntersectionX}
i\sum\limits_{w\in \mathcal{S}_{\mathbf{a}}\cap \mathcal{S}_{\mathbf{b}}} X_w
\;\in\;
\mathfrak{g}^v_{\Gamma,\mathrm{std}}.
\end{equation}
\end{lem}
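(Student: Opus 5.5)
The idea is to use the parity filter $\mathcal{O}_m(\operatorname{ad}^2_{(\cdot)})$ from the proof of Theorem~\ref{thm:CoolEltsInStdDLA}. We will build a $Z$–$Z$ type element whose edges touch every vertex of $\mathcal{S}_{\mathbf{a}}$ an odd number of times and touch no vertex of $\mathcal{S}_{\mathbf{b}}\setminus\mathcal{S}_{\mathbf{a}}$ at all. Filtering $i\sum_{w\in\mathcal{S}_{\mathbf{b}}}X_w$ through this element then leaves exactly the terms indexed by $\mathcal{S}_{\mathbf{a}}\cap\mathcal{S}_{\mathbf{b}}$. Throughout, we work under the standing hypotheses (i)–(ii) of Theorem~\ref{thm:ParitySeparationImpliesLocalX}, so Theorems~\ref{thm:DistDLAElements} and~\ref{thm:CoolDLAElements} are available.

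\emph{Step 1 (a coupling element into layer $j$).} Let $\mathbf{a}'$ be the truncation of $\mathbf{a}$. Using $X_{\mathcal{C}^v_{j-1,\mathbf{a}'}}$ (Theorem~\ref{thm:CoolDLAElements}) and $X_{\widehat v,>j-1}$ (Theorem~\ref{thm:DistDLAElements}), we form the element $\mathcal{Z}_{\mathbf{a}'}$ exactly as in \eqref{eq:XLevelnPrep}. It has the form
\[
i\sum Z_wZ_{w'},
\]
where the sum runs over edges with $w\in\mathcal{C}^v_{j-1,\mathbf{a}'}$ and $w'$ in a layer of distance $\ge j$. For $j=1$ we use $\mathcal{Z}_{\widehat v,1}$ instead. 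Next, set
\[
\mathcal{Z}_{\mathcal{S}_{\mathbf{a}}}:=-\tfrac14\operatorname{ad}^2_{i\sum_{w\in\mathcal{S}_{\mathbf{a}}}X_w}(\mathcal{Z}_{\mathbf{a}'}).
\]
Each edge term of $\mathcal{Z}_{\mathbf{a}'}$ has at most one endpoint in $\mathcal{S}_{\mathbf{a}}\subseteq\mathcal{N}_{v,j}$, because its other endpoint lies in layer $j-1$. Hence $\operatorname{ad}^2$ multiplies terms with one endpoint in $\mathcal{S}_{\mathbf{a}}$ by $-4$ and kills the rest, with no $YY$ terms appearing. It follows that $\mathcal{Z}_{\mathcal{S}_{\mathbf{a}}}\in\mathfrak{g}^v_{\Gamma,\mathrm{std}}$ equals $i\sum Z_uZ_w$ over edges $u\in\mathcal{C}^v_{j-1,\mathbf{a}'}$, $w\in\mathcal{S}_{\mathbf{a}}$.

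\emph{Step 2 (filtering).} We apply $\mathcal{O}_m(\operatorname{ad}^2_{\mathcal{Z}_{\mathcal{S}_{\mathbf{a}}}})$ to $i\sum_{w\in\mathcal{S}_{\mathbf{b}}}X_w$. The computation of Proposition~\ref{propRelns}(2) and \eqref{eq:AdZonStandardBasis} carries over verbatim: on a basis vector, the $X_w$ term gets multiplied by $\widehat o_m(\operatorname{Bal}^2)$, where the balance is taken over the neighbours of $w$ inside $\mathcal{C}^v_{j-1,\mathbf{a}'}$ via the edges of $\mathcal{Z}_{\mathcal{S}_{\mathbf{a}}}$. Two cases arise.
\begin{itemize}
\item For $w\in\mathcal{S}_{\mathbf{a}}\cap\mathcal{S}_{\mathbf{b}}\subseteq\mathcal{C}^v_{j,\mathbf{a}}$, hypothesis (ii) gives an odd number of such neighbours. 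The balance is therefore odd and the factor equals $1$.
\item For $w\in\mathcal{S}_{\mathbf{b}}\setminus\mathcal{S}_{\mathbf{a}}$, no term of $\mathcal{Z}_{\mathcal{S}_{\mathbf{a}}}$ contains $Z_w$. Then $X_w$ commutes with $\mathcal{Z}_{\mathcal{S}_{\mathbf{a}}}$, and since $\widehat o_m$ has no constant term, the contribution is $0$.
\end{itemize}
Hence the filtered element is exactly $i\sum_{w\in\mathcal{S}_{\mathbf{a}}\cap\mathcal{S}_{\mathbf{b}}}X_w$, which proves \eqref{eq:IntersectionX}.

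The main obstacle is Step 1. We must check that $\mathcal{Z}_{\mathbf{a}'}$ really consists only of $ZZ$ terms across layers $j-1\to\ge j$, with no leftover intra-layer $YY$ pieces, so that the parity count in Step 2 is governed precisely by hypothesis (ii). The choice of $m$ also needs care: it must be at least the maximal relevant degree, so that $\widehat o_m$ separates odd balances from even ones. If the layer-$(j-1)$ construction turns out to be problematic, the fallback is to use the full layer element $\mathcal{Z}_{j-1}$ from \eqref{eq:XLevelnPreparation} together with hypothesis (i) in place of (ii).
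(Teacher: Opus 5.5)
Your proof is correct. It uses the same basic device as the paper: the parity filter $\mathcal{O}_m(\operatorname{ad}^2_{(\cdot)})$ applied to $i\sum_{w\in\mathcal{S}_{\mathbf b}}X_w$ through a $ZZ$ coupling to $\mathcal{C}^v_{j-1,\mathbf a'}$. It differs from the paper's proof in several ways that matter.

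\emph{The case split.} The paper disposes of $j=1$ as vacuous and then splits on the truncations. If $\mathbf a'\neq\mathbf b'$, it filters directly through $-\tfrac14\operatorname{ad}^2_{X_{\mathcal{C}^v_{j-1,\mathbf a'}}}(\mathcal{Z}_{\widehat v,2})$. If $\mathbf a'=\mathbf b'$, it uses the degree-parity filter $\mathcal{O}_m(\operatorname{ad}^2_{\mathcal{Z}_{\widehat v,2}})$. You need no case split. The paper's second case is in fact vacuous: the last entry of $\mathbf a$ is the parity of $\deg(w)$ itself, so $\mathcal{C}^v_{j,\mathbf a}\cap\mathcal{C}^v_{j,\mathbf b}=\varnothing$ when $a_j\neq b_j$.

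\emph{Your restriction step.} The step $-\tfrac14\operatorname{ad}^2_{i\sum_{\mathcal{S}_{\mathbf a}}X}$ is where the hypothesis on $\mathcal{S}_{\mathbf a}$ enters. The paper's first case never uses that hypothesis. Even granting a clean action of $\operatorname{ad}^2$, its filter keeps every $w\in\mathcal{S}_{\mathbf b}$ with a neighbour in $\mathcal{C}^v_{j-1,\mathbf a'}$. Since $a_j=b_j$ is forced by $\mathcal{S}_{\mathbf a}\cap\mathcal{S}_{\mathbf b}\neq\varnothing$, it therefore returns the sum over $\mathcal{S}_{\mathbf b}\cap\mathcal{C}^v_{j,\mathbf a}$. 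That is the claimed sum only when $\mathcal{S}_{\mathbf b}\cap\mathcal{C}^v_{j,\mathbf a}\subseteq\mathcal{S}_{\mathbf a}$, for instance when $\mathcal{S}_{\mathbf a}$ is the whole class. The iterated intersections in the proof of Theorem~\ref{thm:ParitySeparationImpliesLocalX} need proper subsets, so your version covers the case actually used.

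\emph{Diagonality of the coupling.} Your coupling is purely $Z$-diagonal, so $\operatorname{ad}^2$ really multiplies $iX_w$ by $-4S_w^2$, where $S_w$ is the sum of $Z_u$ over the coupling partners $u$ of $w$. The paper's element keeps $Y_uY_{u'}$ terms on edges inside $\mathcal{C}^v_{j-1,\mathbf a'}$. These do not commute with the factors $Z_u$ produced by the first commutator, so its $\operatorname{ad}^2$ computation does not go through as written.

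\emph{Hypotheses.} Both proofs tacitly require (i)--(ii), which the statement of the lemma omits.

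\emph{Your fallback.} It is also sound, provided you keep the restriction step. With $\mathcal{Z}_{j-1}$ as in \eqref{eq:XLevelnPreparation}, the element $-\tfrac14\operatorname{ad}^2_{i\sum_{\mathcal{S}_{\mathbf a}}X}(\mathcal{Z}_{j-1})$ equals $i\sum Z_uZ_w$ over edges from $\mathcal{N}_{v,j-1}$ to $\mathcal{S}_{\mathbf a}$. The intra-layer $YY$ terms are killed because $\mathcal{S}_{\mathbf a}\subseteq\mathcal{N}_{v,j}$, and (i) supplies the odd count. This shows that the intersection property holds for any two subsets of $\mathcal{N}_{v,j}$ whose $X$-sums lie in $\mathfrak{g}^v_{\Gamma,\mathrm{std}}$, using (i) alone.
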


\begin{proof}
We prove the statement in the case 
$\mathcal{C}=\mathcal{C}^v_{j,\mathbf{a}}$ and 
$\mathcal{C}'=\mathcal{C}^v_{j,\mathbf{b}}$; 
the argument for $\mathcal{C}^v_{j,\ell,\mathbf{a}}$ and 
$\mathcal{C}^v_{j,\ell,\mathbf{b}}$ is analogous. For $j=1$, the sequences $\mathbf{a}$ and $\mathbf{b}$ are single binary digits. Since $\mathbf{a}\neq\mathbf{b}$, the sets $\mathcal{C}^v_{1,\mathbf{a}}$ and $\mathcal{C}^v_{1,\mathbf{b}}$ are disjoint, hence
\[
\mathcal{S}_{\mathbf{a}}\cap\mathcal{S}_{\mathbf{b}}=\varnothing,
\]
and the claim is vacuous.

Assume now that $j\ge 2$. We distinguish two cases.

\medskip
\noindent
\textbf{Case 1.} The truncated sequences
\[
\mathbf{a}'=(a_1,\dots,a_{j-1})
\quad\text{and}\quad
\mathbf{b}'=(b_1,\dots,b_{j-1})
\]
are distinct. Define
\[
\mathcal{Z}_{\mathbf{a}'}
:=
-0.25\,\operatorname{ad}^{2}_{X_{\mathcal{C}^v_{j,\mathbf{a}'}}}(\mathcal{Z}_{\widehat{v},2}).
\]
Then the element
\(
\sum\limits_{w\in \mathcal{S}_{\mathbf{a}}\cap\mathcal{S}_{\mathbf{b}}}X_w
\)
can be obtained via the identity

\begin{equation}
\label{eq:IntersectionCase1}
\sum\limits_{w\in \mathcal{S}_{\mathbf{a}}\cap\mathcal{S}_{\mathbf{b}}}X_w
=\mathcal{O}_m
\bigl(\operatorname{ad}^{2}_{\mathcal{Z}_{\mathbf{a}'}}\bigr)
\left(
\sum\limits_{w\in \mathcal{S}_{\mathbf{b}}}X_w
\right),
\end{equation}
and, therefore, lies in $\mathfrak{g}^v_{\Gamma,\mathrm{std}}$.

\medskip
\noindent
\textbf{Case 2.} The truncated sequences coincide:
\[
(a_1,\dots,a_{j-1})=(b_1,\dots,b_{j-1}),
\]
but $a_j\not\equiv b_j \pmod{2}$. Without loss of generality, assume $a_j\equiv 1\pmod{2}$. Then we compute
\begin{equation}
\label{eq:IntersectionCase2}
\begin{aligned}
&
\mathcal{O}_m\bigl(\operatorname{ad}^{2}_{\mathcal{Z}_{\widehat{v},2}}\bigr)
\left(
\sum\limits_{w\in \mathcal{S}_{\mathbf{a}}}X_w
+
\sum\limits_{w\in \mathcal{S}_{\mathbf{b}}}X_w
\right)
-
\sum\limits_{w\in \mathcal{S}_{\mathbf{b}}}X_w
\\
&\qquad
=
\sum\limits_{w\in \mathcal{S}_{\mathbf{b}}}X_w
+
2\sum\limits_{w\in \mathcal{S}_{\mathbf{a}}\cap\mathcal{S}_{\mathbf{b}}}X_w
-
\sum\limits_{w\in \mathcal{S}_{\mathbf{b}}}X_w
\\
&\qquad
=
\sum\limits_{w\in \mathcal{S}_{\mathbf{a}}\cap\mathcal{S}_{\mathbf{b}}}X_w,
\end{aligned}
\end{equation}
which again belongs to $\mathfrak{g}^v_{\Gamma,\mathrm{std}}$.
\end{proof}

\begin{proof}[Proof of Theorem~\ref{thm:ParitySeparationImpliesLocalX}]
Consider the distance shell $\mathcal{N}_{v,j}$. By Theorem~\ref{thm:CoolDLAElements}, for every binary sequence $\mathbf{a}\in\mathbb{Z}_2^j$, the Lie algebra $\mathfrak{g}^{\,v}_{\Gamma,\mathrm{std}}$ contains the element
\[
X_{\mathcal{C}^v_{j,\mathbf{a}}}
=
i\sum\limits_{w\in \mathcal{C}^v_{j,\mathbf{a}}} X_w.
\]

Let $\ell := |\mathcal{N}_{v,j}|$, and enumerate the vertices of $\mathcal{N}_{v,j}$ as $w_1,\dots,w_{\ell}$. If $\mathbf{a}\in\mathbb{Z}_2^j$ is such that $w'\in \mathcal{C}^v_{j,\mathbf{a}}$ while $w''\notin \mathcal{C}^v_{j,\mathbf{a}}$, then the element
\[
X_{\widehat{v},j}-X_{\mathcal{C}^v_{j,\mathbf{a}}}
\]
belongs to $\mathfrak{g}^{\,v}_{\Gamma,\mathrm{std}}$ (by Theorems~\ref{thm:DistDLAElements} and~\ref{thm:CoolDLAElements}) and contains $X_{w''}$ but not $X_{w'}$. Thus, for each parity class $\mathbf{a}$, we have access to uniform $X$-operators supported both on $\mathcal{C}^v_{j,\mathbf{a}}$ and on its complement within $\mathcal{N}_{v,j}$.

For $t\in\{2,\dots,\ell\}$, let $\mathcal{C}^v_{j,t}\subseteq \mathcal{N}_{v,j}$ be the subset such that $\mathcal{C}^v_{j,t}$ contains $w_1$ but not $w_t$. Such subsets can be chosen either directly from the parity classes appearing in the statement of the theorem or from their complements in $\mathcal{N}_{v,j}$.

Applying Lemma~\ref{lem:IntersectionFromParity} iteratively to the intersections
\[
\mathcal{C}^v_{j,2}\cap\mathcal{C}^v_{j,3},
\quad
\mathcal{C}^v_{j,2}\cap\mathcal{C}^v_{j,3}\cap\mathcal{C}^v_{j,4},
\quad
\dots,
\]
and observing that
\[
\mathcal{C}^v_{j,2}\cap\dots\cap\mathcal{C}^v_{j,\ell}=\{w_1\},
\]
we conclude that the element
\(
iX_{w_1}
\)
belongs to $\mathfrak{g}^{\,v}_{\Gamma,\mathrm{std}}$. The same argument applies to each $w_t\in\mathcal{N}_{v,j}$, yielding the containment in \eqref{eq:LocalXInReducedDLA}.
\end{proof}

Theorem \ref{thm:ParitySeparationImpliesLocalX} is verified using an argument analogous to those in the proofs of
Lemma~\ref{lem:IntersectionFromParity} and
Theorem~\ref{thm:ParitySeparationImpliesLocalX}.

\subsection{Verification of the DLA Containment Condition in~\eqref{eq:ERContainment} for Connected Acyclic Graphs}

We conclude this segment with the verification of the DLA containment condition in~\eqref{eq:ERContainment} for connected acyclic graphs.

\begin{proof}[Proof of Theorem~\ref{thm:ReducedDLAsForTrees}]
Let
\[
M := \max\{ j \ge 0 \mid \mathcal{N}_{v,j} \neq \varnothing \}
\]
be the maximal distance from the distinguished vertex $v$.
Since $\Gamma$ is acyclic, every vertex in $\mathcal{N}_{v,M}$ is a leaf.

By assumption, the parity-degree sequences associated to the unique paths
connecting $v$ to distinct leaves in $\mathcal{N}_{v,M}$ are pairwise distinct.
Therefore, by Theorem~\ref{thm:CoolDLAElements} and
Theorem~\ref{thm:ParitySeparationImpliesLocalX}, we obtain
\[
iX_w \in \mathfrak{g}^{\,v}_{\Gamma,\mathrm{std}}
\qquad
\text{for all } w \in \mathcal{N}_{v,M}.
\]

Each vertex $u \in \mathcal{N}_{v,M-1}$ is adjacent to at least one leaf
$w \in \mathcal{N}_{v,M}$.
Applying Remark~\ref{rmk:ExtensionOfResults}, we conclude that the presence of
$iX_w$ in the reduced dynamical Lie algebra implies
\[
iX_u \in \mathfrak{g}^{\,v}_{\Gamma,\mathrm{std}}
\qquad
\text{for all } u \in \mathcal{N}_{v,M-1}.
\]
Iterating this argument along the unique paths connecting the leaves in
$\mathcal{N}_{v,M}$ to the root $v$, we deduce that
\[
iX_u \in \mathfrak{g}^{\,v}_{\Gamma,\mathrm{std}}
\]
for every vertex $u$ lying on any such path.

Next, consider the set of leaves at the next maximal distance
\[
M_2 := \max\{ j < M \mid \mathcal{N}_{v,j} \text{ contains a leaf} \}.
\]
For each leaf $w \in \mathcal{N}_{v,M_2}$, the hypothesis of the theorem again guarantees that its parity-degree profile along the path to $v$ is distinct from those of other leaves in the same layer.
Thus, Theorem~\ref{thm:ParitySeparationImpliesLocalX} implies
\[
iX_w \in \mathfrak{g}^{\,v}_{\Gamma,\mathrm{std}}
\qquad
\text{for all leaf vertices } w \in \mathcal{N}_{v,M_2}.
\]
As before, Remark~\ref{rmk:ExtensionOfResults} allows us to propagate this
containment upward along the unique paths from these leaves to $v$.

Proceeding inductively over decreasing distances from $v$, and exhausting all
leaf layers of $\Gamma$, we conclude that
\[
iX_u \in \mathfrak{g}^{\,v}_{\Gamma,\mathrm{std}}
\qquad
\text{for every vertex } u \in V(\Gamma).
\]
This establishes the containment of algebras in~\eqref{eq:ERContainment} and
completes the proof.
\end{proof}

\begin{proof}[Proof of Theorem~\ref{thm:FreeReducedFullStructure}]
Under the assumptions on $\Gamma_v$, Theorem~1 of~\cite{KLFCCZ} implies that the associated free DLA on the reduced graph,
\(
\mathfrak{g}_{\Gamma_v,\mathrm{free}},
\)
is isomorphic to
\[
\mathfrak{su}(2^{n-2}) \oplus \mathfrak{su}(2^{n-2}),
\]
acting irreducibly on the two parity sectors
\(
W_{\mathrm{even}}, \; W_{\mathrm{odd}} \subset W_v
\)
introduced in~\eqref{eq:EvenOddSpaces}.

Number the qubits corresponding to the vertices of $\Gamma$ so that the distinguished vertex $v$ corresponds to the $n^{th}$ (last) qubit. With this convention, the above Lie algebra can be characterized as the centralizer in $\mathfrak{su}(W_v)=\mathfrak{su}(2^{n-1})$ of the global parity operator
\[
\tau:=X_1X_2\cdots X_{n-1}
\]
(see Remark \ref{rmk:GlobalParity}).

A convenient basis description of $\mathfrak{g}_{\Gamma_v,\mathrm{free}}$ is obtained in terms of Pauli strings. 
For a Pauli string 
\(
\mathcal{P}=\sigma_1\otimes\cdots\otimes\sigma_{n-1}
\)
with $\sigma_j\in\{I,X,Y,Z\}$, one verifies directly that
\[
\tau\,\mathcal{P}\,\tau = (-1)^{k(\mathcal{P})}\mathcal{P},
\]
where $k(\mathcal{P})$ is the total number of factors equal to $Y$ or $Z$. Hence $\mathcal{P}$ commutes with $\tau$ if and only if $k(\mathcal{P})$ is even. Therefore the reduced algebra is spanned by
\begin{equation}
    \label{eq:PathParity}
    \{\, i\mathcal{P} \mid \mathcal{P}\neq I,\tau;\; k(\mathcal{P})\equiv 0 \!\!\!\pmod 2 \,\},
\end{equation}

that is, by all traceless Pauli strings containing an even total number of $Y$ and $Z$ factors, except for identity and $\tau$ itself (see Table V in \cite{KLFCCZ}).

\medskip

By definition of $\mathfrak{g}^{\,v}_{\Gamma,\mathrm{free}}$, it contains 
\(
iZ_w
\)
for all $w\in\mathcal{N}_{v,1}$.  
Let $w'\in V(\Gamma_v)\setminus \mathcal{N}_{v,1}$ be any vertex not adjacent to $v$ in $\Gamma$, and choose $w\in\mathcal{N}_{v,1}$.  

Observe that the Pauli strings $X_wX_{w'}$ and $Y_wY_{w'}$ satisfy~\eqref{eq:PathParity}.  
Hence
\[
iX_wX_{w'},\quad iY_wY_{w'}
\;\in\;
\mathfrak{g}^{\,v}_{\Gamma,\mathrm{free}}.
\]
Consequently, the element
\[
[iX_wX_{w'},[\,iZ_w,iY_wY_{w'}\,]]
\]
is proportional to $iZ_{w'}$.   Thus we confirm that $iZ_{w'}\in \mathfrak{g}^{\,v}_{\Gamma,\mathrm{free}}$ for every $w'\in V(\Gamma_v)$.  Since, by definition, all $iX_{w'}$ also belong to $\mathfrak{g}^{\,v}_{\Gamma,\mathrm{free}}$, we obtain the full single-site algebra
\[
\mathrm{span}\{\, iX_{w'},iY_{w'},iZ_{w'} \,\}
\cong
\mathfrak{su}(2)
\]
at each vertex $w'\in V(\Gamma_v)$.

\medskip

We now prove that the enlarged algebra equals $\mathfrak{su}(2^{n-1})$.  
Let $\mathcal{Q}\notin\{I,\tau\}$ be an arbitrary Pauli string.

\smallskip
\emph{Case 1:} $k(\mathcal{Q})$ is even.  

Then $i\mathcal{Q}$ already belongs to the free DLA of the reduced subgraph by~\eqref{eq:PathParity}.

\smallskip
\emph{Case 2:} $k(\mathcal{Q})$ is odd.  

Then there exists a site $w'$ such that $\mathcal{Q}_{w'}\in\{Y,Z\}$.  
Define a Pauli string $\mathcal{Q}'$ by
\[
\mathcal{Q}'_{w}
=
\begin{cases}
X, & w = w', \\[4pt]
\mathcal{Q}_{w}, & \text{otherwise},
\end{cases}
\]
that is, $\mathcal{Q}'$ coincides with $\mathcal{Q}$ at every site except $w'$, where the factor is replaced by $X$.  
Then $k(\mathcal{Q}')$ is even, and therefore
\[
i\mathcal{Q}'
\in
\mathfrak{g}^{\,v}_{\Gamma,\mathrm{free}}.
\]

Since $\mathcal{Q}$ and $\mathcal{Q}'$ differ only at the site $w'$, and since we have the full local algebra $\mathfrak{su}(2)$ available at $w'$, suitable commutators with
\[
iX_{w'},\quad iY_{w'},\quad iZ_{w'}
\]
transform $i\mathcal{Q}'$ into $i\mathcal{Q}$ (up to a nonzero scalar multiple).  
Hence $i\mathcal{Q}$ also lies in $\mathfrak{g}^{\,v}_{\Gamma,\mathrm{free}}$.

\medskip

Thus every traceless Pauli string belongs to the free reduced DLA. Since the traceless Pauli strings form a basis of $\mathfrak{su}(2^{n-1})$, we conclude that
\(
\mathfrak{g}^{\,v}_{\Gamma,\mathrm{free}}=\mathfrak{su}(2^{n-1}).
\)
\end{proof}

Corollary~\ref{thm:ReducedDLAIsFullUnitary} follows directly from  Theorem~\ref{thm:FreeReducedFullStructure} together with the assumption that the algebra containment~\eqref{eq:ERContainment} holds. The latter ensures that the standard and free reduced dynamical Lie algebras coincide,
\[
\mathfrak{g}^v_{\Gamma,\mathrm{std}}
=
\mathfrak{g}^v_{\Gamma,\mathrm{free}}.
\]

\section{Proofs of Theorems~\ref{DecompThm} and~\ref{DecompThmReduced}}
\label{sec:ProofsHilbSpaceDecomp}

The proofs of Theorems~\ref{DecompThm} and~\ref{DecompThmReduced} follow the same general strategy. They rely on the fact that, for a connected graph $\Gamma$, the image of the Lie algebra $\mathfrak{g}_{\Gamma,\mathrm{free}}$ inside $\mathfrak{su}(W)$ generates, in its associative envelope,
all two-body Pauli operators $Z_w Z_w'$ for arbitrary pairs of vertices $w,w'\in V$. In the reduced setting, an analogous statement holds for $\mathfrak{g}^v_{\Gamma,\mathrm{free}}$, whose image generates all single-site operators $iZ_w$ in the associative envelope inside
$\mathfrak{su}(W_v)$.

These observations allow us to establish irreducibility of the relevant Hilbert spaces. We present the detailed arguments below.

\begin{proof}[Proof of Theorem \ref{DecompThm}]
    First, note that it follows from the actions of single-gate Pauli operators in the Hadamard basis: \( X(\ket{+}) = \ket{+} \), \( X(\ket{-}) = -\ket{-} \), \( Z(\ket{+}) = \ket{-} \), and \( Z(\ket{-}) = \ket{+} \) that the generators of \(\mathfrak{g}_{\Gamma, \text{free}}\) preserve the parity of the number of \(\ket{-}\) in the Hadamard basis for a standard basis vector. Consequently, the Lie algebra \(\mathfrak{g}_{\Gamma, \text{free}}\), generated by \(\{ X_i \mid i \in V \}\) and \(\{ Z_i Z_j \mid (ij) \in E \}\), preserves the subspaces \( W_{\text{even}} \) and \( W_{\text{odd}} \). 

    Next, we show that both \( W_{\text{even}} \) and \( W_{\text{odd}} \) are irreducible representations of \(\mathfrak{g}_{\Gamma, \text{free}}\). Let \( v \in W_{\text{even}} \) (or \( v \in W_{\text{odd}} \)) be a nonzero vector. Write 
    \begin{equation}
        v = \sum \alpha_h \ket{h},
        \label{Vdecomp}
    \end{equation}
    where each \(\ket{h}\) is a Hadamard basis vector with an even (or odd) number of \(\ket{-}\). We claim that \( \operatorname{span}(X_1, \dots, X_n )(v) \subseteq \operatorname{span}(\mathfrak{g}_{\Gamma, \text{free}})(v) \) contains a pure state vector in the Hadamard basis. 

    To prove this, we argue by induction on \( n \), the number of vertices in \(\Gamma\). For the base case \( n = 1 \), the statement is trivial since both \( W_{\text{even}} = \mathbb{C}\ket{+} \) and \( W_{\text{odd}} = \mathbb{C}\ket{-} \) are one-dimensional. For the inductive step, suppose the claim holds for \( n-1 \). Consider a vector \( v \in W_{\text{even}} \) (or \( W_{\text{odd}} \)). If \( v \) can be decomposed as \( v = \ket{-} \otimes v' \) or \( v = \ket{+} \otimes v' \), the inductive hypothesis applied to \( v' \) ensures the existence of a standard Hadamard basis vector \(\widehat{v}' \in \operatorname{span}(\{ X_2, \dots, X_n \})(v')\), leading to a vector \( \widetilde{v} = \ket{-} \otimes \widehat{v}' \) or \( \widetilde{v} = \ket{+} \otimes \widehat{v}' \) in \(\operatorname{span}(\{ X_1, \dots, X_n \})(v)\).

    In the second case, where \( v \) cannot be decomposed as \( v = \ket{-} \otimes v' \) or \( v = \ket{+} \otimes v' \), there exists a basis vector \(\ket{h}\) in the decomposition \eqref{Vdecomp} with \( h_1 = \ket{+} \). In this scenario, the vector \( v + X_1(v) \) is nonzero and decomposable as \( \ket{+} \otimes v' \), reducing the problem to the first case. This establishes that \(\operatorname{span}(\{ X_1, \dots, X_n \})(v)\) contains a pure Hadamard basis vector.

    Finally, to prove irreducibility, it suffices to show that \(\mathfrak{g}_{\Gamma, \text{free}}(h) \simeq W_{\text{even}}\) (or \(\mathfrak{g}_{\Gamma, \text{free}}(h) \simeq W_{\text{odd}}\)) for a Hadamard basis vector \( h \).
    
     Since $\Gamma$ is connected, for any pair of vertices $w,w'$ there exists a path
    \[
       w = i_0,- i_1- \dots- i_k = w'
    \]
    connecting $w$ to $w'$. As $Z_{i_\ell}Z_{i_{\ell+1}}$ lies in the image of $\mathfrak{g}_{\Gamma,\mathrm{free}}$ for each edge $(i_\ell,i_{\ell+1})\in E$, their product belongs to the associative algebra generated by this image and satisfies
    \begin{equation}
        Z_{i_0}Z_{i_1}Z_{i_1} Z_{i_2} \cdots Z_{i_{k-1}} Z_{i_k} \;=\; Z_wZ_{w'}.
        \label{Zpath}
    \end{equation}

    Finally, the set of operators  
\[
\{ Z_w Z_{w'} \mid (i, j) \in V \times V \}
\]  
acts transitively on the Hadamard basis of \( W_{\text{even}} \) (or equivalently, \( W_{\text{odd}} \)).  
\end{proof}

\begin{proof}[Proof of Theorem \ref{DecompThmReduced}]
    Let $w = \sum \alpha_h \ket{h}$ be a nonzero vector in $W_v$, expressed in the Hadamard basis.  
    By the same argument as in the proof of Theorem~\ref{DecompThm}, the subspace 
    \[
       \operatorname{span}\{ X_i \mid i \neq v \}(w) \;\subseteq\; \operatorname{span}\bigl(\widetilde{\mathfrak{g}}_{\Gamma,\mathrm{free}}\bigr)(w)
    \] 
    contains a pure Hadamard basis vector. It remains to show that the cyclic subspace generated by any single Hadamard basis vector  under the action of \(\mathfrak{g}^{\,v}_{\Gamma,\mathrm{free}}\) contains all the other basis vectors. Equivalently, starting from one such basis vector, we must be able to produce every other basis vector using operators in the associative algebra generated by the elements of \(\mathfrak{g}^{\,v}_{\Gamma,\mathrm{free}}\). 

    To achieve this, it suffices to realize operators that swap $\ket{+}$ and $\ket{-}$ at any chosen qubit, while leaving the other qubits unchanged. We will show that such operators are contained in the associative algebra generated by the elements of $\mathfrak{g}^v_{\Gamma,\mathrm{free}}$.  

    Since $\Gamma$ is connected, for any vertex $w$ of $\Gamma_v$ there exists a path
    \[
       v = i_0,\, i_1,\, \dots,\, i_k = w
    \]
    connecting $v$ to $w$. This allows to obtain the element
    \begin{equation}
        Z_{i_1} Z_{i_2} \cdots Z_{i_{k-1}} Z_{i_k} \;=\; Z_j,
        \label{ZpathReduced}
    \end{equation}
    which provides the necessary single-qubit $Z$ operator at position $w$ inside the associative algebra generated by the image of $\mathfrak{g}^v_{\Gamma,\mathrm{free}}$. 

\end{proof}

\section{Additional Numerical Results}

In this appendix, we include additional numerical results for the variance in the gradient for 11-, 13-, and 15-node asymmetric graphs. Some of these graphs (see e.g., $\Gamma_{5}^{(11)}$, $\Gamma_{7}^{(11)}$, $\Gamma_{3}^{(13)}$, $\Gamma_{5}^{(15)}$) corroborate Observation \ref{leaf_obs}.

\afterpage{\clearpage
    \begin{figure}
        \centering
        \includegraphics[width=2\paperwidth, height=0.8\paperheight, keepaspectratio]{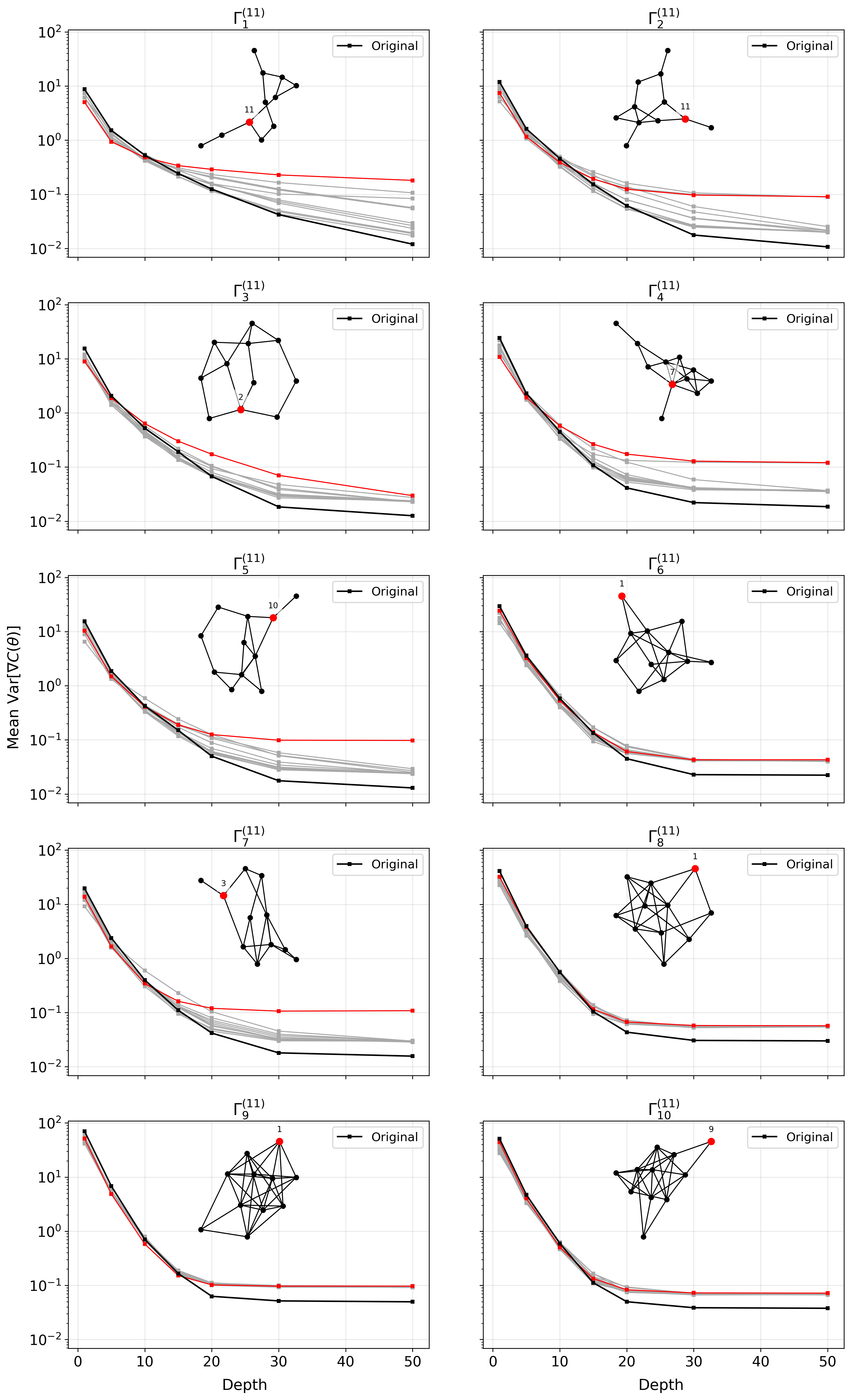}
        \caption{Mean of variance for ten 11-node asymmetric graphs ($\Gamma^{(11)}$). The red curve shows the vertex reduction that results in the largest variance of the gradient at depth 50, with the corresponding vertex highlighted in red and labeled on the graph inset.}
        \label{fig:11_node_variances}
    \end{figure}
    \clearpage
}

\afterpage{\clearpage
    \begin{figure}
        \centering
        \includegraphics[width=2\paperwidth, height=0.8\paperheight, keepaspectratio]{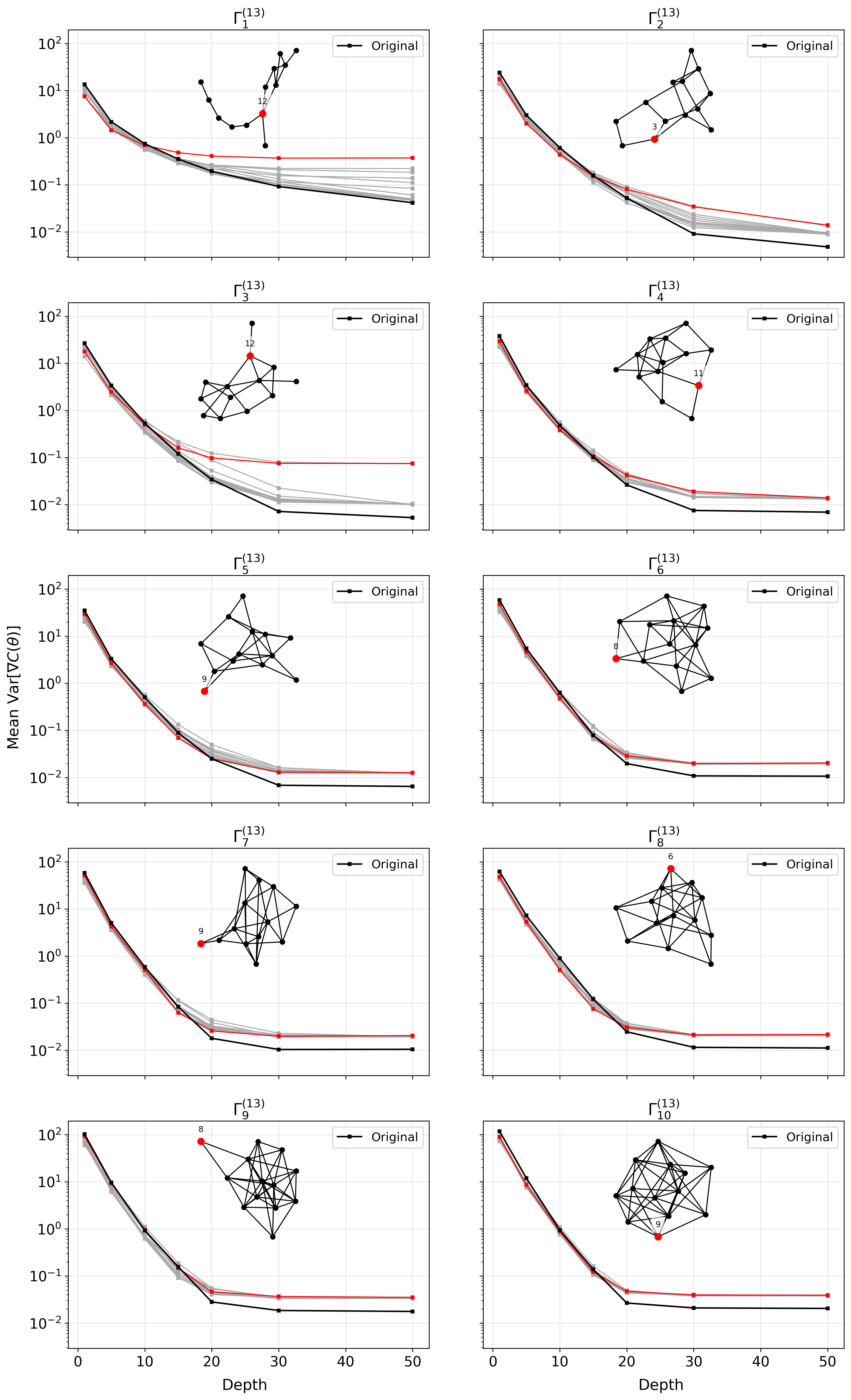}
        \caption{Mean of variance for ten 13-node asymmetric graphs ($\Gamma^{(13)}$). The red curve shows the vertex reduction that results in the largest variance of the gradient at depth 50, with the corresponding vertex highlighted in red and labeled on the graph inset.}
        \label{fig:13_node_variances}
    \end{figure}
    \clearpage
}

\afterpage{\clearpage
    \begin{figure}
        \centering
        \includegraphics[width=2\paperwidth, height=0.8\paperheight, keepaspectratio]{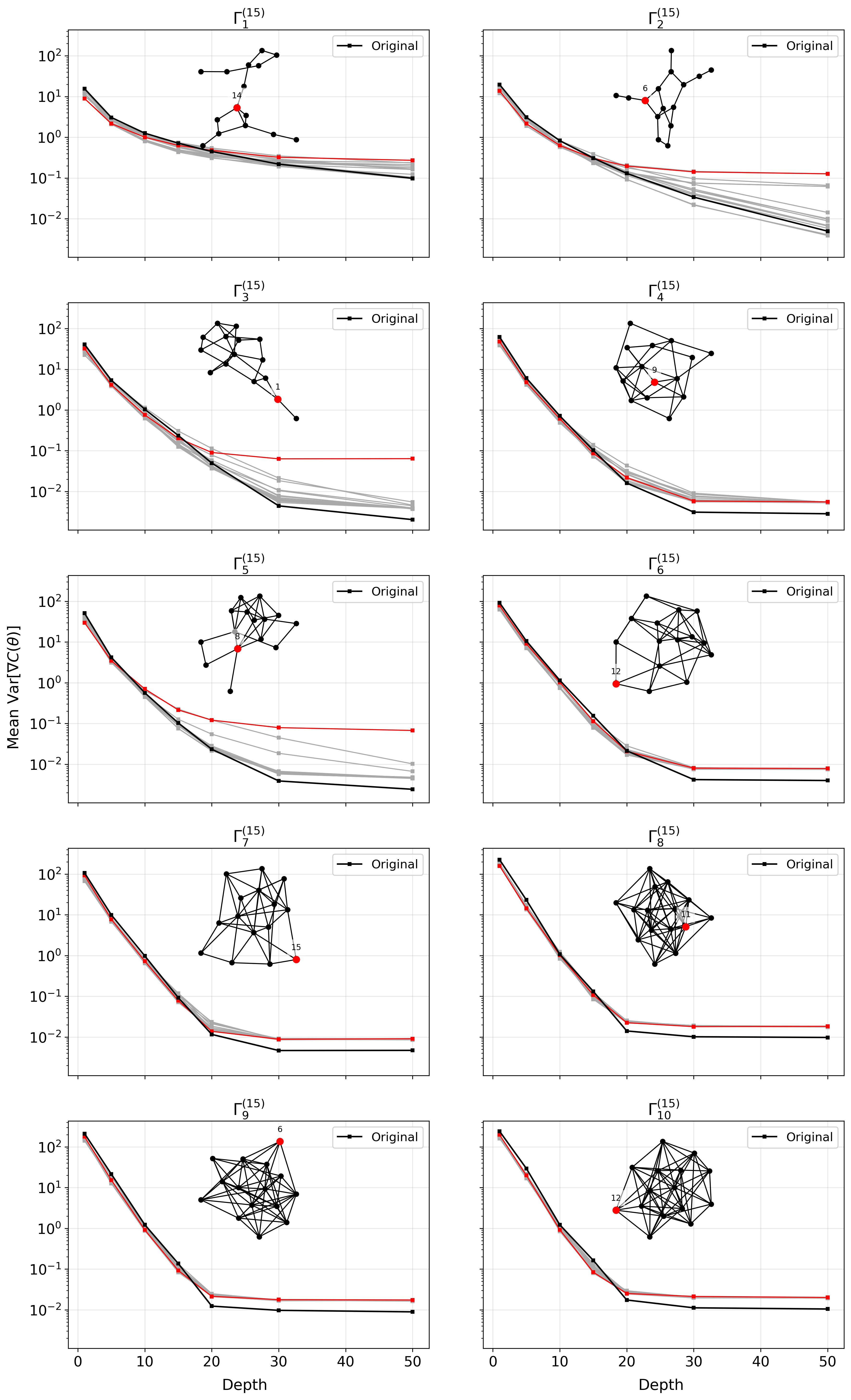}
        \caption{Mean of variance for ten 15-node asymmetric graphs ($\Gamma^{(15)}$). The red curve shows the vertex reduction that results in the largest variance of the gradient at depth 50, with the corresponding vertex highlighted in red and labeled on the graph inset.}
        \label{fig:15_node_variances}
    \end{figure}
    \clearpage
}

\end{document}